\theoremstyle{plain}
	\newtheorem{thm}{Theorem}[section]
	\newtheorem{cor}[thm]{Corollary}
	\newtheorem{lem}[thm]{Lemma}
	\newtheorem{prop}[thm]{Proposition}
\theoremstyle{definition}
	\newtheorem{dfn}[thm]{Definition}
	\newtheorem{dfns}[thm]{Definitions}
\theoremstyle{remark}
	\newtheorem{ex}[thm]{Examples}
	\newtheorem{rem}[thm]{Remark}
	\newtheorem{rems}[thm]{Remarks}
\numberwithin{equation}{section}
\numberwithin{figure}{section}
\newcommand{\C}{\mathbb{C}}
\newcommand{\N}{\mathbb{N}}
\newcommand{\Q}{\mathbb{Q}}
\newcommand{\R}{\mathbb{R}}
\newcommand{\T}{\mathbb{T}}
\newcommand{\Z}{\mathbb{Z}}
\newcommand{\dx}{\dmesure\!}
\newcommand{\deron}[2]{\frac{\partial #1}{\partial #2}}
\newcommand{\esp}[2][]{\mathbb{E}_{#1}\!\left[ #2 \right]}
\newcommand{\mvert}{\mathrel{}\middle|\mathrel{}}
\newcommand{\norm}[1]{\left\lvert #1 \right\rvert}
\newcommand{\Norm}[1]{\left\lVert #1 \right\rVert}
\newcommand{\prsc}[2]{\left\langle #1\,, #2 \right\rangle}
\newcommand{\var}[1]{\Var\!\left( #1 \right)}
\renewcommand{\L}{\mathcal{L}}
\renewcommand{\P}{\mathbb{P}}
\renewcommand{\S}{\mathbb{S}}
\renewcommand{\bar}{\overline}
\renewcommand{\epsilon}{\varepsilon}
\renewcommand{\geq}{\geqslant}
\renewcommand{\leq}{\leqslant}
\renewcommand{\tilde}{\widetilde}
\DeclareMathOperator{\card}{card}
\DeclareMathOperator{\dmesure}{d}
\DeclareMathOperator{\Id}{Id}
\DeclareMathOperator{\Sp}{Sp}
\DeclareMathOperator{\Var}{Var}
\author{Thomas Letendre\,\thanks{Thomas Letendre, Université Paris-Saclay, CNRS, Institut de Mathématiques d’Orsay, 91405 Orsay, France; e-mail: \url{letendre@math.cnrs.fr}. Thomas Letendre is supported by the French National Research Agency through the ANR grants SpInQS (ANR-17-CE40-0011) and UniRaNDom (ANR-17-CE40-0008).} \and Henrik Ueberschär\,\thanks{Henrik Ueberschär, Sorbonne Université, Institut de Mathématiques de Jussieu - Paris Rive Gauche, {F-75005} Paris, France; e-mail: \url{henrik.ueberschar@imj-prg.fr}. Henrik Ueberschär is supported by the French National Research Agency through the ANR grant SpInQS (ANR-17-CE40-0011).}}
\date{\today}
\title{Random moments for the new eigenfunctions of point scatterers on rectangular flat tori}
\begin{document}

\maketitle

\begin{abstract}
We define a random model for the moments of the new eigenfunctions of a point scatterer on a $2$-dimensional rectangular flat torus. In the deterministic setting, \u{S}eba conjectured these moments to be asymptotically Gaussian, in the semi-classical limit. This conjecture was disproved by Kurlberg--Ueberschär on Diophantine tori. In our model, we describe the accumulation points in distribution of the randomized moments, in the semi-classical limit. We prove that asymptotic Gaussianity holds if and only if some function, modeling the multiplicities of the Laplace eigenfunctions, diverges to $+\infty$.
\end{abstract}

\begin{description}
\item{\textbf{Keywords:}} Point scatterer, Poisson point process, Quantum chaos, Random wave model.
\item{\textbf{Mathematics Subject Classification 2010:}} 35P20, 58J50, 60G55, 81Q50.
\end{description}


\tableofcontents


\section{Introduction}
\label{sec introduction}


\subsection{Background}
\label{subsec background}

Let $(M,g)$ be a closed Riemannian manifold. A classical dynamical system on $M$ is given by a Hamiltonian vector field on the associated phase space $T^*M$. In this classical description, one tries to understand the flow of the vector field under study. Given such a classical system, it is possible to define a corresponding quantum dynamical system, which is an unbounded self-adjoint operator on $L^2(M)$, the space of square integrable functions on $M$. In the quantum description, one is interested in the spectrum of this operator and the properties of its eigenfunctions. A typical example of quantum system is the Laplace--Beltrami operator $\Delta$ on $M$, which is the quantum version of the geodesic vector field on $T^*M$.

Quantum chaos is concerned with the asymptotic behavior, in the semi-classical limit, of the eigenfunctions of quantum systems whose classical counterpart is chaotic. The term semi-classical limit means that we consider eigenfunctions in the regime where the associated eigenvalues tend to infinity. In this regime, one expects features of the classical dynamics to emerge in the properties of the eigenfunctions. An important idea in this field is Berry's Random Wave Conjecture (see~\cite{Ber1977}), which suggests that the eigenfunctions of quantum chaotic systems should behave like a random superposition of plane waves. Much effort has been made in recent years to develop a rigorous mathematical formulation for this conjecture and to prove its validity for various types of systems (see~\cite{ABLM2018} and~\cite{Ing2017}, for example). Let us mention that, after the present paper was written, Louis Gass proved that Berry's Conjecture holds almost surely for Gaussian Random Waves on~$M$, that is for random linear combinations of Laplace eigenfunctions with Gaussian coefficients. His result is valid for both band-limited and monochromatic random waves. For more details, we refer the interested reader to his beautiful paper~\cite{Gas2020}.

A weak form of Berry's Conjecture states that, for a quantum chaotic system, the value distribution of an $L^2$-normalized eigenfunction should converge to a standard Gaussian in the semi-classical limit. By this we mean that, if $\phi_\lambda$ denotes a normalized eigenfunction associated with the eigenvalue $\lambda$ and $X$ is a uniformly distributed point in $M$, then the random variable $\phi_\lambda(X)$ converges in distribution toward a standard Gaussian variable as $\lambda \to +\infty$. In the same spirit, for any $p \in \N$, the $p$-th central moment $\int_M \phi_\lambda(x)^p \dx x$ of this random variable ought to converge to the corresponding moment of the standard Gaussian distribution as $\lambda\to+\infty$. Here the integral is taken with respect to the normalized volume measure $\dx x$ induced by the Riemannian metric $g$. In this paper, we are concerned with this question of the asymptotic Gaussianity of the moments of normalized eigenfunctions for a model system: a point scatterer on a flat rectangular torus. Note that the simplest example of quantum chaotic system is the Laplacian on a surface of negative curvature. However, even for hyperbolic surfaces, the eigenfunctions of $\Delta$ are not explicit enough to compute with, and thus do not provide a good model system for this problem.

In the following, we consider the Laplacian perturbed by a Dirac mass on a flat rectangular torus. This kind of operator is usually referred to as a \emph{point scatterer}. The underlying classical system is a billiard on the torus, with an obstacle at the point where we put the Dirac mass. Its dynamics is close to integrable, since only a negligeable part of the trajectories is affected by the obstacle. However, the eigenfunctions of the point scatterer may display features typical of quantum chaotic systems, depending on the shape of the torus. For example, it is known that point scatterers on the square torus are quantum ergodic~\cite{KU2014}, while on Diophantine tori their semi-classical measures are scarred along a finite number of directions in phase space~\cite{KU2017}. The eigenfunctions of these point scatterers being reasonably explicit, they are good model systems to investigate quantum chaos.

Point scatterers were first introduced by Petr \u{S}eba~\cite{Seb1990}, who formulated several conjectures concerning the statistical properties of their spectra and the associated eigenfunctions. Since then, they have become a popular model to investigate the transition between chaotic and integrable quantum systems and have been studied in numerous papers~\cite{KU2014,KU2017,KU2019,RU2012,Yes2013,Yes2015,Yes2020}. In particular, \u{S}eba carried out a numerical investigation of the value distribution of the eigenfunctions on irrational tori, which seemed to suggest that it converged to a Gaussian in the semi-classical limit. Later, Keating, Marklof and Winn~\cite{KMW2003} argued that this value distribution ought to be non-Gaussian, by comparing this problem with a similar one for quantum star graphs. To the best of our knowledge, the only rigorous result in this direction was obtained by Kurlberg and Ueberschär. In~\cite{KU2019}, they proved that the fourth moment of the eigenfunctions of a point scatterer on a Diophantine torus can not converge to $3$, that is the fourth moment of the standard Gaussian. On the square torus, the asymptotics for general moments of the eigenfunctions for point scatterers seem difficult to obtain. This problem is already very hard in the case of Laplace eigenfunctions, where Bombieri and Bourgain managed to prove the asymptotic Gaussianity of the moments (cf.~Thm.~14, Rem.~15 and Thm.~17 in~\cite{BB2015}). A generalization of their methods to point scatterers has so far proved unsuccessful.

The purpose of the present article is to introduce a probabilistic model for the moments of the eigenfunctions of a point scatterer on a rectangular flat torus, which allows to compute the semi-classical limit of these moments. Our model and the results we obtain are described in Sect.~\ref{subsec random model and results} below. One of the key ideas in the definition of this random model is to replace the sequence of distinct Laplace eigenvalues on the ambient torus by the values of a Poisson point process. This idea originates in the paper~\cite{BT1977} by Berry and Tabor, in which they conjectured that these eigenvalues should behave like a Poisson point process. More precisely, they studied the empirical distribution of the spacings between consecutive Laplace eigenvalues, seen as a statistical series. They argued that it should converge weakly toward an exponential distribution, which is what we expect for a Poissonian sequence. See the survey~\cite{Mar2001} for more details on this Berry--Tabor Conjecture.


\subsection{Random model and main results}
\label{subsec random model and results}

Let us now describe the random model we are interested in and state our main results. We consider a point scatterer on a $2$-dimensional rectangular flat torus of unit area $\T_\alpha = \R^2 / \left(\alpha \Z \oplus \alpha^{-1} \Z\right)$, where $\alpha >0$. A point scatterer is a self-adjoint unbounded operator on $L^2(\T_\alpha)$ that can be intuitively thought of as ``$\Delta+\delta_0$'', where $\Delta$ is the usual Laplacian and $\delta_0$ is a Dirac potential, say at $0 \in \T_\alpha$. More details about the ambient space we consider and the definition of this operator are given in Sect.~\ref{subsec point scatterers on flat tori} below.

Non-zero eigenvalues of $\Delta$ remain eigenvalues of the point scatterer. In addition to these ``old'' eigenvalues, the point scatterer admits a sequence of ``new'' simple real eigenvalues. Let $\tau \in \R$ denote one of these new eigenvalues. An eigenfunction of the point scatterer associated with $\tau$ is:
\begin{equation}
\label{eq def G tau}
G_\tau : x \longmapsto -\frac{1}{\tau} + \sum_{k \geq 1} \frac{1}{\lambda_k - \tau} \sum_{\xi \in \Lambda_k} e^{2i\pi \prsc{\xi}{\cdot}},
\end{equation}
where $\prsc{\cdot}{\cdot}$ is the canonical Euclidean inner product in $\R^2$, $(\lambda_k)_{k \geq 1}$ is the sequence of distinct positive eigenvalues of $\Delta$ and, for each $k \geq 1$, $\Lambda_k = \frac{\sqrt{\lambda_k}}{2\pi} \S^1 \cap (\alpha^{-1}\Z \oplus \alpha \Z) \subset \R^2$ is the set of wave vectors associated with~$\lambda_k$. Note that $\Lambda_k$ is a finite set which is invariant under reflection with respect to the coordinate axes. Hence the function $G_\tau$ is well-defined and real-valued.

Since $\T_\alpha$ has unit area, $(\T_\alpha,\dx x)$ is a probability space, where $\dx x$ stands for the Lebesgue measure inherited from $\R^2$. We denote by $X$ a uniform random variable in $\T_\alpha$ (i.e.~the distribution of $X$ equals $\dx x$). We are interested in the central moments of the random variable $G_\tau(X)$. Its expectation is $\int_{\T_\alpha} G_\tau(x) \dx x = -\frac{1}{\tau}$. Denoting by $f_\tau = G_\tau +\frac{1}{\tau}$ and by
\begin{equation}
\label{eq moments G tau}
M^p_\tau = \int_{\T_\alpha} f_\tau(x)^p \dx x,
\end{equation}
the $p$-th normalized central moment of $G_\tau(X)$ equals $\left(M^2_\tau\right)^{-\frac{p}{2}}M^p_\tau$. It was conjectured by \u{S}eba~\cite{Seb1990} that, for every integer $p\geq 1$, this quantity converges as $\tau \to +\infty$ to $\mu_p$, the $p$-th moment of a standard Gaussian variable. In the following, we refer to $M^p_\tau$ as the $p$-th moment of~$f_\tau$. In this paper, we do not tackle this problem of computing the asymptotics of the deterministic normalized moments $\left(M^2_\tau\right)^{-\frac{p}{2}} M^p_\tau$. Instead, we consider the same question for a random model.

\begin{rem}
\label{rem notation expectation vs integral}
We do not use the classical notation $\esp{f_\tau(X)^p}$ for the $p$-th moment of $f_\tau$ and use integral notations instead. The reason is that we wish to avoid any confusion between the randomness coming from the choice of the random point $X$ (that we do not consider, except in the definition of $M^p_\tau$ above) and the randomness coming from the random model introduced below.
\end{rem}

As one can see from the definition, for any $\tau$, the function $f_\tau$ only depends on the positive spectrum $(\lambda_k)_{k \geq 1}$ of the usual Laplacian on $\T_\alpha$, and the associated wave vectors sets $(\Lambda_k)_{k \geq 1}$. In view of the Berry--Tabor Conjecture~\cite{BT1977}, a natural random model for the new centered eigenfunction~$f_\tau$ would be the following.
\begin{enumerate}
\item Choose $(\lambda_k)_{k \geq 1}$ to be the values of a Poisson point process on $[0,+\infty)$.
\item For each $k \geq 1$, define $\Lambda_k \cap [0,+\infty)^2$ as a random subset of $\frac{\sqrt{\lambda_k}}{2\pi} \S^1 \cap [0,+\infty)^2$, and define $\Lambda_k$ as the closure of this set under reflection with respect to the coordinate axes. For example, one could choose the directions of the points of $\Lambda_k \cap [0,+\infty)^2$ to be globally independent, independent of $(\lambda_k)_{k \geq 1}$, and uniformly distributed in $[0,\frac{\pi}{2}]$.
\item Define $f_\tau: x \longmapsto \displaystyle\sum_{k \geq 1} \frac{1}{\lambda_k - \tau} \sum_{\xi \in \Lambda_k} e^{2i\pi \prsc{\xi}{\cdot}}$.
\end{enumerate}
Then, one could consider the normalized moments of this random $f_\tau$ and study their asymptotics as $\tau \to +\infty$. Of course, one would need to make reasonable choices for the intensity of the Poisson point process $(\lambda_k)_{k \geq 1}$ and for the multiplicities of these mock eigenvalues (i.e.~the cardinalities of the random sets $(\Lambda_k)_{k \geq 1}$).

Unfortunately, this model is a bit too naive and has at least one major flaw. In the deterministic case, the $(\Lambda_k)_{k \geq 1}$ are subsets of $\alpha^{-1} \Z \oplus \alpha \Z$, which ensures that the complex exponentials appearing in the definition of $f_\tau$ (resp.~$G_\tau$) have the right periodicity and define functions on the torus $\T_\alpha$. In the model we just described, no such condition is required of the random sets $(\Lambda_k)_{k \geq 1}$. In particular, the complex exponentials appearing in the definition of the randomized $f_\tau$ do not define functions on $\T_\alpha$. Thus, $f_\tau$ is ill-defined as a function from $\T_\alpha$ to $\R$, and it makes no sense to consider its moments.

\begin{rem}
\label{rem rectangle}
One might be tempted to save this model by working in a rectangle of $\R^2$ instead of a torus. This allows to make sense of the randomized $f_\tau$ and its moments. However, when computing $M^p_\tau$ in the deterministic case, a lot of terms vanish because of the periodicity of the complex exponentials. These cancellations do not take place for the moments of the randomized~$f_\tau$. Because of this, their computation quickly becomes impractical.
\end{rem}

The random model we consider in this paper is built using the same basic ideas as the naive model we just discussed. In order to make these ideas work, we need to randomize the problem in steps. In the remainder of this section, we describe these steps and how they lead to our main result (Thm.~\ref{thm main} below). We discuss the various assumptions of our model in Sect.~\ref{subsec discussion of the random model}.

\paragraph{Step 1: deterministic expression of the moments.}

We first consider the moments of the deterministic functions $f_\tau$. In this first step, no randomness is involved. We still denote by $(\lambda_k)_{k \geq 0}$ the increasing sequence of distinct eigenvalues of $\Delta$ on $\T_\alpha$, and by $(\Lambda_k)_{k \geq 0}$ the associated wave vectors sets. For any $\tau \in \R \setminus \Sp(\Delta)$, we define as above $f_\tau = G_\tau + \frac{1}{\tau}$, where $G_\tau$ is the function defined by Eq.~\eqref{eq def G tau}. For any $p \in \N^*$, we denote by $M^p_\tau$ the $p$-th moment of $f_\tau$, as in~Eq.~\eqref{eq moments G tau}. Note that we denote by $\N$ the set of non-negative integers and by $\N^* = \N \setminus \{0\}$, the set of positive integers. Let us introduce some additional notations.

\begin{dfn}
\label{def l1 sequence}
We denote by $\ell_0 = \left\{(a_k)_{k \geq 1} \in \N^{\N^*} \mvert \sum_{k \geq 1} a_k < +\infty \right\}$ the set of sequences of non-negative integers with finite support, indexed by $\N^*$. For any $a=(a_k)_{k \geq 1} \in \ell_0$, we denote by $a! = \prod_{k \geq 1} a_k !$ and by $\norm{a} = \sum_{k \geq 1} a_k$.
\end{dfn}

\begin{dfn}
\label{def N a}
Let $a=(a_k)_{k \geq 1} \in \ell_0$, we denote by:
\begin{equation*}
N_a = \card\left\{(\xi_{k,l})_{k \geq 1; 1 \leq l \leq a_k} \in \prod_{k \geq 1} (\Lambda_k)^{a_k} \mvert
\sum_{k \geq 1} \sum_{l=1}^{a_k} \xi_{k,l}=0 \right\}.
\end{equation*}
Here, we use the convention that $\left(\Lambda_k\right)^0 = \{0\}$ and $\prod_{k \geq 1} (\Lambda_k)^{a_k}$ is canonically identified with the finite product $\prod_{\{k \geq 1 \mid a_k \neq 0\}} (\Lambda_k)^{a_k}$. Thus, if $\mathbf{0}$ denotes the zero sequence, we have $N_\mathbf{0} = 1$.
\end{dfn}

The coefficient $N_a$ counts the number of $\norm{a}$-tuples of lattice points in $\alpha^{-1}\Z \oplus \alpha\Z$ summing up to $0$, with some additional constraints. In particular, the sequence $(N_a)_{a \in \ell_0}$ heavily depends on the number-theoretical properties of the aspect ratio $\alpha^2$. With these notations in mind, we have the following deterministic expression of the moments, which is proved in Sect.~\ref{subsec moments of the new eigenfunctions}.

\begin{prop}
\label{prop deterministic expression M p tau}
Let $(\lambda_k)_{k \geq 1}$ denote the increasing sequence of distinct positive Laplace eigenvalues on $\T_\alpha$. Let $p \in \N^*$, let $\tau \in \R \setminus \Sp(\Delta)$ and let $M^p_\tau$ be defined by Eq.~\eqref{eq moments G tau}. We have:
\begin{equation}
\label{eq prop deterministic expression M p tau}
M^p_\tau = p! \sum_{a \in \ell_0; \norm{a}=p} \frac{N_a}{a!} \prod_{k \geq 1} \left(\frac{1}{\lambda_k -\tau}\right)^{a_k}.
\end{equation}
\end{prop}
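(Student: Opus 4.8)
The plan is to expand the $p$-th power of the series defining $f_\tau$ and regroup the resulting terms according to how many wave vectors are picked from each frequency shell $\Lambda_k$. Recall that
\[
f_\tau(x) = \sum_{k \geq 1} \frac{1}{\lambda_k - \tau} \sum_{\xi \in \Lambda_k} e^{2i\pi \prsc{\xi}{x}}.
\]
First I would write $f_\tau^p$ as a $p$-fold sum over indices: choosing for each factor $j \in \{1,\dots,p\}$ a frequency index $k_j \geq 1$ and a wave vector $\xi_j \in \Lambda_{k_j}$, we get
\[
f_\tau(x)^p = \sum_{k_1,\dots,k_p \geq 1} \ \prod_{j=1}^{p} \frac{1}{\lambda_{k_j} - \tau} \sum_{\xi_1 \in \Lambda_{k_1}} \cdots \sum_{\xi_p \in \Lambda_{k_p}} e^{2i\pi \prsc{\xi_1 + \cdots + \xi_p}{x}}.
\]
The key analytic input is then the orthogonality relation $\int_{\T_\alpha} e^{2i\pi \prsc{\eta}{x}} \dx x = \mathbf{1}_{\eta = 0}$, valid for $\eta \in \alpha^{-1}\Z \oplus \alpha\Z$, which holds because each $\Lambda_k \subset \alpha^{-1}\Z \oplus \alpha\Z$ and hence every partial sum $\xi_1 + \cdots + \xi_p$ lies in that dual lattice. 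Integrating term by term — which requires a justification of absolute summability, see below — kills every term except those with $\xi_1 + \cdots + \xi_p = 0$, leaving
\[
M^p_\tau = \sum_{k_1,\dots,k_p \geq 1} \ \prod_{j=1}^{p} \frac{1}{\lambda_{k_j} - \tau} \ \card\Bigl\{ (\xi_1,\dots,\xi_p) \in \prod_{j=1}^p \Lambda_{k_j} \mvert \xi_1 + \cdots + \xi_p = 0 \Bigr\}.
\]

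Next I would reorganize this sum by the \emph{type} of the index tuple $(k_1,\dots,k_p)$: for $a = (a_k)_{k\geq 1} \in \ell_0$ with $\norm{a} = p$, the tuples $(k_1,\dots,k_p)$ in which the value $k$ appears exactly $a_k$ times form an orbit under the symmetric group $\mathfrak{S}_p$ of size $\binom{p}{(a_k)_k} = p!/a!$. For any such tuple, the product $\prod_j (\lambda_{k_j}-\tau)^{-1}$ equals $\prod_{k\geq1}(\lambda_k-\tau)^{-a_k}$, and the cardinality of solution tuples $(\xi_1,\dots,\xi_p)$ with $\sum \xi_j = 0$ is exactly $N_a$ after relabeling the $\xi_j$ as $(\xi_{k,l})_{k\geq1;\,1\leq l\leq a_k}$ — this matches Definition~\ref{def N a} verbatim, the convention $(\Lambda_k)^0 = \{0\}$ handling the indices $k$ with $a_k = 0$. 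Summing over the $p!/a!$ tuples of a given type and then over all types $a$ with $\norm{a} = p$ yields precisely
\[
M^p_\tau = p! \sum_{a \in \ell_0;\, \norm{a}=p} \frac{N_a}{a!} \prod_{k \geq 1} \left(\frac{1}{\lambda_k - \tau}\right)^{a_k},
\]
which is Eq.~\eqref{eq prop deterministic expression M p tau}. The sum over $a$ is finite because $\norm{a} = p$ forces the support of $a$ to consist of at most $p$ indices, and $N_a$ vanishes unless each $\Lambda_k$ with $a_k \neq 0$ is nonempty; even allowing all nonempty shells, for fixed $p$ the family of relevant $a$ is finite once we know $\sum_{\xi\in\Lambda_k}\xi = 0$ rules nothing out — so in fact one should note the sum is \emph{a priori} infinite but absolutely convergent, which ties back to the convergence issue.

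The main obstacle is the analytic justification of the term-by-term integration and the absolute convergence of the double series in $(k_1,\dots,k_p)$ above. Since $\tau \notin \Sp(\Delta)$, we have $|\lambda_k - \tau| \geq c > 0$ for all $k$, but $|\lambda_k - \tau| \to +\infty$, so the factors $1/|\lambda_k-\tau|$ are bounded and decay; however the inner cardinality $\card(\Lambda_k)$ grows, and one needs Weyl-type control on $\sum_k \card(\Lambda_k)/|\lambda_k-\tau|$ or rather on the relevant constrained counts to invoke Fubini/dominated convergence. I would handle this by first establishing absolute convergence of $G_\tau$ (hence $f_\tau$) in $L^2(\T_\alpha)$ — this is presumably already available from the spectral construction of the point scatterer, since $G_\tau$ is by definition an $L^2$ eigenfunction — and then bounding $\int |f_\tau|^p$ by a Hölder/Cauchy--Schwarz argument, or more directly by noting that the partial sums of $f_\tau$ converge in every $L^q$ and passing to the limit inside the integral defining $M^p_\tau$. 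Once the interchange is licensed, the combinatorial bookkeeping is the routine part of the argument.
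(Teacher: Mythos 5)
Your combinatorial bookkeeping matches the paper's argument exactly: expanding $f_\tau^p$ in wave vectors, applying the orthogonality relation $\int_{\T_\alpha} e^{2i\pi\prsc{\eta}{x}}\dx x = \mathbf{1}_{\eta=0}$, and regrouping the index tuples $(k_1,\dots,k_p)$ by their type $a \in \ell_0$ with the multinomial factor $p!/a!$ to recover $N_a$. The gap is in the analytic step you flag as the ``main obstacle'' and then do not close. Your first suggested fix, establishing absolute convergence of the series defining $G_\tau$ (hence $f_\tau$) in $L^2(\T_\alpha)$, does not work: as noted in Rems.~\ref{rems CV in Lp}, that series does \emph{not} converge absolutely in $L^p(\T_\alpha)$, even for $p=2$. (Indeed $\Norm{\phi_k/(\lambda_k-\tau)}_2 = r_k^{1/2}/\norm{\lambda_k-\tau}$, and while the $L^2$ condition $\sum_k r_k/(\lambda_k-\tau)^2 < +\infty$ holds, the sum $\sum_k r_k^{1/2}/\norm{\lambda_k-\tau}$ diverges.) So a direct appeal to Fubini for the $p$-fold sum over $(k_1,\dots,k_p)$ is not available.

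What the paper does instead is to decompose $f_\tau$ into \emph{dyadic blocks} $h_l = \sum_{\{k \mid 2^l < \lambda_k \leq 2^{l+1}\}} \phi_k/(\lambda_k - \tau)$ and prove the quantitative bound $\Norm{h_l}_p = O\big(2^{-l/(2p)}\big)$ of Eq.~\eqref{eq estimate hTS}, which gives $\sum_l \Norm{h_l}_p < +\infty$. It is this absolute summability \emph{at the level of blocks} in $L^p$ that, via H\"older, licenses the interchange $\int \prod_{j} \big(\sum_{l_j} h_{l_j}\big)\dx x = \sum_{l_1,\dots,l_p} \int \prod_{j} h_{l_j}\dx x$. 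Each block $h_l$ is a finite sum of exponentials, so expanding inside each block and then reindexing --- first by $\xi \in \L^*_\alpha \setminus \{0\}$, then by the shell index $k$, then by the type $a$ --- is a chain of manipulations of an absolutely convergent series of real numbers. Your second suggestion (passing to the limit from $\int f_n^p\,\dx x$, with $f_n$ the $n$-th partial sum, using the $L^p$-convergence of Lem.~\ref{lem CV in Lp}) can be made to work, but it only yields the formula in the weaker sense of a limit over truncations of the support of $a$; without an absolute-summability statement one cannot then freely rearrange the $a$-series, and the dyadic estimate is precisely what supplies that statement.
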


For any $p \in \N^*$ and $\tau \in \R\setminus \Sp(\Delta)$, the expression of $M^p_\tau$ only depends on the sequence $(\lambda_k)_{k \geq 1}$ of positive eigenvalues of $\Delta$, and on the wave vectors sets $(\Lambda_k)_{k \geq 1}$ through the coefficients $(N_a)_{a \in \ell_0}$. It is this expression that we randomize in the subsequent steps. In particular, we define a random model for the moments $M^p_\tau$, but these randomized moments are not defined as the moments of some random function on $\T_\alpha$.

\paragraph{Step 2: randomization of the wave vectors.} In this second step, we start to introduce some randomness by randomizing the wave vectors sets $(\Lambda_k)_{k \geq 1}$. We assume that we are given an increasing sequence $(\lambda_k)_{k \geq 1}$ of positive numbers and a sequence $(m_k)_{k \geq 1}$ of positive integers. For any $k \in \N^*$, we want to define $\Lambda_k$ as a random finite subset of $\frac{\sqrt{\lambda_k}}{2\pi}\S^1 \subset \R^2$, which is invariant by reflection with respect to the coordinate axes, and such that $\card\left(\Lambda_k \cap [0,+\infty)^2\right)=m_k$. This amounts to choosing random directions for the $m_k$ elements of $\Lambda_k \cap [0,+\infty)^2$.

Here, the sequence $(\lambda_k)_{k \geq 1}$ can be the positive spectrum of the Laplacian on some rectangular flat torus, or a realization of a Poisson point process on $[0,+\infty)$, or anything else. Similarly, the sequence $(m_k)_{k \geq 1}$ can be the one given by the deterministic wave vectors sets we want to model, or not.

\begin{dfn}
\label{def eta}
Let $\eta$ denote the product of the uniform probability measures on $[0,\frac{\pi}{2}]^{\N^* \times \N^*}$, that is the distribution of a sequence of independent uniform variables in $[0,\frac{\pi}{2}]$, indexed by $\N^* \times \N^*$.
\end{dfn}

\begin{dfn}
\label{def randomized wave vectors}
Let $(\lambda_k)_{k \geq 1}$ be an increasing sequence of positive numbers, let $(m_k)_{k \geq 1}$ be a sequence of positive integers and let $(\theta_{k,j})_{k,j \geq 1}$ be a sequence of random variables in $[0,\frac{\pi}{2}]$ whose distribution is absolutely continuous with respect to $\eta$. Then, for any $k\geq 1$, we set:
\begin{equation*}
\Lambda_k = \frac{\sqrt{\lambda_k}}{2\pi} \left\{\zeta^{(i)}(\theta_{k,j}) \mvert 1\leq j \leq m_k, 1 \leq i \leq 4 \right\},
\end{equation*}
where we denoted, for every $\theta \in [0,\frac{\pi}{2}]$:
\begin{align*}
\zeta^{(1)}(\theta) &= (\cos(\theta),\sin(\theta)) = - \zeta^{(3)}(\theta) & &\text{and} & \zeta^{(2)}(\theta) &= (-\cos(\theta),\sin(\theta))= -\zeta^{(4)}(\theta).
\end{align*}
\end{dfn}

Working with the randomized $(\Lambda_k)_{k \geq 1}$ introduced in Def.~\ref{def randomized wave vectors}, we define a random sequence $(N_a)_{a \in \ell_0}$ as in Def.~\ref{def N a}. Similarly, for any $p \in \N^*$ and $\tau \in \R \setminus \left\{\lambda_k \mvert k \geq 1\right\}$, we define the randomized moment $M^p_\tau$ by the formula appearing in Prop.~\ref{prop deterministic expression M p tau}, see Eq.~\eqref{eq prop deterministic expression M p tau}. Our choice of probability distribution allows us to derive an almost sure expression for $(N_a)_{a \in \ell_0}$, see Lem.~\ref{lem computation Na}. This expression is purely combinatorial and only depends on the sequence $(m_k)_{k \geq 1}$. In particular, there is no longer any number theory involved in the problem at this stage. From the almost sure expression of $(N_a)_{a \in \ell_0}$, we deduce an almost sure expression of the randomized moments $M^p_\tau$. In order to give a precise statement, we will need the following definitions.

\begin{dfn}
\label{def partition}
Let $p \in \N$, we denote by $\mathcal{P}(p)= \left\{ (\alpha_k)_{k \geq 1} \in \ell_0 \mvert \sum_{k \geq 1} k \alpha_k  = p\right\}$ the set of \emph{partitions} of $p$.
\end{dfn}

\begin{dfn}
\label{def P and Q}
Let $p \in \N^*$, we denote by $P_p$ and $Q_p$ the following polynomials in $p$ variables:
\begin{align*}
P_p(X_1,\dots,X_p) &= \sum_{\alpha \in \mathcal{P}(p)} (-1)^{p - \norm{\alpha}} \frac{p!}{\alpha!} \prod_{q\geq 1}\left(X_q\right)^{\alpha_q},\\
Q_p(X_1,\dots,X_p) &= \sum_{\alpha \in \mathcal{P}(p)} \frac{(-1)^{p - \norm{\alpha}}}{\norm{\alpha}} \frac{\norm{\alpha}!}{\alpha!} \prod_{q\geq 1}\left(\frac{X_q}{q!}\right)^{\alpha_q}.
\end{align*}
\end{dfn}

\begin{rem}
\label{rem P and Q}
The set $\mathcal{P}(p)$ is finite. Moreover, if $\alpha = (\alpha_q)_{q \geq 1} \in \mathcal{P}(p)$, then for all $q >p$, we have $\alpha_q=0$. Hence, we can index the products appearing in Def.~\ref{def P and Q} by $1 \leq q \leq p$ instead of $q \geq 1$. Thus, $P_p$ and $Q_p$ are indeed polynomials and their total degree is at most $p$. In fact, for all $p \geq 1$, the total degree of $P_p$ and $Q_p$ equals $p$.
\end{rem}

\begin{dfn}
\label{def Ap}
For any $p \in \N^*$, we denote by $A_p = Q_p\left(1,\frac{1}{2!},\dots,\frac{1}{p!}\right)$.
\end{dfn}

\begin{dfn}
\label{def S p lambda}
Let $(\lambda_k)_{k \geq 1}$ be an increasing sequence of positive numbers and let $(m_k)_{k \geq 1}$ be a sequence of non-negative numbers. For any $p \in \N^*$ and $\tau \in \R$, we denote by $S^p_\tau$ the \emph{spectral sum}:
\begin{equation*}
S^p_\tau = \sum_{k \geq 1} \frac{m_k}{(\lambda_k-\tau)^{2p}},
\end{equation*}
where the sum makes sense in $[0,+\infty]$ as a sum of non-negative (possibly infinite) terms.
\end{dfn}

\begin{prop}
\label{prop as expression of M p tau}
Let $(\lambda_k)_{k \geq 1}$ be an increasing sequence of positive numbers, let $(m_k)_{k \geq 1}$ be a sequence of positive integers. Let $(\Lambda_k)_{k \geq 1}$ be defined by Def.~\ref{def randomized wave vectors} and let the corresponding randomized moments $(M^p_\tau)_{p \geq 1}$ be defined by Eq.~\eqref{eq prop deterministic expression M p tau}. Then, almost surely, the following holds for all $p \in \N^*$ and $\tau \in \R \setminus \left\{ \lambda_k \mvert k \geq 1 \right\}$.
\begin{enumerate}
\item \label{point 1} We have $M^{2p-1}_\tau = 0$.
\item \label{point 2} If $S^q_\tau < +\infty$ for all $q \in \{1,\dots,p\}$, then we have:
\begin{equation}
\label{eq prop as expression of M p tau}
M^{2p}_\tau = \frac{(2p)!}{p!} P_p(2 A_1 S^1_\tau,\dots,2 A_p S^p_\tau),
\end{equation}
where the polynomial  $P_p$ and the coefficients $A_1,\dots,A_p$ are defined respectively by Def.~\ref{def P and Q} and~\ref{def Ap}. In particular, $M^{2p}_\tau < +\infty$.
\end{enumerate}
\end{prop}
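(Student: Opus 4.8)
The plan is to derive both points from the deterministic formula of Proposition~\ref{prop deterministic expression M p tau} by substituting the almost sure value of the coefficients $(N_a)_{a\in\ell_0}$ supplied by Lemma~\ref{lem computation Na}. Throughout I would work on the full-measure event of that lemma, on which the $N_a$ are deterministic. The reflection symmetry of the randomized sets $\Lambda_k$ forces $N_a$ to factor as a product over $k$ and to vanish unless every $a_k$ is even: generically in the angles $(\theta_{k,j})$, the relation $\sum_{k,l}\xi_{k,l}=0$ with $\xi_{k,l}\in\Lambda_k$ can hold only if, inside each $\Lambda_k$, every vector occurs exactly as often as its opposite. Point~\ref{point 1} is then immediate: if $\norm{a}=2p-1$ some $a_k$ is odd, so $N_a=0$ and every term of Eq.~\eqref{eq prop deterministic expression M p tau} vanishes, giving $M^{2p-1}_\tau=0$. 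Moreover, when $\norm{a}$ is even and $N_a\neq0$ every exponent $a_k$ is even, so $\prod_k(\lambda_k-\tau)^{-a_k}>0$; hence the sum defining $M^{2p}_\tau$ has only non-negative terms, makes sense in $[0,+\infty]$, and every rearrangement below is automatically legitimate.

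For point~\ref{point 2}, the idea is to read the right-hand side of Eq.~\eqref{eq prop deterministic expression M p tau} off an infinite product of power series. Writing $a_k=2b_k$ with $b\in\ell_0$, $\norm{b}=p$, a direct count — distribute the $2b_k$ selected vectors among the reflection orbits of each $\Lambda_k$, require each vector to be matched by its opposite, and simplify the resulting multinomials by Vandermonde's identity — identifies $\frac{N_{2b}}{(2b)!}=\prod_{k\geq1}[u^{b_k}]\,B(u)^{2m_k}$, where $B(u):=\sum_{w\geq0}u^w/(w!)^2$ (so that $B(u)^2=\sum_{c\geq0}\frac{(2c)!}{(c!)^4}u^c$). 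Substituting into Eq.~\eqref{eq prop deterministic expression M p tau} and collecting powers of $t$ gives, for $t\geq0$,
\begin{equation*}
1+\sum_{p\geq1}\frac{M^{2p}_\tau}{(2p)!}\,t^p=\prod_{k\geq1}B\!\left(\frac{t}{(\lambda_k-\tau)^2}\right)^{2m_k}.
\end{equation*}
The hypothesis $S^1_\tau<+\infty$ (a consequence of $S^q_\tau<+\infty$ for $q\leq p$) forces $(\lambda_k-\tau)^2\to+\infty$ since $m_k\geq1$, so $\log B\bigl(t/(\lambda_k-\tau)^2\bigr)$ is eventually $O\bigl(t/(\lambda_k-\tau)^2\bigr)$ and the product converges locally uniformly near $0$; both sides are then analytic germs at $0$ and it suffices to compare Taylor coefficients.

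Taking logarithms, writing $\log B(u)=\sum_{q\geq1}b_q u^q$ and using $\sum_k m_k(\lambda_k-\tau)^{-2q}=S^q_\tau$, the product equals $\exp\bigl(\sum_{q\geq1}2b_q S^q_\tau t^q\bigr)$. To match this with Eq.~\eqref{eq prop as expression of M p tau} I would use two elementary generating-function identities for the polynomials of Definition~\ref{def P and Q}: expanding the exponential and grouping monomials by partitions yields $\sum_{p\geq0}\frac{P_p(Y_1,\dots,Y_p)}{p!}t^p=\exp\bigl(\sum_{q\geq1}(-1)^{q-1}Y_q t^q\bigr)$ (with $P_0:=1$), and the same computation with $\log$ in place of $\exp$ yields $\sum_{p\geq1}Q_p(X_1,\dots,X_p)t^p=-\log\bigl(1-\sum_{q\geq1}(-1)^{q-1}\frac{X_q}{q!}t^q\bigr)$. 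Specializing the second identity at $X_q=1/q!$ and substituting $t\mapsto-u$ turns its right-hand side into $-\log B(u)$, whence $b_q=(-1)^{q-1}A_q$. Plugging this back and applying the first identity with $Y_q=2A_q S^q_\tau$ gives $\exp\bigl(\sum_q 2b_q S^q_\tau t^q\bigr)=\sum_p\frac{1}{p!}P_p(2A_1 S^1_\tau,\dots,2A_p S^p_\tau)t^p$; extracting the coefficient of $t^p$ is exactly Eq.~\eqref{eq prop as expression of M p tau}, and $M^{2p}_\tau$ is finite because the right-hand side is a polynomial in the finite numbers $S^1_\tau,\dots,S^p_\tau$.

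The main obstacle, I expect, is combinatorial rather than conceptual: pinning down the exact shape of the per-factor count from Lemma~\ref{lem computation Na} and pushing the multinomial/Vandermonde simplification through to the clean generating function $B(u)^2=\sum_c\frac{(2c)!}{(c!)^4}u^c$, and then keeping all factorials, signs and the substitution $a_k=2b_k$ straight in the power-series bookkeeping. Nothing delicate happens analytically: for the even moments every quantity in play is non-negative, so the interchange of the sum over $a$ with the coefficient extraction and the convergence of the infinite product come for free.
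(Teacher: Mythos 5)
Your proposal is correct and follows essentially the same route as the paper: substitute the almost sure value of $N_a$ from Lemma~\ref{lem computation Na}, recognize the product structure $\prod_k B^{2m_k}$ with $B(u)=\sum_{w\geq0}u^w/(w!)^2=I_0(2\sqrt{u})$, and apply the exponential/logarithmic generating-function identities~\eqref{eq formal series P}--\eqref{eq formal series Q} for $P_p$ and $Q_p$ together with~\eqref{eq generating function Ap}. The only difference is bookkeeping: you specialize first and work with a single auxiliary variable $t$, whereas the paper works with the formal series $F(X)$ in infinitely many variables $X_k$ and specializes $X_k=(\lambda_k-\tau)^{-1}$ at the end — these are equivalent, since $F$ contains only even powers of each $X_k$.
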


Prop.~\ref{prop as expression of M p tau} shows that, in our model, the odd moments vanish almost surely. This is consistent with the idea that the moments should be asymptotically those of a standard Gaussian. In the following, we only consider even moments. Almost surely, the moment of order $2p$ is the value of some fixed polynomial in $p$ variables, whose coefficients only depend on $p$, evaluated at $(S^1_\tau,\dots,S^p_\tau)$. Thus, $M^{2p}_\tau$ only depends on $\tau$, $(\lambda_k)_{k \geq 1}$ and $(m_k)_{k \geq 1}$, and this only through the spectral sums $(S^q_\tau)_{1 \leq q \leq p}$.

\begin{rem}
\label{rem square torus}
In the case of the square torus ($\alpha=1$), the deterministic wave vectors sets $(\Lambda_k)_{k \geq 1}$ are also invariant under $(x_1,x_2) \mapsto (x_2,x_1)$. In Sect.~\ref{subsubsec case of the square torus}, we define a variation on our random model such that the randomized $(\Lambda_k)_{k \geq 1}$ also have this additional symmetry (see Def.~\ref{def randomized wave vectors bis}). It turns out that the previous results can be adapted to this alternative model. In particular, the conclusion of Prop.~\ref{prop as expression of M p tau} remains valid.
\end{rem}

\paragraph{Step 3: randomization of the spectrum of the Laplacian.}

In this final step, we start from the expressions of the moments derived in Prop.~\ref{prop as expression of M p tau}, and we randomize the sequences $(\lambda_k)_{k \geq 1}$ and $(m_k)_{k \geq 1}$ in these expressions. Inspired by the Berry--Tabor Conjecture, we replace $(\lambda_k)_{k \geq 1}$ by the values of a Poisson point process on $[0,+\infty)$ (see Sect.~\ref{sec reminder on Poisson point processes} for a quick reminder about these processes). Since it is meant to model the spectrum of the Laplacian on some~$\T_\alpha$, we tune the intensity of this point process so that it satisfies the Weyl Law in the mean and to leading order. Let us recall this classical result.

\begin{thm}[Weyl Law]
\label{thm Weyl Law}
On $\T_\alpha$, let $\mathcal{N}(\lambda)$ denote the number of Laplace eigenvalues smaller than or equal to $\lambda$, counted with multiplicities. Then, we have $\mathcal{N}(\lambda)  = \frac{\lambda}{4\pi} +O(\sqrt{\lambda})$ as $\lambda \to +\infty$.
\end{thm}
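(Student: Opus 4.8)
The plan is to reduce the statement to a classical lattice-point counting problem and then apply a standard estimate on the number of lattice points in an expanding disk. First I would recall that the eigenvalues of $\Delta$ on $\T_\alpha = \R^2/(\alpha\Z\oplus\alpha^{-1}\Z)$ are exactly the numbers $4\pi^2\Norm{\xi}^2$ as $\xi$ ranges over the dual lattice $\alpha^{-1}\Z\oplus\alpha\Z$, each occurring with multiplicity equal to the number of $\xi$ in the dual lattice realizing that value. Hence $\mathcal{N}(\lambda) = \card\left\{\xi \in \alpha^{-1}\Z\oplus\alpha\Z \mvert 4\pi^2\Norm{\xi}^2 \leq \lambda\right\}$, which is the number of points of the lattice $\alpha^{-1}\Z\oplus\alpha\Z$ inside the closed disk of radius $\frac{\sqrt{\lambda}}{2\pi}$ centered at the origin.

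Next I would invoke the standard geometric principle for counting lattice points in a convex region: the number of points of a lattice $L$ of covolume $1$ (here $\alpha^{-1}\Z\oplus\alpha\Z$ has covolume $\alpha^{-1}\cdot\alpha = 1$) inside a disk of radius $R$ equals the area $\pi R^2$ plus an error controlled by the boundary. The usual tiling argument associates to each lattice point the fundamental cell translated there; the symmetric difference between the union of these cells and the disk is contained in an annulus of width $O(1)$ around the circle of radius $R$, whose area is $O(R)$. Therefore $\mathcal{N}(\lambda) = \pi\left(\frac{\sqrt{\lambda}}{2\pi}\right)^2 + O\!\left(\frac{\sqrt{\lambda}}{2\pi}\right) = \frac{\lambda}{4\pi} + O(\sqrt{\lambda})$ as $\lambda\to+\infty$, which is exactly the claimed asymptotics.

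I expect the only genuinely delicate point to be making the error term rigorous, i.e.\ controlling the boundary annulus uniformly: one must check that every fundamental cell meeting the disk lies within bounded distance of its boundary circle, so that it is contained in an annulus $\left\{\frac{\sqrt\lambda}{2\pi}-c \leq \Norm{x} \leq \frac{\sqrt\lambda}{2\pi}+c\right\}$ for a constant $c$ depending only on the diameter of the fundamental cell of $\alpha^{-1}\Z\oplus\alpha\Z$ (hence only on $\alpha$). Since $\alpha$ is fixed throughout, this constant is harmless and the area of the annulus is $O(\sqrt\lambda)$, giving the bound. This is the classical Gauss circle problem estimate adapted to the lattice $\alpha^{-1}\Z\oplus\alpha\Z$, and no sharper exponent is needed here, so I would simply cite it rather than reprove it in detail.
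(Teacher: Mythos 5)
Your proof is correct, and your reduction to lattice-point counting in the dual lattice $\alpha^{-1}\Z\oplus\alpha\Z$ (which has covolume $1$) followed by the classical Gauss-circle estimate is the standard route. The paper itself does not prove this theorem; it is simply recalled as a classical fact, so there is no proof in the paper to compare against.
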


Note that, with the random model for the wave vectors introduced in Step~2, what plays the role of the multiplicity of $\lambda_k$ is almost surely $4m_k$. Hence, the intensity of our Poisson process must somehow be related to the sequence $(m_k)_{k \geq 1}$.

\begin{dfn}
\label{def multiplicity function}
Let $m:[0,+\infty) \to [1,+\infty)$ be a function of class $\mathcal{C}^1$. We say that $m$ is a \emph{multiplicity function} if there exists $\beta >0$ such that $m'(t) = O(t^{-\beta})$ as $t \to +\infty$.
\end{dfn}

\begin{dfn}
\label{def nu m}
Let $m$ be a multiplicity function, we denote by $\nu_m$ the measure on $[0,+\infty)$ admitting the density $(16\pi m)^{-1}$ with respect to the Lebesgue measure.
\end{dfn}

\begin{dfn}
\label{def random spectrum}
Let $m$ be a multiplicity function. We define the random sequence $(\lambda_k)_{k \geq 1}$ as the increasing sequence of the values of a Poisson point process on $[0,+\infty)$ of intensity~$\nu_m$. Moreover, for all $k \geq 1$, we set $m_k = m(\lambda_k)$.
\end{dfn}

It is a classical fact that the sequence $(\lambda_k)_{k \geq 1}$ defined by Def.~\ref{def random spectrum} is almost surely a sequence of positive numbers that diverges to $+\infty$ as $k \to +\infty$. For the reader's convenience, we included a proof of this fact in App.~\ref{sec reminder on Poisson point processes}, see Lem.~\ref{lem random spectrum}. Note that, as a consequence of Def.~\ref{def multiplicity function} and~\ref{def random spectrum}, almost surely the sequence $(m_k)_{k \geq 1}$ is no longer integer-valued at this stage. We will comment further on this issue in Sect.~\ref{subsec discussion of the random model}, where we discuss the features of our random model.

Thanks to Prop.~\ref{prop as expression of M p tau}.\ref{point 1}, we are now only interested in moments of even order. Let $m$ be a multiplicity function and let $\tau \in \R$. We sample random sequences $(\lambda_k)_{k \geq 1}$ and $(m_k)_{k \geq 1}$ as in Def.~\ref{def random spectrum}. Then, for all $p \in \N^*$, we define the randomized spectral sum~$S^p_\tau$ as in Def.~\ref{def S p lambda} and the randomized even moment~$M^{2p}_\tau$ by Eq.~\eqref{eq prop as expression of M p tau}. It turns out that these random variables are almost surely well-defined. The proof of the following result is given at the end of Sect.~\ref{subsec randomization of the spectrum of the Laplacian}.

\begin{lem}
\label{lem as def Sq}
Let $\tau \in \R$. Let $m$ be a multiplicity function, let $(\lambda_k)_{k \geq 1}$ be the associated randomized spectrum, and let $m_k = m(\lambda_k)$ for all $k \geq 1$. Then, almost surely, for all $p \geq 1$, the spectral sum $S^p_\tau$ is finite, and the randomized moment $M^{2p}_\tau$ is well-defined and finite.
\end{lem}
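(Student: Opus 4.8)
The plan is to show that, almost surely, $S^p_\tau < +\infty$ for every $p \geq 1$; the finiteness and well-definedness of $M^{2p}_\tau$ then follow immediately from Prop.~\ref{prop as expression of M p tau}.\ref{point 2}, since that proposition expresses $M^{2p}_\tau$ as a fixed polynomial evaluated at the finite quantities $S^1_\tau, \dots, S^p_\tau$. Moreover, since a countable intersection of almost sure events is almost sure, it suffices to fix $p \geq 1$ and prove that $S^p_\tau = \sum_{k \geq 1} \frac{m_k}{(\lambda_k - \tau)^{2p}} < +\infty$ almost surely; actually, since $p=1$ gives the slowest decay, the single estimate for $p=1$ suffices, but it is cleaner to argue for general $p$ directly.

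The key idea is to split the sum according to whether $\lambda_k$ is far from or close to $\tau$. For the tail, note that the points of the Poisson process accumulate only at $+\infty$, so almost surely only finitely many $\lambda_k$ lie in any compact set; hence the ``close to $\tau$'' part of the sum has finitely many terms, each finite because almost surely no $\lambda_k$ equals $\tau$ (the intensity $\nu_m$ is absolutely continuous, so $\P(\tau \in \{\lambda_k\}) = 0$). It remains to control the tail $\sum_{\lambda_k \geq \tau + 1} \frac{m_k}{(\lambda_k - \tau)^{2p}}$, where $m_k = m(\lambda_k)$. The natural tool here is the fact that for a Poisson point process with intensity $\nu$ and a non-negative measurable function $g$, Campbell's formula gives $\esp{\sum_k g(\lambda_k)} = \int_0^\infty g(t) \, \dmesure\nu_m(t)$. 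Applying this with $g(t) = \mathbf{1}_{t \geq \tau+1} \frac{m(t)}{(t-\tau)^{2p}}$ and recalling that $\nu_m$ has density $(16\pi m(t))^{-1}$, the multiplicity function $m(t)$ cancels and we get
\begin{equation*}
\esp{\sum_{\lambda_k \geq \tau+1} \frac{m(\lambda_k)}{(\lambda_k - \tau)^{2p}}} = \frac{1}{16\pi} \int_{\tau+1}^{\infty} \frac{\dmesure t}{(t-\tau)^{2p}} = \frac{1}{16\pi(2p-1)} < +\infty,
\end{equation*}
so the tail sum is integrable, hence almost surely finite. Combining the tail and the finite ``close'' part, $S^p_\tau < +\infty$ almost surely for each fixed $p$, and intersecting over $p \in \N^*$ completes the argument.

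The main technical point — rather than a genuine obstacle — is the justification that the finitely-many-points-in-compacts property holds almost surely, i.e. that $\nu_m$ assigns finite mass to every bounded interval (which is immediate from $m \geq 1$, giving density $\leq (16\pi)^{-1}$), and that consequently the random sequence $(\lambda_k)$ is well-defined, increasing, and diverges; this is exactly the content of Lem.~\ref{lem random spectrum} from the appendix, which we may invoke. One small subtlety worth spelling out: Campbell's formula in the form above requires $g \geq 0$, which is why we restrict to the tail $t \geq \tau + 1$ where $(t - \tau)^{2p}$ is bounded below, and handle the remaining terms $\lambda_k < \tau + 1$ separately by the finiteness-in-compacts argument; one must also note that $\tau$ could be negative, in which case $\tau + 1$ may be $\leq 0$ and the ``close'' part is empty or trivial, so the argument only simplifies. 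Finally, to pass from ``integrable, hence a.s.\ finite for fixed $p$'' to ``a.s.\ finite for all $p$ simultaneously'', one takes the countable intersection over $p \in \N^*$; and to get the statement for all $\tau$ in the lemma — which is stated for a fixed $\tau$ — no further work is needed.
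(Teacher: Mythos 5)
Your proof is correct; both arguments ultimately rest on Campbell's Theorem (Thm.~\ref{thm Campbell}), but you invoke a different part of it than the paper does. The paper directly applies the almost-sure finiteness criterion (the statement that $S < +\infty$ a.s.\ if and only if $\int_0^\infty \min(g,1)\,\nu(\dx t) < +\infty$), and closes the argument in one line by observing that $m \geq 1$ makes $\min\bigl(\tfrac{1}{(t-\tau)^{2p}}, \tfrac{1}{m(t)}\bigr)$ dominated by the integrable function $\min\bigl(\tfrac{1}{(t-\tau)^{2p}},1\bigr)$. You instead rely only on the expectation form of Campbell's formula, split off the a.s.\ finitely many points in $[0,\tau+1)$, and bound the tail's expectation, exploiting the cancellation of $m(t)$ against the density $(16\pi m(t))^{-1}$ of $\nu_m$. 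This is slightly longer but more elementary; it also makes transparent why one cannot argue via Markov on the whole sum, namely $\esp{S^p_\tau}=+\infty$ (cf.~Rem.~\ref{rem expectation Sq}) because the integral near $t=\tau$ diverges, and your split isolates precisely that singularity in a region where the Poisson process a.s.\ has only finitely many points. One trivial imprecision: when $\tau+1 < 0$ the tail integral should have lower limit $0$ rather than $\tau+1$ (the process lives on $[0,+\infty)$), but this only shrinks the domain of integration, so the bound and the conclusion are unaffected, and you acknowledge the case anyway.
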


\begin{rem}
\label{rem order of sampling}
There is a small subtlety here. If we sample the sequence $(\lambda_k)_{k \geq 1}$ once and work with the same realization for all $\tau \in \R$ then, whenever $\tau$ is equal to one of the $\lambda_k$, the random variables $(S^p_\tau)_{p \geq 1}$ diverge and the $(M^{2p}_\tau)_{p \geq 1}$ are ill-defined. Here, we work with a fixed $\tau$, for which the random variables we are interested in are almost surely well-defined by Lem.~\ref{lem as def Sq}. In particular, we can consider the distribution of these random variables for any $\tau$, and study their limit in distribution as $\tau \to +\infty$.
\end{rem}

We can now state our main result, which describes the asymptotic joint distribution of the even normalized moments $(M^2_\tau)^{-p} M^{2p}_\tau$ for $p \geq 2$, as $\tau \to +\infty$. Note that for $p=1$, for all $\tau \in  \R$, we have $(M^2_\tau)^{-p} M^{2p}_\tau = 1$ deterministically.

\begin{dfn}
\label{def mu p}
We denote by $(\mu_p)_{p \geq 0}$ the sequence of moments of a standard real Gaussian variable. Recall that for any $p \in \N$, we have $\mu_{2p} = \frac{(2p)!}{2^pp!}$ and $\mu_{2p+1}=0$.
\end{dfn}

\begin{thm}
\label{thm main}
Let $m$ be a multiplicity function and let $(\lambda_k)_{k \geq 1}$ and $(m_k)_{k \geq 1}$ be associated random sequences as in Def.~\ref{def random spectrum}. For any $\tau \in \R$, let the randomized even moments $(M^{2p}_\tau)_{p \geq 1}$ be defined by Eq.~\eqref{eq prop as expression of M p tau}.

For any integer $p \geq 2$, there exists a one-parameter family $\left(R_2(l),\dots,R_p(l)\right)_{l \in (0,+\infty]}$ of random vectors in $\R^{p-1}$ such that, if $(\tau_n)_{n \geq 0}$ is a sequence of real numbers satisfying $\tau_n \xrightarrow[n \to +\infty]{} +\infty$ and $m(\tau_n) \xrightarrow[n \to +\infty]{} l$, then the following holds in distribution:
\begin{equation*}
\left(\frac{M^4_{\tau_n}}{(M^2_{\tau_n})^2},\dots,\frac{M^{2p}_{\tau_n}}{(M^2_{\tau_n})^p}\right) \xrightarrow[n \to +\infty]{} \left(\mu_4 R_2(l),\dots,\mu_{2p}R_p(l)\right).
\end{equation*}
Moreover, the distribution of $\left(R_2(l),\dots,R_p(l)\right)$ only depends on $p$ and $l$ and satisfies the following.
\begin{itemize}
\item If $l=+\infty$, then we have $\left(R_2(l),\dots,R_p(l)\right)=(1,\dots,1)$ almost surely, so that the previous convergence holds in probability.
\item If $l \in (0,+\infty)$, then $\left(R_2(l),\dots,R_p(l)\right)$ admits a smooth density with respect to the Lebesgue measure of $\R^{p-1}$.
\item The random vectors $\left(R_2(l),\dots,R_p(l)\right)$ and $\left(R_2(l'),\dots,R_p(l')\right)$ are equal in distribution if and only if $l=l'$.
\end{itemize}
\end{thm}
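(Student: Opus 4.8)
The plan is to reduce everything to the behavior, as $\tau \to +\infty$, of the random spectral sums $S^q_\tau = \sum_{k \geq 1} m_k (\lambda_k - \tau)^{-2q}$, since by Prop.~\ref{prop as expression of M p tau} every even moment is a fixed polynomial in these, and the normalized moments $M^{2p}_\tau/(M^2_\tau)^p$ are then fixed rational functions of the ratios $S^q_\tau/(S^1_\tau)^q$. So the heart of the matter is: \emph{find the joint limit in distribution of $\bigl(S^q_\tau/(S^1_\tau)^q\bigr)_{2 \leq q \leq p}$ along sequences $\tau_n \to +\infty$ with $m(\tau_n) \to l$.} First I would isolate the dominant contribution. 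Because the intensity $\nu_m$ has density $(16\pi m)^{-1}$, the points $\lambda_k$ land near $\tau$ with a local intensity $\approx (16\pi m(\tau))^{-1}\,dt$; rescaling $\lambda_k = \tau + m(\tau) u_k$ (or a similar scaling) turns the local point configuration near $\tau$ into (a superposition over positive and negative sides of) a homogeneous Poisson process of a fixed intensity on $\R$, with the constant $l = \lim m(\tau_n)$ governing that intensity. The key structural fact to prove is that, after this rescaling, $S^q_\tau$ is, up to an explicit deterministic power of $m(\tau)$ and lower-order corrections, equal to $\sum_j (\text{distance})^{-2q}$ over a Poisson process on $\R$, and that the far-away points (the ``bulk'', where $|\lambda_k - \tau|$ is large) contribute, after normalization, a deterministic constant — essentially by a law-of-large-numbers / Campbell's formula computation using that $m$ varies slowly (Def.~\ref{def multiplicity function}, $m'(t) = O(t^{-\beta})$). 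This decomposition into a \emph{random local part} (governed by a Poisson process with intensity $\propto 1/l$) plus a \emph{deterministic bulk part} is what produces the one-parameter family $(R_q(l))$.

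Concretely, I would proceed as follows. (i) Fix $\tau$, split the sum $S^q_\tau$ at some window $|\lambda_k - \tau| \leq \delta(\tau)$ with $\delta(\tau) \to 0$ slowly enough that $m$ is essentially constant $= m(\tau)$ on the window but $\delta(\tau)/m(\tau) \to \infty$ so the window contains many points. (ii) For the near part, use the mapping theorem for Poisson processes to replace the local configuration by a homogeneous Poisson process $\{x_j\}$ on $\R$ of intensity $(16\pi l)^{-1}$ (taking $l = +\infty$ to mean the local intensity vanishes, i.e.\ no local points survive), so that $m(\tau)^{2q} S^{q,\mathrm{near}}_\tau$ converges jointly in distribution to $\Sigma_q := \sum_j |x_j|^{-2q}$; these sums converge a.s.\ for $q \geq 1$ since the Poisson process on $\R$ has a.s.\ no point at $0$ and linear growth, and $q \geq 1$ gives an integrable tail — here I must be slightly careful that $\Sigma_1$ itself is a.s.\ finite, which it is for the same reason. (iii) For the far/bulk part, show $m(\tau)^{2q} S^{q,\mathrm{far}}_\tau \to c_q$ for an explicit constant $c_q$ (in fact $c_q = 0$ for $q \geq 2$ and the $q=1$ bulk term is the one carrying the divergent piece — this needs to be pinned down by the Campbell-formula computation $\int_{|t-\tau| > \delta} (16\pi m(t))^{-1}(t-\tau)^{-2q}\,dt$ and its variance), using the slow variation of $m$ to control the error. (iv) Combine: $m(\tau)^{2q} S^q_{\tau_n} \Rightarrow \Sigma_q(l) + c_q$ jointly, hence $S^q_{\tau_n}/(S^1_{\tau_n})^q \Rightarrow (\Sigma_q(l) + c_q)/(\Sigma_1(l) + c_1)^q =: R_q(l)$, and feed this through the continuous rational map coming from Prop.~\ref{prop as expression of M p tau} together with $M^{2p}_\tau/(M^2_\tau)^p = \mu_{2p} \cdot \bigl(\text{ratio in the } S^q\bigr) + o(1)$ to get the stated limit $\mu_{2p} R_p(l)$.

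The three bulleted properties then follow from the explicit description of $R_q(l)$. For $l = +\infty$ the local part vanishes, leaving only the deterministic bulk, and one checks it gives exactly $(1,\dots,1)$ (here one should verify the normalization is arranged precisely so the deterministic ratios are all $1$ — this is where the Weyl-law tuning of $\nu_m$ and the factors $\mu_{2p}$ pay off), giving convergence in probability to a constant. For $l \in (0,+\infty)$, the vector $(R_2(l),\dots,R_p(l))$ is a smooth (rational, with a.s.\ nonvanishing denominator) function of the finite-dimensional-in-distribution object built from the Poisson process; to get a \emph{smooth density} I would argue that the map $(x_j) \mapsto (\Sigma_1,\dots,\Sigma_p)$ already produces an absolutely continuous distribution — e.g.\ by conditioning on the position $x_1$ of the point nearest $0$, whose law has a smooth density on $\R$, and showing the conditional law of the remaining coordinates, pushed through the smooth submersion $x_1 \mapsto (\Sigma_1,\dots)$, stays absolutely continuous — then the rational change of variables to $(R_q)$ preserves smoothness of the density away from the (null) bad set. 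For injectivity in $l$, I would extract a one-dimensional marginal or a moment that depends strictly monotonically on $l$: e.g.\ $\mathbb{E}[\Sigma_q]$ or a Laplace-transform quantity scales with the Poisson intensity $1/l$, so distinct $l$ give distinct distributions; concretely, as $l \to +\infty$ the random vector concentrates at $(1,\dots,1)$ while for small $l$ it spreads out, and a monotone functional separates all values.

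I expect the main obstacle to be step (iii), the careful analysis of the bulk/far contribution: one must show that after the correct normalization the far sum is \emph{deterministic in the limit} (vanishing variance, via a second-moment/Campbell computation) and identify its limit precisely, all while the intensity density $(16\pi m(t))^{-1}$ is non-constant — the hypothesis $m'(t) = O(t^{-\beta})$ is exactly what makes $m(t) \approx m(\tau)$ usable on the relevant range, but quantifying the error and choosing the cutoff $\delta(\tau)$ to make both the ``$m$ is constant'' approximation and the ``enough points locally'' requirement hold simultaneously will be the delicate book-keeping. A secondary but real difficulty is making the passage from joint convergence of the $S^q_\tau$ to joint convergence of the normalized moments fully rigorous, i.e.\ controlling the $o(1)$ error in $M^{2p}_\tau/(M^2_\tau)^p$ coming from the sub-leading monomials in the polynomial $P_p$ (those involving $S^q_\tau$ with $q < p$ multiplied together) and checking the denominator $(M^2_\tau)^p = (2A_1 S^1_\tau)^p$ stays bounded away from $0$ in the limit, which reduces to $\Sigma_1(l) + c_1 > 0$ a.s.
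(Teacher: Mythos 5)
Your high-level strategy — reduce to the joint limit of the ratios $S^q_\tau/(S^1_\tau)^q$ and feed through the polynomial from Prop.~\ref{prop as expression of M p tau} by the continuous mapping theorem — is exactly what the paper does. But the way you propose to get the joint limit is a genuinely different route, and it has a structural error that you would run into head-on precisely at the step you flag as the main obstacle.

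The paper does not decompose $S^q_\tau$ into a random near part and a deterministic far part. Instead (Lem.~\ref{lem limit characteristic function}) it applies Campbell's theorem to the whole sum, writes the characteristic function of $\left(m(\tau)^{2q-1}S^q_\tau\right)_{1\leq q\leq p}$ as $\exp(-I_\tau(x))$, substitutes $t=\tau+m(\tau)u$, and passes to the limit by dominated convergence. The crucial outcome is that the normalization is by $m(\tau)^{2q-1}$ (not $m(\tau)^{2q}$), the rescaled intensity is $(16\pi)^{-1}\,du$ (not $(16\pi l)^{-1}\,du$), and consequently the limit vector $S=(S^1,\dots,S^p)$ has a \emph{fixed} law $\psi$ that does not depend on $m$ or on $l$ at all. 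The entire $l$-dependence then enters through the deterministic coefficients $A_q/\bigl(2m(\tau)\bigr)^{q-1}\to A_q/(2l)^{q-1}$ multiplying the ratio $\frac{m(\tau)^{2q-1}S^q_\tau}{(m(\tau)S^1_\tau)^q}\to\frac{S^q}{(S^1)^q}$ in Eq.~\eqref{eq normalized moments}. This is a quite different picture from yours, where you want $l$ to tune the intensity of a local Poisson process and to produce a nonzero deterministic bulk constant $c_q$.

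The concrete gap is step (iii). The limit laws $S^q$ are one-sided stable of index $1/(2q)<1$ (Rem.~\ref{rem limit distribution}); they have infinite mean and are purely fluctuational, with no deterministic component. If you set a cutoff $\delta(\tau)$ with $\delta(\tau)/m(\tau)\to\infty$ slowly, the far part satisfies $m(\tau)^{2q-1}\,\mathbb{E}\bigl[S^{q,\mathrm{far}}_\tau\bigr]\approx \frac{1}{8\pi(2q-1)}\bigl(m(\tau)/\delta(\tau)\bigr)^{2q-1}\to 0$, and likewise its variance; so $c_q=0$ for \emph{every} $q\geq 1$, including $q=1$, and the near part alone carries the entire (non-degenerate) limit. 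Your intuition that "the $q=1$ bulk is the one carrying the divergent piece'' inverts what is actually happening: the divergence of $\mathbb{E}[S^1_\tau]$ is caused by the heavy tail from the \emph{nearest} point $\lambda_k$ approaching $\tau$, not by the far bulk. A further inconsistency: if you rescale $\lambda_k=\tau+m(\tau)u_k$, the rescaled process has constant intensity $(16\pi)^{-1}$, not $(16\pi l)^{-1}$; if you instead leave the variable unrescaled, the local intensity does tend to $(16\pi l)^{-1}$ but then $m(\tau)^{2q}S^q_\tau$ diverges (you need $m(\tau)^{2q-1}$). As stated, your claim "$m(\tau)^{2q}S^{q,\mathrm{near}}_\tau \Rightarrow \Sigma_q(l)$ with $\{x_j\}$ Poisson of intensity $(16\pi l)^{-1}$'' is dimensionally inconsistent with the Campbell computation except when $l=1$. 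These are fixable bookkeeping errors, but they indicate that the near/far picture is not where the $l$-dependence lives.

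On the two remaining structural points: your route to absolute continuity (conditioning on the nearest point and pushing through a submersion) is plausible but unfinished and harder to close than the paper's, which simply observes (Lem.~\ref{lem psi fast decreasing}) that $\psi$ is fast decreasing, so $S$ has a Schwartz-class density $D$, and then changes variables through the diffeomorphism $\Phi'_p$ of Cor.~\ref{cor Phi' and Psi' diffeo}. Your injectivity argument ("a monotone functional separates all values'') is a placeholder; the paper achieves it by noting that, after the $l$-dependent linear change of variables, the density $\mathcal{D}_l$ has global maximum $(2l)^{p(p-1)/2}\bigl(\prod_{q=2}^p (A_qq!)^{-1}\bigr)\,\max\mathcal{D}$, which is strictly monotone in $l$ (using $A_q>0$ from Lem.~\ref{lem Ap positive}). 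You would need something equally concrete.
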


\begin{rem}
\label{rem limit distribution main thm}
If $2 \leq q \leq p$ then, for any $l \in (0,+\infty]$, the random vector $\left(R_2(l),\dots,R_q(l)\right)$ is distributed as the first $q-1$ components of $\left(R_2(l),\dots,R_p(l)\right)$. In particular the distribution of $\left(R_2(l),\dots,R_q(l)\right)$ is uniquely defined and does not depend on $p \geq q$.
\end{rem}

\begin{rem}
\label{rem main thm CV tau}
The same proof as that of Thm.~\ref{thm main} shows that if $m(\tau) \xrightarrow[\tau \to +\infty]{} l$ then we have:
\begin{equation*}
\left(\frac{M^4_\tau}{(M^2_\tau)^2},\dots,\frac{M^{2p}_\tau}{(M^2_\tau)^p}\right) \xrightarrow[\tau \to +\infty]{} \left(\mu_4 R_2(l),\dots,\mu_{2p}R_p(l)\right)
\end{equation*}
in distribution, see Sect.~\ref{subsec proof of the main theorem}, and especially the proof of Lem.~\ref{lem proof of CV in distribution}.
\end{rem}

Thm.~\ref{thm main} is consistent with the deterministic result of Kurlberg and Ueberschär~\cite{KU2019}. They proved that on Diophantine tori, the $4$-th normalized moment of the new eigenfunctions of a point scatterer is bounded away from $\mu_4$ along a full-density subsequence of new eigenvalues. Our result suggests that, in the deterministic setting, the asymptotic behavior of the normalized moments of the new eigenfunctions mostly depends on the asymptotics of the multiplicities of the eigenvalues of the Laplacian.

We conclude this section by stating one last result. Let $m$ be a multiplicity function (see Def.~\ref{def multiplicity function}) and let $(\lambda_k)_{k \geq 1}$ and $(m_k)_{k \geq 1}$ be defined as in Def.~\ref{def random spectrum}. For any $\lambda \geq 0$, we denote by: 
\begin{equation}
\label{eq def Nm lambda}
\mathcal{N}_m(\lambda) = 1 + \sum_{\{k \geq 1 \mid \lambda_k \leq \lambda\}} 4m_k.
\end{equation}
The function $\mathcal{N}_m$ is the analogue in our random setting of the counting function $\mathcal{N}$ appearing in the Weyl Law (Thm.~\ref{thm Weyl Law}). We tuned the intensity $\nu_m$ (see Def.~\ref{def nu m}) of the Poisson point process $(\lambda_k)_{k \geq 1}$ so that $\esp{\mathcal{N}_m(\lambda)} \sim \frac{\lambda}{4\pi}$ as $\lambda \to +\infty$, in agreement with the Weyl Law. In fact, we have a much stronger result.

\begin{prop}[Random Weyl Law]
\label{prop random Weyl Law}
Let $m$ be a multiplicity function and let $\mathcal{N}_m$ be defined by Eq.~\eqref{eq def Nm lambda}. For all $\lambda \geq 0$, we have $\esp{\mathcal{N}_m(\lambda)} = 1 + \frac{\lambda}{4\pi}$ and $\var{\mathcal{N}_m(\lambda)} = \frac{1}{\pi} \int_0^\lambda m(t) \dx t$. Moreover, the following holds.
\begin{enumerate}
\item (Strong Law of Large Numbers) Almost surely $\displaystyle\frac{1}{\lambda}\mathcal{N}_m(\lambda) \xrightarrow[\lambda \to +\infty]{} \frac{1}{4\pi}$.
\item (Central Limit Theorem) As $\lambda \to +\infty$, the random variable $\displaystyle\frac{\mathcal{N}_m(\lambda) - \esp{\mathcal{N}_m(\lambda)}}{\var{\mathcal{N}_m(\lambda)}^\frac{1}{2}}$ 
converges in distribution toward a standard real Gaussian.
\end{enumerate}
\end{prop}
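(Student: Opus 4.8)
The plan is to view $\mathcal{N}_m(\lambda) - 1 = 4\sum_{\{k \geq 1 \mid \lambda_k \leq \lambda\}} m(\lambda_k)$ as a functional of the Poisson point process $(\lambda_k)_{k\geq1}$ of intensity $\nu_m$ and to read off all three statements from the classical formulas for such functionals recalled in App.~\ref{sec reminder on Poisson point processes}. Since $\nu_m$ has density $(16\pi m)^{-1}$, Campbell's formula for the mean gives
\begin{equation*}
\esp{\mathcal{N}_m(\lambda)} = 1 + \int_0^\lambda 4m(t)\,\frac{dt}{16\pi m(t)} = 1 + \frac{\lambda}{4\pi},
\end{equation*}
and the analogous formula for the variance gives $\var{\mathcal{N}_m(\lambda)} = \int_0^\lambda (4m(t))^2\frac{dt}{16\pi m(t)} = \frac1\pi\int_0^\lambda m(t)\,dt$, the deterministic constant $1$ being irrelevant for the variance. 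The cancellation of one power of $m$ in the mean and of all but one in the variance is precisely how $\nu_m$ was calibrated.

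For the strong law, I would discretize. For $j \in \N^*$ set $Y_j = \mathcal{N}_m(j) - \mathcal{N}_m(j-1) = 4\sum_{\{k \geq 1 \mid j-1 < \lambda_k \leq j\}} m(\lambda_k)$, so that $\mathcal{N}_m(n) = 1 + \sum_{j=1}^n Y_j$ for every $n \in \N^*$. The $Y_j$ are independent, since the restrictions of a Poisson point process to disjoint intervals are independent, and the formulas above give $\esp{Y_j} = \frac{1}{4\pi}$ and $\var{Y_j} = \frac1\pi\int_{j-1}^j m$. The defining property $m'(t) = O(t^{-\beta})$ forces $m(t) = O(t^\gamma)$ for some $\gamma \in (0,1)$, hence $\var{Y_j} = O(j^\gamma)$ and $\sum_{j\geq1} \var{Y_j}/j^2 < +\infty$; Kolmogorov's strong law for independent, non-identically distributed summands then gives $\frac1n\mathcal{N}_m(n) \to \frac1{4\pi}$ almost surely along integers. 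As $\mathcal{N}_m$ is non-decreasing, for $\lambda \in [n,n+1]$ one has $\frac{\mathcal{N}_m(n)}{n+1} \leq \frac{\mathcal{N}_m(\lambda)}{\lambda} \leq \frac{\mathcal{N}_m(n+1)}{n}$, and since $\frac{n}{n+1}\to1$ the full almost sure limit $\frac1\lambda\mathcal{N}_m(\lambda)\to\frac1{4\pi}$ follows.

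For the central limit theorem, the cleanest route is the exponential formula for Poisson functionals, which yields, for every $t \in \R$,
\begin{equation*}
\esp{e^{it(\mathcal{N}_m(\lambda)-1)}} = \exp\left(\int_0^\lambda \left(e^{4itm(s)}-1\right)\frac{ds}{16\pi m(s)}\right).
\end{equation*}
Write $\mu_\lambda = \frac{\lambda}{4\pi}$ and $\sigma_\lambda^2 = \frac1\pi\int_0^\lambda m$. Evaluating at $t = u/\sigma_\lambda$, recentering by $e^{-iu\mu_\lambda/\sigma_\lambda}$, and expanding $e^{iv}-1-iv = -\frac{v^2}{2} + O(\norm{v}^3)$ with $v = 4um(s)/\sigma_\lambda$: the linear term reproduces $iu\mu_\lambda/\sigma_\lambda$ and cancels the recentering, the quadratic term equals exactly $-\frac{u^2}{2}$ by the variance formula, and the cubic remainder is bounded by $\frac{C\norm{u}^3}{\sigma_\lambda^3}\int_0^\lambda m(s)^2\,ds \leq \frac{C\norm{u}^3}{\sigma_\lambda}\max_{[0,\lambda]}m$, where I used $\int_0^\lambda m^2 \leq \bigl(\max_{[0,\lambda]}m\bigr)\int_0^\lambda m = \pi\sigma_\lambda^2\max_{[0,\lambda]}m$. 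Provided $\max_{[0,\lambda]}m = o(\sigma_\lambda)$, this remainder vanishes, so $\esp{\exp\bigl(iu\sigma_\lambda^{-1}(\mathcal{N}_m(\lambda) - \esp{\mathcal{N}_m(\lambda)})\bigr)} \to e^{-u^2/2}$ for every $u \in \R$, and Lévy's continuity theorem gives the announced convergence in distribution.

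The main obstacle is exactly this estimate, $\max_{[0,\lambda]}m = o\bigl((\int_0^\lambda m)^{1/2}\bigr)$. When $m$ is bounded (in particular whenever $\beta \geq 1$) it is immediate from $\int_0^\lambda m \geq \lambda$; the real work is for small $\beta$, where $m$ may grow, and one must rule out that $m$ is large only on a short interval. Here the bound $m'(t) = O(t^{-\beta})$ is the key input: writing $M_\lambda = \max_{[0,\lambda]}m = m(t_\lambda)$, one obtains $m(s) \geq M_\lambda - C\,2^\beta t_\lambda^{-\beta}(t_\lambda - s)$ for $s \in [t_\lambda/2,t_\lambda]$ and $\lambda$ large, so $m \geq M_\lambda/2$ on an interval of length of order $\min(t_\lambda, M_\lambda t_\lambda^\beta)$; combined with $M_\lambda = O(t_\lambda^{1-\beta})$, i.e.\ $t_\lambda \gtrsim M_\lambda^{1/(1-\beta)}$, this forces $\int_0^\lambda m \gtrsim M_\lambda^{2+\beta/(1-\beta)}$ and hence $\max_{[0,\lambda]}m/\sigma_\lambda \to 0$ (the case where $M_\lambda$ stays bounded being trivial). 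Once this is in place, the remaining ingredients --- Campbell's formula, the exponential formula for Poisson functionals, Kolmogorov's strong law and Lévy's continuity theorem --- are all standard, and only routine bookkeeping of constants is left.
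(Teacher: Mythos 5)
Your proof is correct and follows the same overall strategy as the paper: Campbell's formula for the mean, variance, and characteristic function, a discretization-plus-monotonicity argument for the strong law, and a second-order Taylor expansion reducing the CLT to the estimate $\int_0^\lambda m(t)^2\,\dx t = o(\sigma(\lambda)^3)$. Two technical choices differ, both in your favor for cleanliness. For the strong law, you split $\mathcal{N}_m(n)-1 = \sum_{j=1}^n Y_j$ into \emph{independent} increments over the unit intervals $[j-1,j]$ and invoke Kolmogorov's criterion $\sum_j \var{Y_j}/j^2 < +\infty$; the paper does not use independence at all, but instead takes the sparse subsequence $t_k = k^{2/(1-\alpha)}$ so that $\sum_k \var{\frac{1}{t_k}\mathcal{N}_m(t_k)} < +\infty$ and extracts almost sure convergence from the finiteness of the expected series. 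Both arguments are elementary and the interpolation step from a sequence to all real $\lambda$ is identical. For the CLT, your key localization is anchored at the maximizer $t_\lambda$ of $m$ on $[0,\lambda]$: using $m'(t)=O(t^{-\beta})$ you show $m \geq M_\lambda/2$ on an interval of length at least of order $M_\lambda^{1/(1-\beta)}$, which yields $\max_{[0,\lambda]} m = o(\sigma(\lambda))$ directly and hence $\int_0^\lambda m^2 \leq \pi\sigma(\lambda)^2\max_{[0,\lambda]} m = o(\sigma(\lambda)^3)$. The paper localizes around an arbitrary point $s \in [\lambda^{1/2},\lambda]$ to bound $\max_{[\lambda^{1/2},\lambda]} m$, and then must treat $\int_0^{\lambda^{1/2}} m^2$ separately; your anchoring at the maximizer avoids that split and gives a marginally simpler argument. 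All four ingredients — Campbell's theorem, the exponential formula for Poisson functionals, Kolmogorov's strong law, and Lévy's continuity theorem — are used as you indicate, and the conclusions agree.
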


\begin{rem}
\label{rem random Weyl Law}
In the usual Weyl Law, the error term is of order $O(\sqrt{\lambda})$. Here, the fluctuations around the leading term are of order $\left(\int_0^\lambda m(t) \dx t\right)^\frac{1}{2}$. Since the values of $m$ are larger than $1$, this term is greater than $\sqrt{\lambda}$. On the other hand, using Def.~\ref{def multiplicity function}, one can prove that this term is of order $O(\lambda^\frac{1+\alpha}{2})$ for some $\alpha \in [0,1)$. In fact, for any $\alpha \in [0,1)$, we can build a multiplicity function $m$ such that $\left(\int_0^\lambda m(t) \dx t\right)^\frac{1}{2} \sim \lambda^\frac{1+\alpha}{2}$ as $\lambda \to +\infty$.
\end{rem}


\subsection{Discussion of the random model}
\label{subsec discussion of the random model}

Let us discuss the various assumptions in our model. The first thing to recall is that we defined a random model for the moments of some function $f_\tau:\T_\alpha \to \R$, but the random moments $(M^p_\tau)$ that we study are not the moments of some randomized version of $f_\tau$. In fact, trying to define a random model for $f_\tau$ using the same ingredients as our model leads to some serious difficulties, as discussed in Sect.~\ref{subsec random model and results}.

In Step~1, we compute the deterministic moments of the function $f_\tau$. As one expects, in the course of this computation, a lot of terms vanish for periodicity reasons. This is a typical feature of the problem we wish to model, and it is naturally present in our random model since we start from the deterministic expression obtained at the end of Step~1, in Prop.~\ref{prop deterministic expression M p tau}.

In Step~2, we assume that the sequences $(\lambda_k)_{k \geq 1}$ and $(m_k)_{k \geq 1}$ are fixed, and we randomize the directions $(\theta_{k,j})$ of the wave vectors. We only ask that the distribution of these directions admits a density with respect to some natural measure. Though this assumption does not seem very restrictive, it is enough to kill all the number-theoretic aspects of the problem. On the one hand, this is what allows us to say something about our model. But on the other hand this means that our model is too crude to capture any of the arithmetic subtleties of the deterministic setting. Still, the almost sure result of Prop.~\ref{prop as expression of M p tau} is very robust, in the sense that it does not depend on the actual density of $(\theta_{k,j})$. This shows that there is a lot of flexibility in this random model, and we hope that it is enough for it to say something of the deterministic case.

Let us give some examples of admissible distributions for the sequence $(\theta_{k,j})_{k,j \geq 1}$. In view of Def.~\ref{def randomized wave vectors}, we are only interested in the distribution of the subsequence $(\theta_{k,j})_{k \geq 1; 1 \leq j \leq m_k}$, which must admit a density with respect to the product of the Lebesgue measures. In the following examples, we describe the joint distribution of the random variables $(\theta_{k,j})_{k \geq 1; 1 \leq j \leq m_k}$. One can build admissible distributions for $(\theta_{k,j})_{k,j \geq 1}$ from these, by choosing the remaining terms to be independent uniform variables in $[0,\frac{\pi}{2}]$.

\begin{ex}
\begin{enumerate}
\item The $(\theta_{k,j})_{k \geq 1; 1 \leq j \leq m_k}$ are independent uniform variables in $[0,\frac{\pi}{2}]$. The distribution of the corresponding sequence $(\theta_{k,j})_{k,j \geq 1}$ is then $\eta$.

\item Let us assume that $(\lambda_k)_{k \geq 1}$ is the spectrum of $\Delta$ on some $\T_\alpha$, and that the associated wave vectors are known, up to some error. Let us say, for example, that the deterministic $\Lambda_k$ is of the form $\frac{\sqrt{\lambda_k}}{2\pi}\left\{\zeta^{(i)}(\beta_{k,j}) \mvert 1\leq j \leq m_k, 1 \leq i \leq 4 \right\}$, where we know that $\beta_{k,j}$ belongs to some interval~$I_{k,j} \subset [0,\frac{\pi}{2}]$, for any $k \geq 1$ and $j \in \{1,\dots,m_k\}$. In this case, we can define $(\theta_{k,j})_{k \geq 1; 1 \leq j \leq m_k}$ as a sequence of independent variables admitting a density with respect to the Lebesgue measure on $[0,\frac{\pi}{2}]$, and such that $\theta_{k,j}$ is supported on $I_{k,j}$.

\item One can introduce some repulsion between the directions $(\theta_{k,j})_{k \geq 1; 1 \leq j \leq m_k}$. A naive example is to define the density of the distribution of $(\theta_{k,j})_{1 \leq j \leq m_k}$ with respect to the Lebesgue measure on $[0,\frac{\pi}{2}]^{m_k}$ as:
\begin{equation*}
Z_k \exp\left(-\sum_{1 \leq i < j \leq m_k} \frac{1}{\norm{\theta_{k,i}- \theta_{k,j}}^2}\right),
\end{equation*}
where $Z_k$ is some normalizing constant. Then take the product distribution over $k \geq 1$.

\item One can mix the previous two examples and build admissible distributions such that the $(\theta_{k,j})_{k \geq 1; 1 \leq j \leq m_k}$ are both localized and repel one another.
\end{enumerate}
\end{ex}

Note that the density of $(\theta_{k,j})$ can depend on the sequences $(\lambda_k)_{k \geq 1}$ and $(m_k)_{k \geq 1}$. Since we obtain an almost sure result independent of this density in Prop.~\ref{prop as expression of M p tau}, there is no coupling issue when we randomize $(\lambda_k)_{k \geq 1}$ and $(m_k)_{k \geq 1}$ in Step~3. Note also that our assumption that $(\theta_{k,j})$ admits a density with respect to $\eta$ (cf.~Def.~\ref{def eta}) is a little stronger than what we actually need. It is enough to assume that all the $p$-dimensional marginal distributions of $(\theta_{k,j})_{k \geq 1; 1 \leq j \leq m_k}$ admit a density with respect to Lebesgue to prove Prop.~\ref{prop as expression of M p tau} for the $p$-th moment.

In Step~3, we finally replace the sequence $(\lambda_k)_{k \geq 1}$ by the values of a Poisson point process. Since we want this sequence to model the spectrum of $\Delta$, it is natural to ask that it satisfies a version of the Weyl Law. As explained in Sect.~\ref{subsec random model and results}, this means that the definitions of the intensity measure of the point process and of the sequence $(m_k)_{k \geq 1}$ must be intertwined. We define $m_k$ as $m(\lambda_k)$ for all $k \geq 1$, where $m:[0,+\infty) \to [1,+\infty)$ is some fixed function. That is, the mock multiplicities $(m_k)_{k \geq 1}$ depend deterministically on the randomized spectrum $(\lambda_k)_{k \geq 1}$. This point is fundamental in our approach. It allows us to write the randomized spectral sums $S^p_\tau$ (see Def.~\ref{def S p lambda}), which are the relevant random variables to consider in this problem, as $\sum_{k \geq 1} g_p(\lambda_k)$ for some function $g_p$. Then we can study these quantities using  Campbell's Theorem (recalled in App.~\ref{sec reminder on Poisson point processes}, Thm.~\ref{thm Campbell}). The same tools also show that, with our definition of  $\nu_m$ (see Def.~\ref{def nu m}), the sequence $(\lambda_k)_{k \geq 1}$ satisfies a Random Weyl Law (see Prop.~\ref{prop random Weyl Law}).

Let us now discuss our choice of the multiplicity function $m$. One could think of taking $m$ to be measurable and integer-valued. Instead, we ask our multiplicity functions to be regular enough (cf.~Def.~\ref{def multiplicity function}), in order to use some analytic tools. This part of the model is probably the one that seems less natural, because we replace a sequence of integers by the values of a smooth function evaluated at random points. In particular, generically, none of the $(m(\lambda_k))_{k \geq 1}$ is an integer. This is not an issue, because Prop.~\ref{prop as expression of M p tau}  gives an almost sure expression of $M^p_\tau$, where the sequence $(m_k)_{k \geq 1}$ only appears through the spectral sums $(S^q_\tau)_{1 \leq q \leq p}$, and the definition of these sums (Def.~\ref{def S p lambda}) makes sense as soon as the $(m_k)_{k \geq 1}$ are non-negative. In fact, the expression of $S^q_\tau$ hints that what matters here is the asymptotic behavior of $(m_k)_{k \geq 1}$, or equivalently of $m$, rather than $(m_k)_{k \geq 1}$ being integer-valued.

The precise condition we ask our multiplicity function to satisfy (cf.~Def.~\ref{def multiplicity function}) may seem quite arbitrary. Indeed, to the best of our understanding, it is only a technical condition that allows our proofs to work. This condition can be weakened a bit. In most of this paper, we only need $m$ to be such that $m'(t) \xrightarrow[t \to +\infty]{} 0$. The only place where we need a stronger assumption is to prove the Law of Large Numbers part in Prop.~\ref{prop random Weyl Law}. There, we need to further assume that $m(t) = O(t^\alpha)$ as $t \to +\infty$, for some $\alpha <1$.

Let us give some examples of admissible multiplicity functions. The first two examples show that defining $(m_k)_{k \geq 1}$ as $(m(\lambda_k))_{k \geq 1}$ allows us to model the average behavior of the deterministic multiplicities of the Laplace eigenvalues on some flat tori.
\begin{ex}
\label{ex multiplicity function}
\begin{enumerate}
\item On an irrational torus, that is on $\T_\alpha$ with $\alpha^4 \notin \Q$, the multiplicity of a generic deterministic eigenvalue of $\Delta$ is $4$. This suggests to model this case by setting $m:t \mapsto 1$.

\item On the square torus $\T$, it is known that $\frac{1}{k}\sum_{i=1}^k r_i \sim 4C_0 \ln(\lambda_k)^\frac{1}{2}$ as $k \to +\infty$, where $(\lambda_k)_{k \geq 1}$ are the deterministic eigenvalues of the Laplacian, $(r_k)_{k \geq 1}$ are their multiplicities, and $C_0$ is some explicit constant. This follows from Landau's Theorem \cite{Lan1908}, which gives the counting asymptotics of the Laplace eigenvalues without multiplicities, and Weyl's Law (Thm.~\ref{thm Weyl Law}). This suggests to model this case using the multiplicity function $m:t \mapsto 1+C_0 \ln(1+t)^\frac{1}{2}$.

Note that $r_k=8$ for infinitely many $k \geq 1$. With our definition of multiplicity function, we can only model the average behavior of the multiplicities of the deterministic Laplace eigenvalues.

\item More generally, $t \mapsto 1+ C\ln(1+t)^\alpha$ is a multiplicity function if $C \geq 0$ and $\alpha \geq 0$, and $t \mapsto 1+Ct^\alpha$ is a multiplicity function if $C\geq 0$ and $0\leq \alpha < 1$.
\end{enumerate}
\end{ex}

We conclude this section with a last remark. In Step~1 and Step~2, the even moments $(M^{2p}_\tau)_{p \geq 1}$ are positive. In Step~3, we redefine these even moments using Eq.~\eqref{eq prop as expression of M p tau}. If the multiplicity function $m$ is constant, equal to some integer, then a computation similar to the proof of Prop.~\ref{prop as expression of M p tau} shows that: for all $\tau \in \R$, almost surely, for all $p \in \N^*$, $M^{2p}_\tau > 0$. Unfortunately, this kind of computation does not make sense if $m$ is not integer-valued. For a general $m$, it is not clear that the even randomized moments are almost surely positive. It is natural to conjecture that this is the case, but we do not know how to prove it at this point.


\subsection{Organization of the paper}
\label{subsec organization of the paper}

The paper is organized as follows. In Sect.~\ref{sec moments of deterministic eigenfunctions of point scatterers}, we study the deterministic situation. In Sect.~\ref{subsec point scatterers on flat tori}, we describe the new eigenfunctions of point scatterers on rectangular flat tori, then, in Sect.~\ref{subsec moments of the new eigenfunctions}, we study their moments. The main result of Sect.~\ref{sec moments of deterministic eigenfunctions of point scatterers} is the proof of Prop.~\ref{prop deterministic expression M p tau}.

In Sect.~\ref{sec random model for the moments of the new eigenfunctions}, we introduce and study our random model. Sect.~\ref{subsec randomization of the wave vectors} is concerned with what we called Step~2 in the Introduction (see Sect.~\ref{subsec random model and results}), that is the randomization of the wave vectors. In this section we prove Prop.~\ref{prop as expression of M p tau}, for our random model in Sect.~\ref{subsubsec almost sure expression of the moments} and for a variation adapted to the square torus in Sect.~\ref{subsubsec case of the square torus}. In Sect.~\ref{subsec randomization of the spectrum of the Laplacian} we conclude the definition of our random model as in Step~3 above. Then, we study the asymptotic distribution of the spectral sums (see Def.~\ref{def S p lambda}) in Sect.~\ref{subsec limit distribution of the spectral sums} and we prove Thm.~\ref{thm main} in Sect.~\ref{subsec proof of the main theorem}.

Sect.~\ref{sec random Weyl Law} is concerned with the proof of the Random Weyl Law (Prop.~\ref{prop random Weyl Law}). App.~\ref{sec reminder on Poisson point processes} surveys some basic facts about Poisson point processes, including Campbell's Theorem (Thm.~\ref{thm Campbell}), which is one of the key tools in the proof of our main results. Finally, we gathered the proofs of several technical lemmas in App.~\ref{sec technical lemmas}.

\paragraph{Acknowledgment.} The authors thank Stéphane Nonnenmacher for his remarks and comments, that helped to improve the exposition of the paper.


\section{Moments of deterministic eigenfunctions of point scatterers}
\label{sec moments of deterministic eigenfunctions of point scatterers}

The goal of this section is to describe precisely the deterministic setting that we wish to model. In Sect.~\ref{subsec point scatterers on flat tori}, we introduce our framework and recall the definition of a point scatterer on a flat torus. We also study the new eigenfunctions of point scatterers on flat tori and prove these functions are $L^p$ for any $p \in \N^*$. Then, we derive an expression for the moments of any order of these new eigenfunctions and we prove Prop.~\ref{prop deterministic expression M p tau} in Sect.~\ref{subsec moments of the new eigenfunctions}.


\subsection{Point scatterers on flat tori}
\label{subsec point scatterers on flat tori}

In this section, we introduce our framework and recall the definition of a point scatterer in this setting. Let $\alpha >0$, we denote by $\L_\alpha=\alpha \Z \oplus \alpha^{-1} \Z \subset \R^2$ and by $\L_\alpha^*= \alpha^{-1} \Z \oplus \alpha \Z$ the dual lattice. We denote by $\T_\alpha = \R^2 / \L_\alpha$, equipped with the metric induced by the Euclidean inner product on~$\R^2$. That is, $\T_\alpha$ is the $2$-dimensional rectangular flat torus with aspect ratio~$\alpha^2$ and total area equal to $1$. Here and in the rest of this paper, area is computed with respect to the Lebesgue measure on $\T_\alpha$, induced by the one on $\R^2$. When $\alpha =1$, we simply denote $\T = \T_1$ for the flat square torus. In the following, we constantly identify points of $\R^2$ with their projection in~$\T_\alpha$.

Let $\Delta$ denote the Laplace--Beltrami operator on $\T_\alpha$. Recall that $\Delta$ is an unbounded, positive, self-adjoint operator on the space $L^2(\T_\alpha)$ of square integrable functions from $\T_\alpha$ to $\C$. Note that, since we defined $\Delta$ as a positive operator, we have $\Delta = -\frac{\partial^2}{\partial x_1^2} -\frac{\partial^2}{\partial x_2^2}$ in the local coordinates induced by the usual Euclidean coordinates of $\R^2$.

Let $x_0 \in \T_\alpha$, and let us define rigorously what a point scatterer at $x_0$ is. Let $\Delta_{\vert D_0}$ denote the restriction of $\Delta$ to the space $D_0$ of smooth functions from $\T_\alpha$ to $\C$ that vanish in some neighborhood of $x_0$. By von~Neuman's theory of self-adjoint extensions, there exists a one-parameter family $(\Delta_\varphi)_{\varphi \in (-\pi,\pi]}$ of unbounded self-adjoint operators on $L^2(\T_\alpha)$ that extend $\Delta_{\vert D_0}$ (cf.~\cite{Zor1980}). For $\varphi=\pi$, we recover the usual Laplacian. Any element of the family $(\Delta_\varphi)_{\varphi \in (-\pi,\pi)}$ is called a \emph{point scatterer} at~$x_0$. The situation being translation-invariant, from now on we assume that $x_0 = 0$. For the purpose of the present paper, we do not need to go over the details of the definition of the operators $(\Delta_\varphi)_{\varphi \in (-\pi,\pi)}$. In particular, the choice of the parameter $\varphi \in (-\pi,\pi)$ will be irrelevant to us. We refer the interested reader to~\cite[Sect.~3]{RU2012} for details on this matter.

Let us review the spectral properties of the point scatterers $(\Delta_\varphi)_{\varphi \in (-\pi,\pi)}$. First, recall that the spectrum of the usual Laplacian is $\Sp(\Delta) = \left\{\frac{4\pi^2}{\alpha^2} (a^2 + \alpha^4b^2) \mvert (a,b) \in \N^2 \right\}$. Its distinct eigenvalues can be ordered into an increasing sequence $(\lambda_k)_{k \geq 0}$ going to infinity. Let $k \geq 0$, we denote by:
\begin{equation}
\label{eq def Lambda k}
\Lambda_k = \{\xi \in \L_\alpha^* \mid 4\pi^2\Norm{\xi}^2 = \lambda_k \} = \L_\alpha^* \cap \left(\frac{\sqrt{\lambda_k}}{2\pi}\S^1 \right)
\end{equation}
the set of \emph{wave vectors} associated with~$\lambda_k$, where $\S^1$ denotes the unit circle in $\R^2$. Note that $\Lambda_0 = \{0\}$ and $\L^*_\alpha = \bigsqcup_{k \geq 0} \Lambda_k$. The eigenspace $\ker(\Delta - \lambda_k \Id)$ is spanned by the functions $\left\{ e^{2i\pi \prsc{\xi}{\cdot}} \mvert \xi \in \Lambda_k \right\}$, where $\prsc{\cdot}{\cdot}$ stands for the canonical Euclidean inner product on $\R^2$, and the multiplicity of $\lambda_k$ is $r_k = \dim \ker\left(\Delta - \lambda_k \Id \right) = \card \Lambda_k$.

\begin{rem}
\label{rem multiplicities}
We have $r_0=1$. Besides, $\Lambda_k$ is invariant under the reflections $(x_1,x_2) \mapsto (-x_1,x_2)$ and $(x_1,x_2)\mapsto (x_1,-x_2)$. Hence $r_k$ is even if $k >0$, and $r_k \in 4 \N$ unless $\Lambda_k \cap (\R \times \{0\} \cup \{0\} \times \R) \neq \emptyset$. When $\alpha =1$, the set $\Lambda_k$ is moreover invariant by $(x_1,x_2) \mapsto (x_2,x_1)$, so that $r_k \in 4\N$ for any $k >0$ and $r_k \in 8\N$ generically.
\end{rem}

Let $\varphi \in (-\pi,\pi)$, in the remainder of this section, we describe the eigenvalues and eigenfunctions of $\Delta_\varphi$. See~\cite[Sect.~3 and App.~A]{RU2012} for more details. For any $k \geq 1$, $\lambda_k$ is an eigenvalue of $\Delta_\varphi$ with multiplicity $r_k -1$, and the associated eigenspace is:
\begin{equation*}
\ker\left(\Delta_\varphi - \lambda_k \Id\right) = \left\{ \phi \in \ker\left(\Delta - \lambda_k \Id\right) \mvert \phi(0) = 0 \right\}.
\end{equation*}
In addition to the old eigenvalues inherited from $\Delta$, there exists an increasing sequence $(\tau_k^\varphi)_{k \geq 0}$ of new simple real eigenvalues of $\Delta_\varphi$. That is $\Sp(\Delta_\varphi) = \left\{\lambda_k \mvert k \geq 1 \right\} \cup \left\{\tau_k^\varphi \mvert k \geq 0 \right\}$. Moreover, the sequences $(\lambda_k)_{k \geq 0}$ and $(\tau_k^\varphi)_{k \geq 0}$ are interlaced, in the sense that:
\begin{equation*}
\tau_0^\varphi < 0=\lambda_0 < \tau_1^\varphi < \lambda_1 < \tau_2^\varphi < \cdots < \lambda_{k-1} < \tau_k^\varphi < \lambda_k < \cdots .
\end{equation*}
The new eigenvalues of $\Delta_\varphi$ are characterized by the following spectral equation. Let $\tau \in \R \setminus \Sp(\Delta)$, then $\tau \in \Sp(\Delta_\varphi)$ if and only if:
\begin{equation}
\label{eq spectral equation}
\sum_{k \geq 0} r_k \left(\frac{1}{\lambda_k - \tau} - \frac{\lambda_k}{\lambda_k^2+1}\right) = \tan\left(\frac{\varphi}{2}\right) \sum_{k \geq 0} \frac{r_k}{\lambda_k^2 +1}.
\end{equation}
If $\tau \in \R \setminus \Sp(\Delta)$ is a solution of Eq.~\eqref{eq spectral equation}, then $\ker\left(\Delta_\varphi - \tau \Id\right)$ is spanned by the function $G_\tau$ defined below (Def.~\ref{def phi k G tau f tau}). We prove that $G_\tau \in L^2(\T_\alpha)$ in Lem.~\ref{lem CV in Lp}.

\begin{dfns}
\label{def phi k G tau f tau}
For any $k \in \N$, we denote by $\phi_k$ the function:
\begin{equation*}
\begin{array}{cccc}
\phi_k: & \T_\alpha & \longrightarrow & \R\\
& x & \longmapsto & \displaystyle\sum_{\xi \in \Lambda_k} e^{2i\pi \prsc{\xi}{x}},
\end{array}
\end{equation*}
where $\Lambda_k$ is defined by Eq.~\eqref{eq def Lambda k}. Moreover, for any $\tau \in \R \setminus \Sp(\Delta)$, we define two function $G_\tau: \T_\alpha \to \R$ and $f_\tau: \T_\alpha \to \R$ by:
\begin{align*}
G_\tau &= \sum_{k \geq 0} \frac{\phi_k}{\lambda_k - \tau} & &\text{and} & f_\tau & = G_\tau + \frac{1}{\tau} = \sum_{k \geq 1} \frac{\phi_k}{\lambda_k - \tau}.
\end{align*}
\end{dfns}

\begin{rem}
\label{rem phi k G tau f tau}
The sets $\Lambda_k$ are invariant by $(x_1,x_2) \mapsto (-x_1,-x_2)$. Hence, $\phi_k$ is indeed real-valued for any $k \in \N$, and $G_\tau$ and $f_\tau$ are real-valued for any $\tau \in \R \setminus \Sp(\Delta)$.
\end{rem}

In the following, we study the asymptotics as $\tau \to +\infty$ of the moments of $f_\tau$ (resp.~$G_\tau$). We will consider any $\tau \in \R \setminus \Sp(\Delta)$, not only the solutions of Eq.~\eqref{eq spectral equation}. This way, the problem no longer depends on the parameter $\varphi \in (-\pi,\pi)$, that is on the choice of a self-adjoint extension of~$\Delta_{\vert D_0}$. First, we must check that $f_\tau$ (resp.~$G_\tau$) is a well-defined element of $L^p(\T_\alpha)$, for any positive integer $p$. This is the purpose of Lem.~\ref{lem CV in Lp} below.

\begin{lem}
\label{lem CV in Lp}
Let $p \in \N^*$, for any $\tau \in \R \setminus \Sp(\Delta)$, we have $f_\tau \in L^p(\T_\alpha)$ (resp.~$G_\tau  \in L^p(\T_\alpha)$). More precisely, the truncated series $\displaystyle\sum_{k=1}^n \frac{\phi_k}{\lambda_k - \tau}$ converge in $L^p(\T_\alpha)$ as $n \to +\infty$.
\end{lem}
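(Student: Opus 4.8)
The plan is to prove the $L^p$ convergence by estimating the $L^p$-norm of a tail of the series $\sum_k \frac{\phi_k}{\lambda_k-\tau}$, using the fact that $\phi_k$ is a sum of $r_k$ complex exponentials indexed by distinct frequencies in $\L_\alpha^*$. Since $p$ is a positive integer, $\Norm{\phi_k}_{L^p(\T_\alpha)}$ can be controlled via the Hausdorff--Young inequality (or, more elementarily, by noting that $\phi_k^{\lceil p/2\rceil}$ has an $L^2$-norm that counts solutions of a bounded number of lattice equations). Concretely, for $q=2\lceil p/2\rceil \geq p$ we have $\Norm{\phi_k}_{L^p} \leq \Norm{\phi_k}_{L^q}$ by Jensen (the torus has unit area), and $\Norm{\phi_k}_{L^q}^q = \int_{\T_\alpha} \phi_k(x)^q\dx x$ counts $q$-tuples $(\xi_1,\dots,\xi_q)\in\Lambda_k^q$ with $\sum_j \xi_j = 0$; since each $\xi_j$ lies on the circle of radius $\frac{\sqrt{\lambda_k}}{2\pi}$, this count is $O_q\big(r_k^{q/2}\,(\log\lambda_k)^{C}\big)$ by a standard divisor-type bound, or more crudely $O_q(r_k^{q-1})$, and in any case it is polynomially bounded in $\lambda_k$ because $r_k = O(\lambda_k^\epsilon)$. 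Thus $\Norm{\phi_k}_{L^p} = O(\lambda_k^{1/4+\epsilon})$, say.

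The second ingredient is the Weyl Law (Thm.~\ref{thm Weyl Law}): $\mathcal{N}(\lambda) = \frac{\lambda}{4\pi} + O(\sqrt\lambda)$. Summing $r_k$ over $\lambda_k$ in a dyadic block $[2^j, 2^{j+1})$ gives $O(2^j)$, so the number of distinct eigenvalues in that block is $O(2^j)$ as well. I would then estimate, for $N$ large and $\tau$ fixed,
\begin{equation*}
\Norm{\sum_{k=N}^{M} \frac{\phi_k}{\lambda_k-\tau}}_{L^p(\T_\alpha)} \leq \sum_{k=N}^{M} \frac{\Norm{\phi_k}_{L^p}}{\norm{\lambda_k-\tau}} \leq C_\tau \sum_{k\geq N} \frac{\lambda_k^{1/4+\epsilon}}{\lambda_k},
\end{equation*}
valid once $N$ is large enough that $\lambda_N > 2\tau$ (so $\norm{\lambda_k-\tau}\geq \frac12\lambda_k$). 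Organizing the last sum by dyadic blocks in $\lambda_k$ and using the Weyl count $O(2^j)$ of terms in block $j$, the block sum is $O(2^j \cdot 2^{j(-3/4+\epsilon)}) = O(2^{j(1/4+\epsilon)})$ — which does \emph{not} converge. So the naive triangle-inequality bound is too lossy, and this is the main obstacle: one needs to exploit cancellation, not just absolute convergence.

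To get genuine convergence I would instead work in $L^2$ first (where exponentials are orthogonal) and then bootstrap. In $L^2(\T_\alpha)$ one has exactly $\Norm{\sum_{k=N}^M \frac{\phi_k}{\lambda_k-\tau}}_{L^2}^2 = \sum_{k=N}^M \frac{r_k}{(\lambda_k-\tau)^2}$, and by the Weyl Law this tail sum is $\sum_{k\geq N}\frac{r_k}{(\lambda_k-\tau)^2} = O\!\big(\sum_{j: 2^j\geq \lambda_N} 2^j\,2^{-2j}\big) = O(\lambda_N^{-1}) \to 0$; hence $f_\tau, G_\tau \in L^2$ and the truncated series converge in $L^2$. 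For general $p$, write $S_N = \sum_{k=1}^N \frac{\phi_k}{\lambda_k-\tau}$; then $S_N - S_M$ is a trigonometric polynomial whose Fourier coefficients are supported on $\L_\alpha^*$ and bounded in modulus by $\frac{1}{\norm{\lambda_k-\tau}} \leq \frac{2}{\lambda_k}$ on the $k$-th shell, so its $\ell^1$ norm of Fourier coefficients is $\sum_{k=M+1}^N \frac{r_k}{\norm{\lambda_k-\tau}} = O(\log\lambda_{M})$ — bounded but not small — which controls $\Norm{S_N-S_M}_{L^\infty}$ only weakly. A clean alternative is to interpolate: for $p\geq 2$ write $\frac1p = \frac{1-\theta}{2} + \frac{\theta}{\infty}$ with $\theta = 1-\frac2p$, so $\Norm{S_N-S_M}_{L^p} \leq \Norm{S_N-S_M}_{L^2}^{1-\theta}\Norm{S_N-S_M}_{L^\infty}^{\theta}$; since the $L^\infty$-norms are $O(\log\lambda_M)$ (hence polynomially bounded, in fact $o(\lambda_M^\delta)$ for every $\delta>0$) and the $L^2$-norms are $O(\lambda_M^{-1/2})$, the product is $O(\lambda_M^{-(1-\theta)/2 + \delta\theta})\to 0$ for $\delta$ small. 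This shows $(S_N)$ is Cauchy in $L^p(\T_\alpha)$ for every $p\geq 2$, hence converges there; for $1\leq p < 2$, convergence in $L^p$ follows from convergence in $L^2$ and finite measure of $\T_\alpha$. The limit is $f_\tau$ (resp.\ $G_\tau$), which therefore lies in $L^p(\T_\alpha)$. The delicate point throughout is the $L^\infty$ (equivalently $\ell^1$-Fourier) bound on the partial sums: one must use $\sum_{\lambda_k\leq\lambda} r_k/\lambda_k = O(\log\lambda)$, which is immediate from Abel summation against the Weyl Law, and this logarithmic growth is harmless after interpolation against the decaying $L^2$-bound.
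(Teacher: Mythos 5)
Your proposal correctly identifies that the naive term-by-term triangle inequality is too lossy, and your $L^2$ calculation is sound, but the bootstrap via $L^2$--$L^\infty$ interpolation has a fatal flaw: the $L^\infty$ bound you invoke is false. You claim
\begin{equation*}
\Norm{S_N-S_M}_{L^\infty} \leq \sum_{k=M+1}^N \frac{r_k}{\norm{\lambda_k-\tau}} = O(\log\lambda_M),
\end{equation*}
but Abel summation against the Weyl Law $\mathcal{N}(\lambda)\sim\frac{\lambda}{4\pi}$ gives $\sum_{k\leq N}\frac{r_k}{\lambda_k}\sim\frac{1}{4\pi}\log\lambda_N$, so the partial sum on the left is $\asymp\log\lambda_N-\log\lambda_M$, which is \emph{unbounded} as $N\to+\infty$. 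Your interpolated estimate therefore reads $O\left(\lambda_M^{-(1-\theta)/2}(\log\lambda_N)^\theta\right)$, which does not tend to $0$ uniformly in $N$, so the Cauchy criterion is not established. This is not just a technical slip: $f_\tau$ is genuinely not in $L^\infty(\T_\alpha)$ — the series diverges at $x=0$ — so no argument relying on an $L^\infty$ (or $\ell^1$-Fourier) bound of the partial sums can work.

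The paper's proof avoids this by never appealing to $L^\infty$. Instead it groups the series into dyadic blocks $h_{\lambda,\lambda'} = \sum_{\lambda<\lambda_k\leq\lambda'}\frac{\phi_k}{\lambda_k-\tau}$ with $\lambda'\in[\lambda,2\lambda]$, uses Jensen to reduce to $\Norm{h_{\lambda,\lambda'}}_{2p}$, and then expands $\Norm{h_{\lambda,\lambda'}}_{2p}^{2p}$ explicitly: it equals a sum over $2p$-tuples of lattice frequencies in the annulus $A(\lambda,\lambda')$ whose sum vanishes, weighted by $\prod_j(4\pi^2\Norm{\xi_j}^2-\tau)^{-1}\sim\lambda^{-2p}$. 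The crucial combinatorial observation is that the number of such $2p$-tuples is at most $\card(A(\lambda,\lambda'))^{2p-1}$, since the last frequency is determined by the previous $2p-1$; together with the Weyl bound $\card(A(\lambda,\lambda'))\leq\frac{\lambda}{2}$ this gives $\Norm{h_{\lambda,\lambda'}}_{2p}^{2p}\leq\lambda^{-1}$, hence $\Norm{h_{\lambda,\lambda'}}_p=O(\lambda^{-1/(2p)})$, which is summable over the dyadic scale. The key structural point you are missing is this block-level gain: within one dyadic annulus the denominators are comparable, so the cancellation coming from the vanishing-sum condition can be harvested uniformly, which is exactly what interpolation against $L^\infty$ cannot do. If you want to repair your approach, replace the $L^\infty$ endpoint by a direct $L^{2p}$ estimate of this type on dyadic blocks.
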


\begin{rems}
\label{rems CV in Lp}
\begin{itemize}
\item The proof is adapted from~\cite{KU2019}, where this result is proved for $p=4$.
\item For $p=2$, it is possible to give a simpler proof, using the Hilbert space structure of $L^2(\T_\alpha)$.
\item The series defining $f_\tau$ does not converge absolutely in $L^p(\T_\alpha)$ in general, even for $p = 2$.
\end{itemize}
\end{rems}

\begin{proof}
Let us fix $p \in \N^*$ and $\tau \in \R \setminus \Sp(\Delta)$. Let $\lambda >0$ and let $\lambda' \in [\lambda,2\lambda]$, we denote by $A(\lambda,\lambda') = \left\{\xi \in \L_\alpha^* \mvert \lambda < 4\pi^2\Norm{\xi}^2 \leq \lambda' \right\}$. This set is finite, and we denote by:
\begin{equation}
\label{eq def hTS}
h_{\lambda,\lambda'} = \sum_{\left\{k \geq 1 \mvert \lambda < \lambda_k \leq \lambda' \right\}} \frac{\phi_k}{\lambda_k - \tau} = \sum_{\xi \in A(\lambda,\lambda')} \frac{e^{2i\pi \prsc{\xi}{\cdot}}}{4\pi^2 \Norm{\xi}^2 - \tau}.
\end{equation}
Since $\T_\alpha$ has unit area, by Jensen's inequality we have $\Norm{h_{\lambda,\lambda'}}_p \leq \Norm{h_{\lambda,\lambda'}}_{2p}$, where $\Norm{\cdot}_p$ stands for the $L^p$ norm associated with the Lebesgue measure on $\T_\alpha$. Then, $\Norm{h_{\lambda,\lambda'}}_{2p}^{2p}$ equals
\begin{equation*}
\int_{\T_\alpha} h_{\lambda,\lambda'}(x)^{2p} \dx x = \sum_{\xi_1,\dots,\xi_{2p} \in A(\lambda,\lambda')} \left(\prod_{j=1}^{2p} (4\pi^2\Norm{\xi_j}^2-\tau)\right)^{-1} \int_{\T_\alpha} \exp\left(2i\pi \prsc{\sum_{j=1}^{2p} \xi_j}{x}\right) \dx x.
\end{equation*}
For any $\xi_1,\dots,\xi_{2p} \in A(\lambda,\lambda')$, we have:
\begin{equation*}
\int_{\T_\alpha} \exp\left(2i\pi \prsc{\sum_{j=1}^{2p} \xi_j}{x}\right) \dx x = \left\{\begin{aligned}
&1, & &\text{if } \sum_{j=1}^{2p} \xi_j=0,\\
&0, & &\text{otherwise.}
\end{aligned} \right.
\end{equation*}
Moreover, as $\lambda \to +\infty$, we have $\prod_{j=1}^{2p} (4\pi^2\Norm{\xi_j}^2-\tau) \sim \prod_{j=1}^{2p} 4\pi^2\Norm{\xi_j}^2 \geq \lambda^{2p}$. So that, for any $\lambda$ large enough, $\prod_{j=1}^{2p} (4\pi^2\Norm{\xi_j}^2-\tau) \geq \frac{1}{2}\lambda^{2p}$. Besides,
\begin{equation*}
\card \left\{(\xi_1,\dots,\xi_{2p}) \in A(\lambda,\lambda')^{2p} \mvert \sum_{j=1}^{2p} \xi_j = 0\right\} \leq \card(A(\lambda,\lambda'))^{2p-1}.
\end{equation*}
By Weyl's Law (Thm.~\ref{thm Weyl Law}), for any $\lambda$ large enough and $\lambda' \in [\lambda,2\lambda]$, we have $\card(A(\lambda,\lambda')) \leq \frac{\lambda}{2}$. Hence $\Norm{h_{\lambda,\lambda'}}_{2p}^{2p} \leq \lambda^{-1}$ and $\Norm{h_{\lambda,\lambda'}}_p \leq \lambda^{-\frac{1}{2p}}$. In particular, this shows that, for any $p \in \N^*$, for any $\tau \in \R \setminus \Sp(\Delta)$, we have:
\begin{equation}
\label{eq estimate hTS}
\Norm{h_{\lambda,\lambda'}}_p = O\left(\lambda^{-\frac{1}{2p}}\right)
\end{equation}
as $\lambda \to +\infty$, where the constant involved in the $O$ notation does not depend on $\lambda' \in [\lambda,2\lambda]$.

Let $n \geq 0$ and let us denote by $L = \lfloor \log_2(\lambda_n) \rfloor$. We have:
\begin{align*}
\Norm{\sum_{k > n} \frac{\phi_k}{\lambda_k-\lambda}}_p &= \Norm{\sum_{l \geq L}\sum_{\left\{k \geq 1 \mvert 2^l < \lambda_k \leq 2^{l+1} \right\}} \frac{\phi_k}{\lambda_k - \tau} - \sum_{\left\{k \geq 1 \mvert 2^L < \lambda_k \leq \lambda_n \right\}} \frac{\phi_k}{\lambda_k-\tau}}_p\\
&\leq \Norm{h_{2^L,\lambda_n}}_p + \sum_{l \geq L} \Norm{h_{2^l,2^{l+1}}}_p\\
&= O\left(\left(2^L\right)^{-\frac{1}{2p}}\right)=O\left(\left(\lambda_n\right)^{-\frac{1}{2p}}\right)\xrightarrow[n \to +\infty]{} 0.
\end{align*}
Thus, the sequence $\left(\sum_{k > n} \frac{\phi_k}{\lambda_k-\tau}\right)_{n \geq 0}$ converges in $L^p(\T_\alpha)$, hence is a Cauchy sequence. Therefore $\left(\sum_{k =1}^n \frac{\phi_k}{\lambda_k-\tau}\right)_{n \geq 0}$ is also a Cauchy sequence, and converges in $L^p(\T_\alpha)$.
\end{proof}


\subsection{Moments of the new eigenfunctions}
\label{subsec moments of the new eigenfunctions}

In this section, we study the central moments of the new eigenfunctions of a point scatterer at $0$ in $\T_\alpha$, where $\alpha >0$. More generally, we study the moments of the functions $f_\tau$, for $\tau \in \R \setminus \Sp(\Delta)$ (see Def.~\ref{def phi k G tau f tau}). The main point is the proof of Prop.~\ref{prop deterministic expression M p tau}, in which we derive an expression of these moments. The content of this section roughly corresponds to Step~1 in Sect.~\ref{subsec random model and results}.

\begin{lem}
\label{lem expectation variance}
For any $\tau \in \R\setminus \Sp(\Delta)$, we have:
\begin{align*}
\int_{\T_\alpha} f_\tau(x) \dx x &= 0, & \int_{\T_\alpha} f_\tau(x)^2 \dx x &= \sum_{k \geq 1} \frac{r_k}{(\lambda_k-\tau)^2},\\
\int_{\T_\alpha} G_\tau(x) \dx x &= -\frac{1}{\tau}, & \int_{\T_\alpha} G_\tau(x)^2 \dx x &= \sum_{k \geq 0} \frac{r_k}{(\lambda_k-\tau)^2},
\end{align*}
where, as before, $(\lambda_k)_{k \geq 0}$ is the increasing sequence of distinct eigenvalues of $\Delta$ and, for any $k \geq 0$, $r_k$ is the multiplicity of $\lambda_k$.
\end{lem}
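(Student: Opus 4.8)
The plan is to reduce everything to the $L^2$-convergence of the defining series, provided by Lem.~\ref{lem CV in Lp}, together with the orthonormality of the trigonometric characters $\left(e^{2i\pi\prsc{\xi}{\cdot}}\right)_{\xi \in \L_\alpha^*}$ in $L^2(\T_\alpha)$. Recall that $\T_\alpha$ has unit area, so $L^2(\T_\alpha) \subset L^1(\T_\alpha)$ and convergence in $L^2$ implies convergence in $L^1$; in particular the partial sums $\sum_{k=1}^n \frac{\phi_k}{\lambda_k-\tau}$ converge to $f_\tau$ in both spaces.

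First I would record the moments of the building blocks $\phi_k$. Using $\int_{\T_\alpha} e^{2i\pi\prsc{\xi}{x}}\dx x = 1$ if $\xi = 0$ and $0$ otherwise (valid for $\xi \in \L_\alpha^*$), expanding $\phi_k = \sum_{\xi\in\Lambda_k} e^{2i\pi\prsc{\xi}{\cdot}}$ gives $\int_{\T_\alpha}\phi_0(x)\dx x = 1$ and $\int_{\T_\alpha}\phi_k(x)\dx x = 0$ for $k \geq 1$, since $0 \notin \Lambda_k$. Likewise, since each $\phi_k$ is real-valued (Rem.~\ref{rem phi k G tau f tau}) and the sets $\Lambda_k$ are pairwise disjoint, one computes $\prsc{\phi_k}{\phi_{k'}}_{L^2} = \int_{\T_\alpha}\phi_k\phi_{k'} = \card(\Lambda_k \cap \Lambda_{k'})$, which is $0$ when $k \neq k'$ and equals $r_k = \card\Lambda_k$ when $k = k'$; in particular $\Norm{\phi_k}_2^2 = r_k$.

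Then I would pass to the limit. Continuity of $g \mapsto \int_{\T_\alpha} g$ on $L^1(\T_\alpha)$ yields $\int_{\T_\alpha} f_\tau(x)\dx x = \lim_n \sum_{k=1}^n \frac{1}{\lambda_k-\tau}\int_{\T_\alpha}\phi_k(x)\dx x = 0$. For the second moment, $f_\tau$ being real gives $\int_{\T_\alpha}f_\tau(x)^2\dx x = \Norm{f_\tau}_2^2$, and continuity of the norm under $L^2$-convergence together with the orthogonality of the $\phi_k$ gives $\Norm{f_\tau}_2^2 = \lim_n \Norm{\sum_{k=1}^n\tfrac{\phi_k}{\lambda_k-\tau}}_2^2 = \lim_n \sum_{k=1}^n \frac{r_k}{(\lambda_k-\tau)^2} = \sum_{k\geq1}\frac{r_k}{(\lambda_k-\tau)^2}$, the series being convergent by Lem.~\ref{lem CV in Lp}. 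Finally, the formulas for $G_\tau$ follow at once from $G_\tau = f_\tau - \frac1\tau$: indeed $\int_{\T_\alpha}G_\tau = -\frac1\tau$, and $\int_{\T_\alpha}G_\tau^2 = \int_{\T_\alpha}f_\tau^2 + \frac1{\tau^2} = \sum_{k\geq1}\frac{r_k}{(\lambda_k-\tau)^2} + \frac{r_0}{(\lambda_0-\tau)^2} = \sum_{k\geq0}\frac{r_k}{(\lambda_k-\tau)^2}$, using $\lambda_0 = 0$ and $r_0 = 1$.

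There is no genuine difficulty here; the only point that requires care is the interchange of the infinite summation with the integral (respectively with the squared $L^2$-norm), and this is exactly what the $L^2$-convergence in Lem.~\ref{lem CV in Lp} is meant to license. Everything else is the standard Parseval-type computation for characters on the torus. If one wishes to keep this lemma self-contained, one may alternatively observe directly that $\left(\sum_{k=1}^n \frac{\phi_k}{\lambda_k-\tau}\right)_n$ is a Cauchy sequence in $L^2(\T_\alpha)$ because the $\phi_k$ are orthogonal and $\sum_{k\geq1}\frac{r_k}{(\lambda_k-\tau)^2} < +\infty$ (the latter following from Weyl's Law as in the proof of Lem.~\ref{lem CV in Lp}), which proves the $L^2$ part without reference to Lem.~\ref{lem CV in Lp}.
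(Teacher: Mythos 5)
Your proof is correct and follows essentially the same route as the paper: expand in the orthonormal system of characters, compute $\int_{\T_\alpha} f_\tau = \prsc{f_\tau}{\mathbf{1}} = 0$, and apply Parseval's identity for the second moment, with the $L^2$-convergence of Lem.~\ref{lem CV in Lp} licensing the interchanges. The only cosmetic difference is that you deduce the $G_\tau$ identities directly from $G_\tau = f_\tau - \frac{1}{\tau}$ together with $\lambda_0 = 0$, $r_0 = 1$, whereas the paper simply remarks that the $G_\tau$ computation is analogous; your route is marginally tidier but not a different argument.
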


\begin{proof}
We give the proof in the case of $f_\tau$, the proof is similar for $G_\tau$. Recall that the family $\left\{e^{2i\pi \prsc{\xi}{\cdot}}\mvert \xi \in \L^*_\alpha \right\}$ is a Hilbert basis of $L^2(\T_\alpha)$. Recall also that:
\begin{equation*}
f_\tau = \sum_{k \geq 1} \sum_{\xi \in \Lambda_k} \frac{e^{2i\pi \prsc{\xi}{\cdot}}}{\lambda_k-\tau}.
\end{equation*}
Denoting by $\prsc{\cdot}{\cdot}$ the $L^2$ inner product on $L^2(\T_\alpha)$ and by $\mathbf{1}$ the constant unit function, we have:
\begin{equation*}
\int_{\T_\alpha} f_\tau(x) \dx x = \prsc{f_\tau}{\mathbf{1}} = 0.
\end{equation*}
Besides, by Parseval's Identity,
\begin{equation*}
\int_{\T_\alpha} f_\tau(x)^2 \dx x = \Norm{f_\tau}_2^2 = \sum_{k \geq 1} \frac{\card(\Lambda_k)}{(\lambda_k-\tau)^2} = \sum_{k \geq 1} \frac{r_k}{(\lambda_k-\tau)^2}.\qedhere
\end{equation*}
\end{proof}

As we already explained in Sect.~\ref{subsec random model and results}, we are interested in the central moments of the random variable $G_\tau(X)$, where $X$ is a uniform random variable in $(\T_\alpha, \dx x)$. With this probabilistic point of view, Lem.~\ref{lem CV in Lp} states that, for any $p \in \N^*$ and $\tau \in \R \setminus \Sp(\Delta)$, both $G_\tau(X)$ and $f_\tau(X)$ are $L^p$ random variables. Moreover, by Lem.~\ref{lem expectation variance}, the expectation of $G_\tau(X)$ equals $-\frac{1}{\tau}$. Hence, $f_\tau(X) = G_\tau(X) + \frac{1}{\tau}$ is the centered random variable associated with $G_\tau(X)$. In particular, the central moments of $G_\tau(X)$ are simply the corresponding moments of $f_\tau(X)$. In the following, we focus on studying the moments of $f_\tau(X)$.

\begin{dfn}
\label{def M p tau}
For any $p \in \N^*$ and any $\tau \in \R \setminus \Sp(\Delta)$, we denote by $M^p_\tau = \int_{\T_\alpha} f_\tau(x)^p \dx x$, the $p$-th \emph{moment} of $f_\tau$.
\end{dfn}

Note that $f_\tau$ is not $L^2$-normalized. Indeed, $M^2_\tau = \Norm{f_\tau}_2^2$ is given by Lem.~\ref{lem expectation variance}, and is not equal to $1$ in general. We are ultimately interested in studying the asymptotics, as $\tau \to +\infty$, of the normalized moments $(M^2_\tau)^{-\frac{p}{2}}M^p_\tau$, for $p \in \N^*$, that is the moments of the normalized function $\Norm{f_\tau}_2^{-1}f_\tau$. It is more convenient to work with the sequence $(M^p_\tau)_{p \geq 1}$ for now and only consider the normalized moments later on.  

We conclude this section with the proof of Prop.~\ref{prop deterministic expression M p tau}.

\begin{proof}[Proof of Prop.~\ref{prop deterministic expression M p tau}]
Let $p \in \N^*$ and let $\tau \in \R \setminus \Sp(\Delta)$. For any $l \in \Z$, let us denote by $A_l$ the finite set $A(2^l,2^{l+1}) = \left\{\xi \in \L_\alpha^* \mvert 2^l < 4\pi^2\Norm{\xi}^2 \leq 2^{l+1} \right\}$ and by
\begin{equation*}
h_l = h_{2^l,2^{l+1}} = \sum_{\left\{k \geq 1 \mvert 2^l < \lambda_k \leq 2^{l+1} \right\}} \frac{\phi_k}{\lambda_k - \tau} = \sum_{\xi \in A_l} \frac{e^{2i\pi \prsc{\xi}{\cdot}}}{4\pi^2 \Norm{\xi}^2 - \tau}.
\end{equation*}
Note that these functions already appeared in the proof of Lem.~\ref{lem CV in Lp} (see Eq.~\eqref{eq def hTS}).

Let us denote by $L = \lfloor \log_2(\lambda_1) \rfloor -1$. Recalling Def.~\ref{def phi k G tau f tau}, we have $f_\tau = \sum_{l \geq L} h_l$, where the sum converges absolutely in $L^p(\T_\alpha)$ by Eq.~\eqref{eq estimate hTS}. Then, we have:
\begin{equation*}
M^p_\tau = \int_{\T_\alpha} \left(\sum_{l \geq L} h_l(x)\right)^p \dx x = \int_{\T_\alpha} \sum_{l_1,\dots,l_p \geq L} \left(\prod_{j=1}^p h_{l_j}(x)\right) \dx x = \sum_{l_1,\dots,l_p \geq L} \int_{\T_\alpha} \prod_{j=1}^p h_{l_j}(x) \dx x.
\end{equation*}
We obtain the last equality by exchanging the integral and the sum. This is possible because:
\begin{align*}
\sum_{l_1,\dots,l_p \geq L} \int_{\T_\alpha} \norm{\prod_{j=1}^p h_{l_j}(x)} \dx x = \sum_{l_1,\dots,l_p \geq L} \Norm{\prod_{j=1}^p h_{l_j}}_1 \leq \sum_{l_1,\dots,l_p \geq L} \prod_{j=1}^p \Norm{h_{l_j}}_p = \left(\sum_{l \geq L} \Norm{h_l}_p\right)^p < +\infty,
\end{align*}
where the middle inequality is Hölder's, and the finiteness of the last term is a consequence of Eq.~\eqref{eq estimate hTS}, as we already explained. For any $l_1,\dots,l_p \geq L$, we have:
\begin{equation*}
\prod_{j=1}^p h_{l_j} = \prod_{j=1}^p \sum_{\xi \in A_{l_j}} \frac{e^{2i\pi\prsc{\xi}{\cdot}}}{4\pi^2\Norm{\xi}^2-\tau} = \sum_{(\xi_1,\dots,\xi_p) \in A_{l_1}\times \dots\times A_{l_p}} \left( \prod_{j=1}^p \frac{e^{2i\pi\prsc{\xi_j}{\cdot}}}{4\pi^2\Norm{\xi_j}^2-\tau}\right).
\end{equation*}
Since these sums are indexed by finite sets, we have:
\begin{equation*}
M^p_\tau = \sum_{l_1,\dots,l_p \geq L} \int_{\T_\alpha} \prod_{j=1}^p h_{l_j}(x) \dx x = \sum_{l_1,\dots,l_p \geq L} \ \sum_{(\xi_1,\dots,\xi_p) \in A_{l_1}\times \dots\times A_{l_p}} \int_{\T_\alpha} \prod_{j=1}^p \frac{e^{2i\pi\prsc{\xi_j}{x}}}{4\pi^2\Norm{\xi_j}^2-\tau} \dx x.
\end{equation*}
Then,
\begin{equation*}
\bigsqcup_{l \geq L} A_l = \left\{\xi \in \L^*_\alpha \mvert 2^L \leq 4\pi^2 \Norm{\xi}^2 \right\} =  \L^*_\alpha \setminus \{0\} = \bigsqcup_{k \geq 1} \Lambda_k,
\end{equation*}
where the second equality comes from the fact that $\lambda_1 = \min\left\{4\pi^2\Norm{\xi}^2 \mvert \xi \in \L^*_\alpha \setminus \{0\} \right\}$. Hence,
\begin{align*}
M^p_\tau &= \sum_{\xi_1,\dots,\xi_p \in  \L^*_\alpha \setminus \{0\}} \int_{\T_\alpha} \prod_{j=1}^p \frac{e^{2i\pi\prsc{\xi_j}{x}}}{4\pi^2\Norm{\xi_j}^2-\tau} \dx x\\
&= \sum_{k_1,\dots,k_p \geq 1} \ \sum_{(\xi_1,\dots,\xi_p) \in \Lambda_{k_1}\times \dots\times \Lambda_{k_p}} \int_{\T_\alpha} \prod_{j=1}^p \frac{e^{2i\pi\prsc{\xi_j}{x}}}{4\pi^2\Norm{\xi_j}^2-\tau} \dx x\\
&= \sum_{k_1,\dots,k_p \geq 1} \left(\prod_{j=1}^p \frac{1}{\lambda_{k_j}-\tau}\right) \sum_{(\xi_1,\dots,\xi_p) \in \Lambda_{k_1}\times \dots\times \Lambda_{k_p}} \int_{\T_\alpha} \exp\left(2i\pi\prsc{\sum_{j=1}^p\xi_j}{x} \right)\dx x\\
&= \sum_{k_1,\dots,k_p \geq 1} \left(\prod_{j=1}^p \frac{1}{\lambda_{k_j}-\tau}\right) \card \left\{(\xi_1,\dots,\xi_p) \in \prod_{j=1}^p \Lambda_{k_j} \mvert \sum_{j=1}^p \xi_j=0\right\}.
\end{align*}

Let $k_1,\dots,k_p \geq 1$, for any $k \geq 1$, we denote by $a_k = \card\left\{j \in \{1,\dots,p\} \mvert k_j = k\right\}$. Then $a=(a_k)_{k \geq 1} \in \ell_0$ and $\norm{a}=p$. Moreover,
\begin{equation*}
\prod_{j=1}^p \frac{1}{\lambda_{k_j}-\tau} = \prod_{k \geq 1} \left(\frac{1}{\lambda_k -\tau}\right)^{a_k},
\end{equation*}
and, recalling Def.~\ref{def N a} and reordering the terms of the Cartesian product, we have:
\begin{equation*}
\card \left\{(\xi_1,\dots,\xi_p) \in \prod_{j=1}^p \Lambda_{k_j} \mvert \sum_{j=1}^p \xi_j=0\right\} = N_a.
\end{equation*}
Conversely, let $a= (a_k)_{k\geq 1} \in \ell_0$ be such that $\norm{a}=p$. The choice of $k_1,\dots,k_p \geq 1$ such that, for any $k \geq 1$, $a_k = \card\left\{j \in \{1,\dots,p\} \mvert k_j = k\right\}$ corresponds to the choice of a partition of $\{1,\dots,p\}$ into subsets of cardinalities $a_0,a_1,\dots,a_k,\dots$. The number of such partitions is $p!(a!)^{-1}$. Thus,
\begin{equation*}
M^p_\tau = p! \sum_{a \in \ell_0; \norm{a}=p} \frac{N_a}{a!}\prod_{k \geq 1} \left(\frac{1}{\lambda_k -\tau}\right)^{a_k}.\qedhere
\end{equation*}
\end{proof}


\section{Random model for the moments of the new eigenfunctions}
\label{sec random model for the moments of the new eigenfunctions}

In this section, we study the random model introduced in Sect.~\ref{subsec random model and results} for the central moments of the new eigenfunctions of a point scatterer on a rectangular flat torus $\T_\alpha$, where $\alpha >0$. We start from the expression of these moments derived in Prop.~\ref{prop deterministic expression M p tau}, and we randomize first the wave vectors sets $(\Lambda_k)_{k \geq 1}$, then the spectrum $(\lambda_k)_{k \geq 1}$ of $\Delta$. The content of Sect.~\ref{subsec randomization of the wave vectors} corresponds to what we called Step~2 in the Introduction (Sect.~\ref{subsec random model and results}), that is the randomization of the wave vectors. The main goal of Sect.~\ref{subsec randomization of the wave vectors} is the proof of Prop.~\ref{prop as expression of M p tau}. In Sect.~\ref{subsec randomization of the spectrum of the Laplacian}, we randomize the spectrum of the Laplacian (Step~3 in Sect.~\ref{subsec random model and results}). In Sect.~\ref{subsec limit distribution of the spectral sums}, we study the limit distribution of the spectral sums introduced in Def.~\ref{def S p lambda}. Finally, we prove our main result (Thm.~\ref{thm main}) in Sect.~\ref{subsec proof of the main theorem}.


\subsection{Randomization of the wave vectors}
\label{subsec randomization of the wave vectors}

This section corresponds to the second step in the construction of our random model, that is the randomization of the sets of wave vectors $(\Lambda_k)_{k \geq 1}$ sets (cf.~Step~2 in Sect.~\ref{subsec random model and results}). In Sect.~\ref{subsubsec almost sure expression of the moments}, we work in the setting described in the Introduction (Sect.~\ref{subsec random model and results}) and we prove Prop.~\ref{prop as expression of M p tau}. In Sect.~\ref{subsubsec case of the square torus}, we consider the special case of the square torus ($\alpha =1$). In this setting, we define a variation on our random model with some additional symmetries and show that the results of Sect.~\ref{subsubsec almost sure expression of the moments} remain valid for this alternative model.


\subsubsection{Almost sure expression of the moments}
\label{subsubsec almost sure expression of the moments}

The goal of this section is the proof of Prop.~\ref{prop as expression of M p tau} in the setting described in Step~2 in Sect.~\ref{subsec random model and results}. We assume that we are given an increasing sequence $(\lambda_k)_{k \geq 1}$ of positive numbers. This sequence can be thought of as the spectrum of the Laplacian on some $\T_\alpha$, or its randomized version. We also fix a sequence $(m_k)_{k \geq 1}$ of positive integers. For each $k \geq 1$, $m_k$ is meant to model the number of wave vectors associated with $\lambda_k$ that have non-negative coordinates. Let $(\theta_{k,j})_{k,j \geq 1}$ denote a sequence of random variables in $[0,\frac{\pi}{2}]$ whose distribution admits a density with respect to the measure $\eta$ (see Def.~\ref{def eta}). As in Sect.~\ref{subsec random model and results}, we define the randomized wave vectors sets $(\Lambda_k)_{k \geq 1}$ by Def.~\ref{def randomized wave vectors}. For all $k \geq 1$, we denote by $r_k=\card(\Lambda_k)$, as in the deterministic case.

\begin{rem}
\label{rem as multiplicity}
Almost surely, for all $k$ and $j \geq 1$, we have $\theta_{k,j} \in (0,\frac{\pi}{2})$. This is true for $\eta$, hence for any distribution which is absolutely continuous with respect to $\eta$. In particular, almost surely, for all $k \geq 1$, we have $r_k = 4m_k$.
\end{rem}

Now that we defined a sequence $(\Lambda_k)_{k \geq 1}$ of random finite subsets of $\R^2$, we can define a random sequence of integers $(N_a)_{a \in \ell_0}$ by Def.~\ref{def N a}. Recalling the deterministic expression of the moments obtained in Prop.~\ref{prop deterministic expression M p tau}, we define the randomized moments as follows.

\begin{dfn}
\label{def random Mp}
In the setting we just described, for any $p \in \N^*$ and $\tau \in \R \setminus \left\{ \lambda_k \mvert k \geq 1 \right\}$, we define the randomized moment $M^p_\tau$ by:
\begin{equation*}
M^p_\tau = p! \sum_{a \in \ell_0; \norm{a}=p} \frac{N_a}{a!} \prod_{k \geq 1} \left(\frac{1}{\lambda_k -\tau}\right)^{a_k}.
\end{equation*}
\end{dfn}

Note that it is not clear that the series defining $M^p_\tau$ makes sense almost surely. We prove that it is the case in Cor.~\ref{cor moments well defined} below. This result will be a corollary of the following almost sure expression of the sequence $(N_a)_{a \in \ell_0}$.

\begin{lem}
\label{lem computation Na}
In the setting of this section, almost surely, for all $a =(a_k)_{k \geq 1} \in \ell_0$ we have:
\begin{equation*}
N_a = a! \prod_{k \geq 1} \left(\sum_{\substack{(b_j) \in \N^{m_k}\\ 2\sum_{j=1}^{m_k} b_j = a_k}} \prod_{j=1}^{m_k} \sum_{c=0}^{b_j}\left(\frac{1}{c!\left(b_j-c\right)!}\right)^2\right).
\end{equation*}
In particular, almost surely, $N_a$ only depends on $(a_k)_{k \geq 1}$ and $(m_k)_{k \geq 1}$. Moreover, if there exists $k \geq 1$ such that $a_k$ is odd, then $N_a=0$.
\end{lem}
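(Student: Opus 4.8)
The plan is to compute $N_a$ directly from its definition by exploiting the almost sure genericity of the random directions $(\theta_{k,j})$, which forces all the lattice-point cancellations to be ``diagonal''. Recall that $N_a$ counts the tuples $(\xi_{k,l})_{k\geq 1;\,1\leq l\leq a_k}$ with $\xi_{k,l}\in\Lambda_k$ and $\sum_{k,l}\xi_{k,l}=0$. Each $\Lambda_k$ is, up to the scaling factor $\frac{\sqrt{\lambda_k}}{2\pi}$, the set of $4m_k$ unit vectors $\zeta^{(i)}(\theta_{k,j})$. First I would note that because the $\lambda_k$ are distinct and the scaling factors differ, a sum over vectors drawn from different circles cannot cancel across circles only if the contributions from each circle vanish separately; more precisely, one groups the tuple according to which $k$ it uses and observes that $\sum_{k}\frac{\sqrt{\lambda_k}}{2\pi}v_k=0$ with $v_k$ a sum of $a_k$ unit vectors. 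This does not immediately decouple, so the cleaner route is to work circle by circle after first arguing that for a generic choice of angles, the only way a $\Z$-linear (here $\{0,\pm1\}$-coefficiented) combination of the vectors $\{\zeta^{(i)}(\theta_{k,j})\}_{k,i,j}$ can vanish is the ``obvious'' one.

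The key step is a genericity lemma: almost surely, for every finite collection of the vectors $\zeta^{(i)}(\theta_{k,j})$, the only relations $\sum \epsilon_{k,j,i}\,\zeta^{(i)}(\theta_{k,j})=0$ with bounded integer coefficients are those generated by the relations $\zeta^{(1)}(\theta)+\zeta^{(3)}(\theta)=0$ and $\zeta^{(2)}(\theta)+\zeta^{(4)}(\theta)=0$ coming from the built-in reflection symmetry, \emph{together with} the relations forced within a single pair $(\theta_{k,j},\theta_{k,j'})$ only when $\theta_{k,j}=\theta_{k,j'}$ — which happens with probability zero. Concretely, one shows that the set of angle-tuples for which some non-trivial extra relation holds is a countable union of proper analytic (indeed algebraic in $\cos,\sin$) subvarieties of $[0,\frac\pi2]^{\N^*\times\N^*}$, hence $\eta$-null, hence null for the distribution of $(\theta_{k,j})$ since it is absolutely continuous with respect to $\eta$. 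The point is that fixing the angles of all but one vector, the condition that that last vector's direction satisfies a prescribed linear equation cuts out finitely many values of the remaining angle.

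Granting this, the computation of $N_a$ becomes combinatorial and decouples over $k$: a contributing tuple must, for each $k$, use each direction $\theta_{k,j}$ in ``cancelling pairs'', i.e.\ among the $a_k$ vectors drawn from $\Lambda_k$, for each $j$ the number of times $\zeta^{(1)}(\theta_{k,j})$ (resp.\ $\zeta^{(2)}$) appears must equal the number of times $\zeta^{(3)}(\theta_{k,j})$ (resp.\ $\zeta^{(4)}$) appears. Writing $b_j$ for half the total number of occurrences of the direction $\pm\zeta^{(1)}(\theta_{k,j})$ or $\pm\zeta^{(2)}(\theta_{k,j})$ associated with index $j$ — so that $\sum_{j=1}^{m_k}(2b_j)=a_k$, which already forces $N_a=0$ when some $a_k$ is odd — and letting $c$ count how many of those $b_j$ occurrences land on $\zeta^{(1)}$ (resp.\ $\zeta^{(2)}$) versus $\zeta^{(3)}$ (resp.\ $\zeta^{(4)}$), a multinomial count of the number of ways to place labelled slots gives the factor $\left(\frac{1}{c!(b_j-c)!}\right)^2$ after extracting the overall $a!=\prod_k a_k!$. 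Assembling the product over $j$ and then over $k$, and keeping track of the multinomial coefficient $a_k!\big/\prod_j\big((2b_j)!\big)$ absorbed into the formula, yields exactly the claimed expression; in particular $N_a$ is seen to depend only on $(a_k)$ and $(m_k)$.

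The main obstacle is the genericity lemma: one must be careful that ``no unexpected linear relation'' is genuinely a measure-zero event even though there are infinitely many circles and each $\Lambda_k$ can be large, and that the only \emph{expected} relations are the ones coming from the $\zeta^{(i)}\leftrightarrow\zeta^{(i+2)}$ sign symmetry (there are no extra relations from, say, three distinct generic unit vectors being dependent, since any three vectors in $\R^2$ are dependent — so one must phrase the statement in terms of the \emph{coefficients} of the relation, not mere linear dependence, and rule out integer relations with a sign pattern other than the pairing pattern). Once the bookkeeping of ``which sign patterns are forced'' is set up correctly, reducing to a countable union of proper subvarieties and invoking absolute continuity with respect to $\eta$ is routine, and the combinatorial count that follows is elementary.
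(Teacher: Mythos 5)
Your proposal follows essentially the same route as the paper's proof: the same genericity argument (the vanishing condition defines a measure-zero set of angle-tuples — you phrase this via proper analytic subvarieties and a ``fix all but one angle'' slicing, where the paper exhibits the relevant function as a submersion, but these are interchangeable) reduces the count to configurations satisfying the per-$(k,j)$ conditions $c_1(k,j)=c_3(k,j)$, $c_2(k,j)=c_4(k,j)$, after which the same multinomial bookkeeping gives the claimed formula. One small slip worth correcting: the integer coefficients in the putative linear relations are $c_1(k,j)-c_3(k,j)$ and $c_2(k,j)-c_4(k,j)$, which range over all (bounded) integers rather than $\{0,\pm1\}$ as your parenthetical suggests — this does not affect the argument, since for each fixed $a\in\ell_0$ only finitely many coefficient patterns occur and $\ell_0$ is countable.
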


\begin{proof}
Let $a =(a_k)_{k \geq 1} \in \ell_0$. Let $(\xi_{k,l})_{k \geq 1; 1 \leq l \leq a_k}$ be such that, for all $k \geq 1$ and $l \in \{1,\dots,a_k\}$, $\xi_{k,l} \in \Lambda_k$. Then, for any $(k,l)$, there exists a unique $(i_{k,l},j_{k,l}) \in \{1,2,3,4\}\times \{1,\dots,m_k\}$ such that
\begin{equation*}
\xi_{k,l} = \frac{\sqrt{\lambda_k}}{2\pi} \zeta^{(i_{k,l})}\left(\theta_{k,j_{k,l}}\right).
\end{equation*}
Here we used the notation $\zeta^{(i)}(\theta)$ introduced in Def.~\ref{def randomized wave vectors}. Thus, we have:
\begin{equation*}
N_a = \card\left\{(i_{k,l},j_{k,l}) \in \prod_{k \geq 1} \left(\{1,2,3,4\}\times \{1,\dots,m_k\}\right)^{a_k} \mvert \sum_{k \geq 1} \sum_{l=1}^{a_k} \frac{\sqrt{\lambda_k}}{2\pi} \zeta^{(i_{k,l})}\left(\theta_{k,j_{k,l}}\right)=0 \right\}.
\end{equation*}

Let us consider some sequence $(i_{k,l},j_{k,l}) \in \prod_{k \geq 1} \left(\{1,2,3,4\}\times \{1,\dots,m_k\}\right)^{a_k}$. For any $k \geq 1$, any $j \in \{1,\dots,m_k\}$ and any $i \in \{1,2,3,4\}$, we denote by:
\begin{equation}
\label{eq def c ijk}
c_i(k,j) = \card\left\{l \in \{1,\dots,a_k\} \mvert (i_{k,l},j_{k,l})=(i,j) \right\}.
\end{equation}
Note that $\sum_{k \geq 1} \sum_{j=1}^{m_k}\sum_{i=1}^4 c_i(k,j) = \sum_{k \geq 1} a_k = \norm{a}< +\infty$. In particular, $c_i(k,j)=0$ for all $(k,j,i)$ but a finite number. With these notations, we have:
\begin{align*}
\sum_{k \geq 1} \sum_{l=1}^{a_k} \frac{\sqrt{\lambda_k}}{2\pi} \zeta^{(i_{k,l})}\left(\theta_{k,j_{k,l}}\right) &= \sum_{k \geq 1} \sum_{j=1}^{m_k} \sum_{i=1}^4 \frac{\sqrt{\lambda_k}}{2\pi} c_i(k,j) \zeta^{(i)}(\theta_{k,j})\\
&= \sum_{k \geq 1} \sum_{j=1}^{m_k} \frac{\sqrt{\lambda_k}}{2\pi} \begin{pmatrix}
\left(c_1(k,j)-c_2(k,j)-c_3(k,j)+c_4(k,j)\right)\cos(\theta_{k,j})\\
\left(c_1(k,j)+c_2(k,j)-c_3(k,j)-c_4(k,j)\right)\sin(\theta_{k,j})
\end{pmatrix}.
\end{align*}
If $(i_{k,l},j_{k,l})_{k \geq 1; 1 \leq l \leq a_k}$ is such that for any $k \geq 1$ and any $j \in \{1,\dots,m_k\}$ we have:
\begin{equation}
\label{eq condition c ijk}
\left\{ \begin{aligned}
c_1(k,j)-c_2(k,j)-c_3(k,j)+c_4(k,j) &= 0,\\
c_1(k,j)+c_2(k,j)-c_3(k,j)-c_4(k,j) &= 0,
\end{aligned} \right.
\end{equation}
then $\sum_{k \geq 1} \sum_{l=1}^{a_k} \frac{\sqrt{\lambda_k}}{2\pi} \zeta^{(i_{k,l})}\left(\theta_{k,j_{k,l}}\right)=0$ and the sequence $(i_{k,l},j_{k,l})_{k \geq 1; 1 \leq l \leq a_k}$ contributes to~$N_a$.

On the other hand, let us consider $(i_{k,l},j_{k,l})_{k \geq 1; 1 \leq l \leq a_k}$ such that one of the relations~\eqref{eq condition c ijk} is not satisfied. Let us prove that, in this case, $\sum_{k \geq 1} \sum_{l=1}^{a_k} \frac{\sqrt{\lambda_k}}{2\pi} \zeta^{(i_{k,l})}\left(\theta_{k,j_{k,l}}\right)\neq 0$ for almost every $(\theta_{k,j})_{k,j \geq 1}$, i.e. the sequence $(i_{k,l},j_{k,l})_{k \geq 1; 1 \leq l \leq a_k}$ almost surely does not contribute to $N_a$. Without loss of generality, we can assume that there exists $k_1 \geq 1$ and $j_1\in \{1,\dots,m_{k_1}\}$ such that:
\begin{equation*}
c_1(k_1,j_1)-c_2(k_1,j_1)-c_3(k_1,j_1)+c_4(k_1,j_1) \neq 0.
\end{equation*}
There are at most $\norm{a}$ couples $(k,j)$ such that $c_i(k,j) \neq 0$ for some $i \in \{1,2,3,4\}$. One of these couples is $(k_1,j_1)$. Let us denote by $(k_2,j_2)$, \dots, $(k_q,j_q)$ the others. Let $(\theta_{k,j})_{k,j \geq 1} \in [0,\frac{\pi}{2}]^{\N^* \times \N^*}$, for any $l \in \{1,\dots,q\}$, we denote $\theta_l = \theta_{k_l,j_l}$ for simplicity. If $\sum_{k \geq 1} \sum_{l=1}^{a_k} \frac{\sqrt{\lambda_k}}{2\pi} \zeta^{(i_{k,l})}\left(\theta_{k,j_{k,l}}\right)= 0$, then we have:
\begin{align*}
0 &= \sum_{k \geq 1} \sum_{j=1}^{m_k} \frac{\sqrt{\lambda_k}}{2\pi} \left(c_1(k,j)-c_2(k,j)-c_3(k,j)+c_4(k,j)\right)\cos(\theta_{k,j})\\
&= \sum_{l=1}^q \frac{\sqrt{\lambda_{k_l}}}{2\pi} \left(c_1(k_l,j_l)-c_2(k_l,j_l)-c_3(k_l,j_l)+c_4(k_l,j_l)\right)\cos(\theta_l)\\
&= F(\theta_1,\theta_2,\dots,\theta_q),
\end{align*}
where the last line defines $F:[0,\frac{\pi}{2}]^q \to \R$. For any $(\theta_1,\dots,\theta_q) \in (0,\frac{\pi}{2})^q$, we have:
\begin{equation*}
\deron{F}{\theta_1}(\theta_1,\dots,\theta_q) = -\frac{\sqrt{\lambda_{k_1}}}{2\pi}\left(c_1(k_1,j_1)-c_2(k_1,j_1)-c_3(k_1,j_1)+c_4(k_1,j_1)\right)\sin(\theta_1) \neq 0.
\end{equation*}
Thus $F$ is a smooth submersion on $(0,\frac{\pi}{2})^q$, and $F^{-1}(0) \cap (0,\frac{\pi}{2})^q$ is a smooth hypersurface in $(0,\frac{\pi}{2})^q$. Hence, $F^{-1}(0)$ is included in the union of the boundary of $[0,\frac{\pi}{2}]^q$ and a smooth hypersurface of $(0,\frac{\pi}{2})^q$. Since we assumed that the distribution of $(\theta_{k,j})_{k,j \geq 1}$ is absolutely continuous with respect to $\eta$, then the distribution of $(\theta_1,\dots,\theta_q)$ is absolutely continuous with respect to the Lebesgue measure on $[0,\frac{\pi}{2}]^q$. Hence, the measure of $F^{-1}(0)$ equals $0$, that is $F(\theta_1,\dots,\theta_q) \neq 0$ almost surely. This proves that, if one of the conditions~\eqref{eq condition c ijk} is not satisfied, then almost surely $\sum_{k \geq 1} \sum_{l=1}^{a_k} \frac{\sqrt{\lambda_k}}{2\pi} \zeta^{(i_{k,l})}\left(\theta_{k,j_{k,l}}\right)\neq 0$. Thus, almost surely, we have: for every $(i_{k,l},j_{k,l})_{k \geq 1; 1 \leq l \leq a_k}$,
\begin{equation*}
\sum_{k \geq 1} \sum_{l=1}^{a_k} \frac{\sqrt{\lambda_k}}{2\pi} \zeta^{(i_{k,l})}\left(\theta_{k,j_{k,l}}\right)= 0
\end{equation*}
if and only if the conditions~\eqref{eq condition c ijk} are satisfied.

Since $\ell_0$ is countable, almost surely, for every $a \in \ell_0$, we have:
\begin{equation*}
N_a = \card \left\{(i_{k,l},j_{k,l})_{k \geq 1; 1 \leq l \leq a_k} \mvert \rule{0pt}{22pt} \forall (k,j), \ \left\{ \begin{aligned}
c_1(k,j)-c_2(k,j)-c_3(k,j)+c_4(k,j) &= 0,\\
c_1(k,j)+c_2(k,j)-c_3(k,j)-c_4(k,j) &= 0
\end{aligned} \right.\right\},
\end{equation*}
where the $c_i(k,j)$ are defined by Eq.~\eqref{eq def c ijk}. In the following, we only consider the full-measure set on which the $(N_a)_{a \in \ell_0}$ are defined by the previous formula. Let $a = (a_k)_{k \geq 1} \in \ell_0$ and let $(i_{k,l},j_{k,l})_{(k,l)} \in \prod_{k \geq 1} \left(\{1,2,3,4\}\times \{1,\dots,m_k\}\right)^{a_k}$. For any $k \geq 1$ and $j \in \{1,\dots,m_k\}$, we set
\begin{equation*}
b_{k,j} = \card\left\{l \in \{1,\dots,a_k\} \mvert j_{k,l} = j\right\}.
\end{equation*}
Then, for any $(k,j)$, $(i_{k,l},j_{k,l})_{k\geq 1;1 \leq l \leq a_k}$ satisfies Eq.~\eqref{eq condition c ijk} if and only if the following holds:
\begin{equation*}
\left\{ \begin{aligned}
c_1(k,j)-c_2(k,j)-c_3(k,j)+c_4(k,j) &= 0,\\
c_1(k,j)+c_2(k,j)-c_3(k,j)-c_4(k,j) &= 0,\\
c_1(k,j)+c_2(k,j)+c_3(k,j)+c_4(k,j) &= b_{k,j},
\end{aligned} \right.
\iff
\left\{ \begin{aligned}
c_1(k,j)=c_3(k,j),\\
c_2(k,j)=c_4(k,j),\\
2(c_1(k,j)+c_2(k,j)) = b_{k,j}.
\end{aligned} \right.
\end{equation*}
Hence, the sequence $(i_{k,l},j_{k,l})_{k\geq 1;1 \leq l \leq a_k}$ contributes to the count of $N_a$ if and only, for all $k \geq 1$ and $j \in \{1,\dots,m_k\}$, the integer $b_{k,j}$ is even and there exists $c_{k,j} \in \N$ such that $c_1(k,j)=c_3(k,j)=c_{k,j}$ and $c_2(k,j) = c_4(k,j) = \frac{1}{2}b_{k,j} - c_{k,j}$. In particular, for all $k \geq 1$, $a_k = \sum_{j=1}^{m_k} b_{k,j}$ must be even.

This shows that, if there exists $k \geq 1$ such that $a_k$ is odd, then $N_a = 0$. Conversely, let $a$ be such that $a_k$ is even for any $k \geq 1$. For any $k \geq 1$, let $(b_{k,j})_{1 \leq j \leq m_k}$ be even integers such that $a_k = \sum_{j=1}^{m_k} b_{k,j}$. For any $k \geq 1$ and $j \in \{1,\dots,m_k\}$, let $c_{k,j} \in \{0,\dots, \frac{1}{2}b_{k,j}\}$. Then, the number of sequences $(i_{k,l},j_{k,l}) \in \prod_{k \geq 1} \left(\{1,2,3,4\}\times \{1,\dots,m_k\}\right)^{a_k}$ such that
\begin{align*}
c_1(k,j)&=c_3(k,j)=c_{k,j} & &\text{and} & c_2(k,j)&=c_4(k,j)=\frac{1}{2}b_{k,j}-c_{k,j}
\end{align*}
equals
\begin{equation*}
\prod_{k \geq 1} \left( \frac{a_k !}{\prod_{j=1}^{m_k} b_{k,j}!}
\prod_{j=1}^{m_k}\frac{b_{k,j}!}{(c_{k,j}!)^2\left(\left(\frac{1}{2}b_{k,j}-c_{k,j}\right)!\right)^2}\right) = a!\prod_{k \geq 1}\prod_{j=1}^{m_k}\left(\frac{1}{c_{k,j}!\left(\frac{1}{2}b_{k,j}-c_{k,j}\right)!}\right)^2.
\end{equation*}
Thus, if $a_k$ is even for any $k \geq 1$, we have:
\begin{align*}
N_a &= \sum_{\substack{(b_{k,j}) \in \prod_{k \geq 1} (2\N)^{m_k} \\ \sum_{j=1}^{m_k} b_{k,j}=a_k}} \quad \sum_{\substack{(c_{k,j}) \in \prod_{k \geq 1} \N^{m_k} \\ 0\leq c_{k,j} \leq \frac{1}{2}b_{k,j}}}a!\prod_{k \geq 1}\prod_{j=1}^{m_k}\left(\frac{1}{c_{k,j}!\left(\frac{1}{2}b_{k,j}-c_{k,j}\right)!}\right)^2\\
&= a! \sum_{\substack{(b_{k,j}) \in \prod_{k \geq 1} \N^{m_k} \\ 2 \sum_{j=1}^{m_k} b_{k,j}=a_k}} \quad \sum_{\substack{(c_{k,j}) \in \prod_{k \geq 1} \N^{m_k} \\ 0\leq c_{k,j} \leq b_{k,j}}} \quad \prod_{k \geq 1}\prod_{j=1}^{m_k}\left(\frac{1}{c_{k,j}!\left(b_{k,j}-c_{k,j}\right)!}\right)^2\\
&= a! \prod_{k \geq 1} \left(\sum_{\substack{(b_j) \in \N^{m_k}\\ 2\sum_{j=1}^{m_k} b_j = a_k}} \prod_{j=1}^{m_k} \sum_{c=0}^{b_j}\left(\frac{1}{c!\left(b_j-c\right)!}\right)^2\right).
\end{align*}
The last term still makes sense if one of the $(a_k)_{k \geq 1}$ is odd. In this case, one of the sums is indexed by the empty set, hence equals $0$. This yields the result.
\end{proof}

As a corollary, we can prove that Def.~\ref{def random Mp} almost surely makes sense.

\begin{cor}
\label{cor moments well defined}
Almost surely, for any $p \in \N^*$ and $\tau \in \R \setminus \left\{ \lambda_k \mvert k \geq 1 \right\}$ we have $M^{2p-1}_\tau = 0$ and
\begin{equation*}
M^{2p}_\tau = (2p)! \sum_{a \in \ell_0; \norm{a}=p} \ \prod_{k \geq 1} \left(\left(\frac{1}{\lambda_k -\tau}\right)^{2a_k}\sum_{\substack{(b_j) \in \N^{m_k}\\ \sum_{j=1}^{m_k} b_j = a_k}} \prod_{j=1}^{m_k} \sum_{c=0}^{b_j}\left(\frac{1}{c!\left(b_j-c\right)!}\right)^2\right),
\end{equation*}
where the sum make sense in $[0,+\infty]$ as a sum of non-negative terms.
\end{cor}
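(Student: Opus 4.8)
The plan is to substitute the almost sure formula for $N_a$ provided by Lemma~\ref{lem computation Na} into Definition~\ref{def random Mp}, after reducing everything to a single full-measure event. Since $\ell_0$ is countable, there is a full-measure event $\Omega_0$ on which the conclusion of Lemma~\ref{lem computation Na} holds for every $a \in \ell_0$ simultaneously; I would work on $\Omega_0$ from now on. On $\Omega_0$ we have $N_a = 0$ whenever some $a_k$ is odd, so in the series of Definition~\ref{def random Mp} only the sequences $a$ with all entries even contribute; for such an $a$ each exponent $a_k$ is even, hence $\prod_{k \geq 1}(\lambda_k - \tau)^{-a_k} \geq 0$. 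Consequently, once the (zero) terms indexed by sequences with an odd entry are discarded, the series defining $M^p_\tau$ is a sum of non-negative terms and therefore makes sense in $[0,+\infty]$; this is the well-definedness left open after Definition~\ref{def random Mp}.

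The odd case is then immediate: if $\norm{a} = 2p-1$, then not all $a_k$ can be even, because a sum of even integers is even and cannot equal the odd number $2p-1$; hence $N_a = 0$ on $\Omega_0$ and every term of the series for $M^{2p-1}_\tau$ vanishes, so $M^{2p-1}_\tau = 0$.

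For the even case, I would apply Definition~\ref{def random Mp} with the integer $2p$ and insert the expression for $N_a$ from Lemma~\ref{lem computation Na}; the factor $a!$ cancels against the $1/a!$ in the definition. Among the sequences $a$ with $\norm{a} = 2p$, only those with all entries even survive, and writing $a_k = 2a'_k$ sets up a bijection onto the set of $a' \in \ell_0$ with $\norm{a'} = p$. Under this change of index, $\prod_k (\lambda_k-\tau)^{-a_k} = \prod_k (\lambda_k-\tau)^{-2a'_k}$ and the constraint $2\sum_{j=1}^{m_k} b_j = a_k$ becomes $\sum_{j=1}^{m_k} b_j = a'_k$, so relabelling $a'$ as $a$ gives precisely the announced formula. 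To close, I would note that the resulting double sum is unambiguously an element of $[0,+\infty]$: each factor $(\lambda_k-\tau)^{-2a_k}$ is non-negative, each inner sum $\sum_{c=0}^{b_j}(c!(b_j-c)!)^{-2}$ is a sum of squares, the product over $k$ is effectively finite (the factor attached to any $k$ outside the support of $a$ equals $1$), and the outer sum over $a$ is therefore a well-defined sum of non-negative terms.

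I do not anticipate a real obstacle: the heart of the argument is the parity observation that $N_a = 0$ unless all $a_k$ are even, together with the routine reindexing $a_k = 2a'_k$. The only points demanding care are the passage to a single full-measure event via countability of $\ell_0$, and the verification that, after deleting the vanishing terms, one is manipulating a series of non-negative terms, which is what legitimizes both the rearrangement and the interpretation of the final expression in $[0,+\infty]$.
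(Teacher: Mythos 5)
Your proof is correct and follows essentially the same route as the paper: deduce the parity vanishing $N_a=0$ from Lemma~\ref{lem computation Na}, conclude $M^{2p-1}_\tau=0$, then for even moments cancel $a!$ against $1/a!$, reindex via $a_k=2a'_k$, and observe the resulting series has non-negative terms so it is well-defined in $[0,+\infty]$. The only small redundancy is the explicit passage to a full-measure event $\Omega_0$ by countability of $\ell_0$ — Lemma~\ref{lem computation Na} is already stated as holding almost surely \emph{for all} $a\in\ell_0$ simultaneously, so that reduction is built in.
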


\begin{proof}
The sequence $(N_a)_{a \in \ell_0}$ is almost surely given by Lem.~\ref{lem computation Na}. From now on, we only consider the corresponding full-measure event.

Let $p \in \N^*$ and $\tau \in \R \setminus \left\{ \lambda_k \mvert k \geq 1 \right\}$. If $a \in \ell_0$ is such that $\norm{a}=2p-1$, then at least one of the terms $(a_k)_{k \geq 1}$ is odd and $N_a = 0$. Thus, all the terms in the sum defining $M^{2p-1}_\tau$ vanish (see Def.~\ref{def random Mp}) and $M^{2p-1}_\tau = 0$. By the same argument, we can rewrite $M^{2p}_\tau$ as:
\begin{equation*}
M^{2p}_\tau = (2p)! \sum_{a \in \ell_0; \norm{a}=2p} \frac{N_a}{a!} \prod_{k \geq 1} \left(\frac{1}{\lambda_k -\tau}\right)^{a_k} = (2p)! \sum_{a \in \ell_0; \norm{a}=p} \frac{N_{2a}}{(2a)!} \prod_{k \geq 1} \left(\frac{1}{\lambda_k -\tau}\right)^{2a_k}.
\end{equation*}
For any $a \in \ell_0$, the product $\prod_{k \geq 1}\left(\frac{1}{\lambda_k -\tau}\right)^{2a_k}$ only contains a finite number of non-trivial terms, hence it is well-defined in $[0,+\infty)$. Thus, the sum on the right-hand side above makes sense in $[0,+\infty]$, all the terms being non-negative. By Lem.~\ref{lem computation Na}, for any $a \in \ell_0$ we have:
\begin{equation*}
\frac{N_{2a}}{(2a)!} = \prod_{k \geq 1} \left(\sum_{\substack{(b_j) \in \N^{m_k}\\ \sum_{j=1}^{m_k} b_j = a_k}} \prod_{j=1}^{m_k} \sum_{c=0}^{b_j}\left(\frac{1}{c!\left(b_j-c\right)!}\right)^2\right),
\end{equation*}
which yields the result.
\end{proof}

We conclude this section with the proof of Prop.~\ref{prop as expression of M p tau}.

\begin{proof}[Proof of Prop.~\ref{prop as expression of M p tau}]
We already derived an almost sure expression of the randomized moments $(M^p_\tau)_{p \geq 1}$ in Cor.~\ref{cor moments well defined}. In particular, we already proved that the odd moments vanish almost surely, which is the first point in Prop.~\ref{prop as expression of M p tau}.

Recall that, for any $p \geq 1$, the polynomials $P_p$ and $Q_p$ were introduced in Def.~\ref{def P and Q}, and the coefficient $A_p$ was defined in Def.~\ref{def Ap}. By Cor.~\ref{cor moments well defined}, in order to prove the second point of Prop.~\ref{prop as expression of M p tau}, we only need to prove the following for any $p \geq 1$ and $\tau \in \R \setminus \left\{ \lambda_k \mvert k \geq 1 \right\}$: if $S^q_\tau < +\infty$ for any $q \in \{1,\dots,p\}$, then
\begin{equation}
\label{eq goal as expression of M p tau}
\sum_{a \in \ell_0; \norm{a}=p} \ \prod_{k \geq 1} \left(\left(\frac{1}{\lambda_k -\tau}\right)^{2a_k}\sum_{\substack{(b_j) \in \N^{m_k}\\ \sum_{j=1}^{m_k} b_j = a_k}} \prod_{j=1}^{m_k} \sum_{c=0}^{b_j}\left(\frac{1}{c!\left(b_j-c\right)!}\right)^2\right) = \frac{1}{p!} P_p\left(2A_1S^1_\tau,\dots,2A_pS^p_\tau\right).
\end{equation}
Indeed,
\begin{equation*}
\frac{1}{p!} P_p\left(2A_1S^1_\tau,\dots,2A_pS^p_\tau\right) = \sum_{\alpha \in \mathcal{P}(p)} (-1)^{p - \norm{\alpha}}\frac{2^{\norm{\alpha}}}{\alpha!} \prod_{q = 1}^p \left(A_q S^q_\tau\right)^{\alpha_q}.
\end{equation*}

Let $X = (X_k)_{k \geq 1}$, for any $p \in \N^*$, we denote by $F_p(X)$ the formal power series in infinitely many variables defined by:
\begin{equation*}
F_p(X) = \sum_{a \in \ell_0; \norm{a}=p} \left(\prod_{k \geq 1}  X_k^{2a_k}\right) \prod_{k \geq 1}\left(\sum_{\substack{(b_j) \in \N^{m_k}\\ \sum_{j=1}^{m_k} b_j = a_k}} \prod_{j=1}^{m_k} \sum_{c=0}^{b_j}\left(\frac{1}{c!\left(b_j-c\right)!}\right)^2\right).
\end{equation*}
We also define $F(X) = 1 + \sum_{p \geq 1} F_p(X)$. Note that, for all $p \geq 1$, $F_p$ is the homogeneous part of total degree $2p$ of $F$. In the following, we write that two formal power series in $(X_k)_{k\geq 1}$ are equal if, for any $(a_k)_{k \geq 1} \in \ell_0$, the coefficients in front of the monomial $\prod_{k \geq 1}X_k^{a_k}$ are equal. We have:
\begin{align*}
F(X) &= \sum_{a \in \ell_0} \ \prod_{k \geq 1} \sum_{b_1 + \dots + b_{m_k} = a_k} \ \prod_{j=1}^{m_k} \left(X_k^{2b_j} \sum_{c=0}^{b_j}\left(\frac{1}{c!\left(b_j-c\right)!}\right)^2\right)\\
&= \prod_{k \geq 1} \sum_{a_k \geq 0} \ \sum_{b_1 + \dots + b_{m_k} = a_k} \ \prod_{j=1}^{m_k} \left(X_k^{2b_j} \sum_{c=0}^{b_j}\left(\frac{1}{c!\left(b_j-c\right)!}\right)^2\right)\\
&= \prod_{k \geq 1} \sum_{b_1, \dots, b_{m_k} \in \N} \ \prod_{j=1}^{m_k} \left(X_k^{2b_j} \sum_{c=0}^{b_j}\left(\frac{1}{c!\left(b_j-c\right)!}\right)^2\right)\\
&= \prod_{k \geq 1} \left(\sum_{b \geq 0} X_k^{2b} \sum_{c=0}^{b}\left(\frac{1}{c!\left(b-c\right)!}\right)^2\right)^{m_k}\\
&= \prod_{k \geq 1} \left(\sum_{p \geq 0} \frac{X_k^{2p}}{(p!)^2}\right)^{2m_k}.
\end{align*}
Using Eq.~\eqref{eq formal series Q} with $T=X^2$ and $Y_p = \frac{1}{p!}$ for any $p \geq 1$, we have:
\begin{equation}
\label{eq generating function Ap}
\ln\left(1 + \sum_{p \geq 1} \frac{X^{2p}}{(p!)^2}\right) = \sum_{q \geq 1} (-1)^{q-1}Q_q\left(1,\frac{1}{2!},\dots,\frac{1}{q!}\right)X^{2q} = \sum_{q \geq 1} (-1)^{q-1}A_qX^{2q},
\end{equation}
by the definition of the $(A_p)_{p \geq 1}$ (Def.~\ref{def Ap}). Then, using Eq.~\eqref{eq formal series P} with $T=1$, we have:
\begin{align*}
F(X) &= \exp \left(\sum_{k \geq 1} 2m_k \ln\left(I_0(2X_k)\right)\right)\\
&= \exp\left(\sum_{q \geq 1} (-1)^{q-1} 2A_q \sum_{k \geq 1} m_k X_k^{2q}\right)\\
&= 1 + \sum_{p \geq 1} \frac{1}{p!}P_p\left(2A_1 \sum_{k \geq 1}m_k X_k^2,\dots,2A_p \sum_{k \geq 1}m_k X_k^{2p}\right).
\end{align*}
Let $p \in \N^*$, by definition of $P_p$ (see Def.~\ref{def P and Q}), $P_p\left(2A_1 \sum_{k \geq 1}m_k X_k^2,\dots,2A_p \sum_{k \geq 1}m_k X_k^{2p}\right)$ is homogeneous of total degree $2p$ in $X$. Hence,
\begin{equation}
\label{eq formal Fp}
F_p(X) = \frac{1}{p!}P_p\left(2A_1 \sum_{k \geq 1}m_k X_k^2,\dots,2A_p \sum_{k \geq 1}m_k X_k^{2p}\right).
\end{equation}

Let $\tau \in \R \setminus \left\{ \lambda_k \mvert k\geq 1 \right\}$ and let us specialize Eq.~\eqref{eq formal Fp} in the case where $X = \left(\frac{1}{\lambda_k-\tau}\right)_{k \geq 1}$. As we already explained (see the proof of Cor.~\ref{cor moments well defined}), the left-hand side always makes sense in $[0,+\infty]$. However, since some of the coefficients of $P_p$ are negative, if two of the $(S^q_\tau)_{1 \leq q \leq p}$ are infinite, it might not be possible to make sense of the right-hand side of Eq.~\eqref{eq formal Fp} evaluated on $\left(\frac{1}{\lambda_k - \tau}\right)_{k \geq 1}$. Assuming that $S^q_\tau < +\infty$ for all $q \in \{1, \dots,p\}$, we obtain precisely Eq.~\eqref{eq goal as expression of M p tau}, which concludes the proof.
\end{proof}


\subsubsection{Case of the square torus}
\label{subsubsec case of the square torus}

This section is concerned with the special case of the square flat torus $\T = \T_1$. The content of the other sections of this paper is, of course, still valid in this setting. Nonetheless, on $\T$, the deterministic wave vectors sets $(\Lambda_k)_{k \geq 1}$ (see Eq.~\eqref{eq def Lambda k}) have additional symmetries. Namely, they are invariant by $(x_1,x_2) \mapsto (x_2,x_1)$, as explained in Rem.~\ref{rem multiplicities}. In what follows, we define a variation on the model studied in the previous sections, such that the randomized $(\Lambda_k)_{k \geq 1}$ are also invariant by $(x_1,x_2) \mapsto (x_2,x_1)$. It turns out that the almost sure expression of the moments derived in Prop.~\ref{prop as expression of M p tau} is still valid in this case.

This section follows step by step what we did in Step~2 of Sect.~\ref{subsec random model and results} and in Sect.~\ref{subsubsec almost sure expression of the moments}. We assume that we are given an increasing sequence $(\lambda_k)_{k \geq 1}$ of positive numbers and a sequence $(n_k)_{k \geq 1}$ of positive integers. For any $k \geq 1$, $n_k$ will be the number of points in $\Lambda_k$ of the form $(x_1,x_2)$ with $0 \leq x_2 \leq x_1$.

\begin{dfn}
\label{def eta'}
Let $\eta'$ denote the product of the uniform probability measures on $[0,\frac{\pi}{4}]^{\N^* \times \N^*}$, that is the distribution of a sequence of independent uniform variables in $[0,\frac{\pi}{4}]$, indexed by $\N^* \times \N^*$.
\end{dfn}

\begin{dfn}
\label{def randomized wave vectors bis}
Let $(\lambda_k)_{k \geq 1}$ be an increasing sequence of positive numbers, let $(n_k)_{k \geq 1}$ be a sequence of positive integers and let $(\theta_{k,j})_{k,j \geq 1}$ be a sequence of random variables in $[0,\frac{\pi}{4}]$ whose distribution is absolutely continuous with respect to $\eta'$. Then, for any $k\geq 1$, we set:
\begin{equation}
\Lambda_k = \frac{\sqrt{\lambda_k}}{2\pi} \left\{\zeta^{(i)}(\theta_{k,j}) \mvert 1\leq j \leq n_k, 1 \leq i \leq 8 \right\}.
\end{equation}
where we denoted, for every $\theta \in [0,\frac{\pi}{4}]$:
\begin{align*}
\zeta^{(1)}(\theta) &= (\cos(\theta),\sin(\theta)) = - \zeta^{(3)}(\theta), & & & \zeta^{(2)}(\theta) &= (-\cos(\theta),\sin(\theta))= -\zeta^{(4)}(\theta),\\
\zeta^{(5)}(\theta) &= (\sin(\theta),\cos(\theta)) = - \zeta^{(7)}(\theta) & &\text{and} & \zeta^{(6)}(\theta) &= (-\sin(\theta),\cos(\theta))= -\zeta^{(8)}(\theta).
\end{align*}
\end{dfn}

For all $k \geq 1$, we denote by $r_k = \card(\Lambda_k)$ and by $m_k$ the cardinality of $\Lambda_k \cap [0,+\infty)^2$. As in Sect.~\ref{subsubsec almost sure expression of the moments}, almost surely, for any $k$ and $j \geq 1$, we have $\theta_{k,j} \in (0,\frac{\pi}{4})$, so that $r_k = 4m_k = 8 n_k$.

Having defined the randomized $(\Lambda_k)_{k \geq 1}$ by Def.~\ref{def randomized wave vectors bis}, we can define a random sequence of integers $(N_a)_{a \in \ell_0}$ by Def.~\ref{def N a}. Then, for any $p\in \N^*$ and $\tau \in \R \setminus \left\{ \lambda_k \mvert k \geq 1 \right\}$, we can still define the randomized moments $M^p_\tau$ as in Def.~\ref{def random Mp}.

For our original random model, introduced in Sect.~\ref{subsec random model and results}, the only thing we used in the proofs of Cor.~\ref{cor moments well defined} and Prop.~\ref{prop as expression of M p tau} is the almost sure expression of the $(N_a)_{a \in \ell_0}$ derived in Lem.~\ref{lem computation Na} (cf.~Sect.~\ref{subsubsec almost sure expression of the moments}). Our main point here is that Lem.~\ref{lem computation Na} remains valid for the alternate random model introduced in this section. Thus, proceeding as in Sect.~\ref{subsubsec almost sure expression of the moments}, Cor.~\ref{cor moments well defined} and Prop.~\ref{prop as expression of M p tau} hold true in the model with additional symmetry, with the exact same proofs (recall that, almost surely, for all $k \geq 1$, $m_k = 2n_k$). In the next sections, we study the randomized moments $M^p_\tau$ using the almost sure expression derived in Prop.~\ref{prop as expression of M p tau}. In the case of the square torus, our results hold both for the general model of Sect.~\ref{subsec random model and results} and for the model with additional symmetry of Sect.~\ref{subsubsec case of the square torus}.

We conclude this section by proving that Lem.~\ref{lem computation Na} remains valid for the variation on our random model considered in this section.

\begin{proof}[Proof of Lem.~\ref{lem computation Na} for the alternate model of this section]
The proof follows the lines of the one we gave for the original model. For any $a \in \ell_0$, we have:
\begin{equation*}
N_a = \card\left\{(i_{k,l},j_{k,l}) \in \prod_{k \geq 1} \left(\{1,\dots,8\}\times \{1,\dots,n_k\}\right)^{a_k} \mvert \sum_{k \geq 1} \sum_{l=1}^{a_k} \frac{\sqrt{\lambda_k}}{2\pi} \zeta^{(i_{k,l})}\left(\theta_{k,j_{k,l}}\right)=0 \right\}.
\end{equation*}
Let $(i_{k,l},j_{k,l}) \in \prod_{k \geq 1} \left(\{1,\dots,8\}\times \{1,\dots,n_k\}\right)^{a_k}$. For any $k \geq 1$, any $j \in \{1,\dots,n_k\}$ and any $i \in \{1,\dots,8\}$, we set:
\begin{equation*}
c_i(k,j) = \card\left\{l \in \{1,\dots,a_k\} \mvert (i_{k,l},j_{k,l})=(i,j) \right\}.
\end{equation*}
Then
\begin{align*}
\sum_{k \geq 1} \sum_{l=1}^{a_k} \frac{\sqrt{\lambda_k}}{2\pi} \zeta^{(i_{k,l})}\left(\theta_{k,j_{k,l}}\right) =& \sum_{k \geq 1} \sum_{j=1}^{n_k} \frac{\sqrt{\lambda_k}}{2\pi} \begin{pmatrix}
\left(c_1(k,j)-c_2(k,j)-c_3(k,j)+c_4(k,j)\right)\cos(\theta_{k,j})\\
\left(c_1(k,j)+c_2(k,j)-c_3(k,j)-c_4(k,j)\right)\sin(\theta_{k,j})
\end{pmatrix}\\
&+ \sum_{k \geq 1} \sum_{j=1}^{n_k} \frac{\sqrt{\lambda_k}}{2\pi} \begin{pmatrix}
\left(c_5(k,j)-c_6(k,j)-c_7(k,j)+c_8(k,j)\right)\sin(\theta_{k,j})\\
\left(c_5(k,j)+c_6(k,j)-c_7(k,j)-c_8(k,j)\right)\cos(\theta_{k,j})
\end{pmatrix}.
\end{align*}
If $(i_{k,l},j_{k,l})_{k \geq 1; 1 \leq l \leq a_k}$ is such that for any $k \geq 1$ and any $j \in \{1,\dots,n_k\}$ we have:
\begin{equation}
\label{eq condition c ijk bis}
\left\{ \begin{aligned}
c_1(k,j)-c_2(k,j)-c_3(k,j)+c_4(k,j) &= 0,\\
c_1(k,j)+c_2(k,j)-c_3(k,j)-c_4(k,j) &= 0,\\
c_5(k,j)-c_6(k,j)-c_7(k,j)+c_8(k,j) &= 0,\\
c_5(k,j)+c_6(k,j)-c_7(k,j)-c_8(k,j) &= 0
\end{aligned} \right.
\end{equation}
then the above sum equals $0$.

Conversely, let us assume that one of the relations~\eqref{eq condition c ijk bis} is not satisfied. Without loss of generality, we assume that there exists $k_1 \geq 1$ and $j_1 \in \{1,\dots,n_{k_1}\}$ such that:
\begin{equation*}
c_1(k_1,j_1)-c_2(k_1,j_1)-c_3(k_1,j_1)+c_4(k_1,j_1) \neq 0.
\end{equation*}
Let us denote by $(k_2,j_2),\dots,(k_q,j_q)$ the other couples of the form $(k,j)$ with $1 \leq j \leq n_k$ and such that there exists $i \in \{1,\dots,8\}$ such that $c_i(k,j) \neq 0$. If $\sum_{k \geq 1} \sum_{l=1}^{a_k} \frac{\sqrt{\lambda_k}}{2\pi} \zeta^{(i_{k,l})}\left(\theta_{k,j_{k,l}}\right) = 0$, then
\begin{align*}
0 = \tilde{F}(\theta_1,\dots,\theta_q) =& \sum_{l=1}^q \frac{\sqrt{\lambda_{k_l}}}{2\pi} \left(c_1(k_l,j_l)-c_2(k_l,j_l)-c_3(k_l,j_l)+c_4(k_l,j_l)\right)\cos(\theta_l)\\
&+ \sum_{l=1}^q \frac{\sqrt{\lambda_{k_l}}}{2\pi} \left(c_5(k_l,j_l)-c_6(k_l,j_l)-c_7(k_l,j_l)+c_8(k_l,j_l)\right)\sin(\theta_l),
\end{align*}
where we denoted $\theta_l = \theta_{k_l,j_l}$ for any $l \in \{1,\dots,q\}$, and the previous equation defines $\tilde{F}$. For any $(\theta_1,\dots,\theta_q) \in (0,\frac{\pi}{4})^q$, we have:
\begin{align*}
\deron{\tilde{F}}{\theta_1}(\theta_1,\dots,\theta_q) =& -\frac{\sqrt{\lambda_{k_1}}}{2\pi} \left(c_1(k_1,j_1)-c_2(k_1,j_1)-c_3(k_1,j_1)+c_4(k_1,j_1)\right)\sin(\theta_1) \\
&+\frac{\sqrt{\lambda_{k_1}}}{2\pi} \left(c_5(k_1,j_1)-c_6(k_1,j_1)-c_7(k_1,j_1)+c_8(k_1,j_1)\right)\cos(\theta_1),
\end{align*}
and this quantity vanishes if and only if
\begin{equation*}
\tan(\theta_1) = \frac{c_5(k_1,j_1)-c_6(k_1,j_1)-c_7(k_1,j_1)+c_8(k_1,j_1)}{c_1(k_1,j_1)-c_2(k_1,j_1)-c_3(k_1,j_1)+c_4(k_1,j_1)}.
\end{equation*}
Hence, $\tilde{F}$ is a smooth submersion on
\begin{equation*}
\left\{(\theta_1,\dots,\theta_q) \in \left(0,\frac{\pi}{4}\right)^q \mvert \theta_1 \neq \arctan\left(\frac{c_5(k_1,j_1)-c_6(k_1,j_1)-c_7(k_1,j_1)+c_8(k_1,j_1)}{c_1(k_1,j_1)-c_2(k_1,j_1)-c_3(k_1,j_1)+c_4(k_1,j_1)}\right)\right\},
\end{equation*}
and $\tilde{F}^{-1}(0) \subset [0,\frac{\pi}{4}]^q$ has measure $0$ for the measure induced by $\eta'$. Thus, if one the relations \eqref{eq condition c ijk bis} is not satisfied, then almost surely, $\sum_{k \geq 1} \sum_{l=1}^{a_k} \frac{\sqrt{\lambda_k}}{2\pi} \zeta^{(i_{k,l})}\left(\theta_{k,j_{k,l}}\right) \neq 0$. This argument shows that, almost surely, for any $a \in \ell_0$, we have:
\begin{equation*}
N_a = \card \left\{(i_{k,l},j_{k,l})_{k \geq 1; 1 \leq l \leq a_k} \mvert \forall k\geq 1, \forall j \in \{1,\dots,n_k\}, (c_i(k,j))_{1 \leq i \leq 8} \text{ satisfies \eqref{eq condition c ijk bis}} \right\}.
\end{equation*}

Let us consider only the full-measure set on which the $(N_a)_{a \in \ell_0}$ are defined by the previous formula. Let $a = (a_k)_{k \geq 1} \in \ell_0$ and let $(i_{k,l},j_{k,l})_{(k,l)} \in \prod_{k \geq 1} \left(\{1,\dots,8\}\times \{1,\dots,n_k\}\right)^{a_k}$. For any $k \geq 1$ and $j \in \{1,\dots,n_k\}$, we set:
\begin{equation*}
b_{k,j} = \card\left\{l \in \{1,\dots,a_k\} \mvert j_{k,l} = j\right\}.
\end{equation*}
Then, $(i_{k,l},j_{k,l})_{k\geq 1;1 \leq l \leq a_k}$ satisfies Eq.~\eqref{eq condition c ijk bis} for any $(k,j)$ if and only if:
\begin{equation*}
\left\{ \begin{aligned}
&c_1(k,j)=c_3(k,j),\\
&c_2(k,j)=c_4(k,j),\\
&c_5(k,j)=c_7(k,j),\\
&c_6(k,j)=c_8(k,j),\\
&2\left(c_1(k,j)+c_2(k,j)+c_5(k,j)+c_6(k,j)\right)= b_{k,j}.
\end{aligned} \right.
\end{equation*}
Arguing as in the proof of Lem.~\ref{lem computation Na}, we deduce the following. If there exists $k \geq 1$ such that $a_k$ is odd, then $N_a = 0$. And if $a_k$ is even for any $k \geq 1$, then
\begin{align*}
N_a &= a! \sum_{\substack{(b_{k,j}) \in \prod_{k \geq 1} \N^{n_k} \\ 2 \sum_{j=1}^{n_k} b_{k,j}=a_k}} \quad \sum_{\substack{c_1(k,j),c_2(k,j),c_5(k,j),c_6(k,j) \in \N \\ \forall (k,j), \ \sum c_i(k,j)= b_{k,j}}} \quad \prod_{k \geq 1} \prod_{j=1}^{n_k}\left(\frac{1}{c_1(k,j)!c_2(k,j)!c_5(k,j)!c_6(k,j)!}\right)^2\\
&= a! \sum_{\substack{(b_{k,j}) \in \prod_{k \geq 1} \N^{n_k} \\ 2 \sum_{j=1}^{n_k} b_{k,j}=a_k}} \quad \prod_{k \geq 1}\prod_{j=1}^{n_k} \quad \sum_{\substack{c_1,c_2,c_5,c_6 \in \N \\ c_1+c_2+c_5+c_6 = b_{k,j}}}\left(\frac{1}{c_1!c_2!c_5!c_6!}\right)^2.
\end{align*}
Then, for any $b \in \N$, we have:
\begin{align*}
\sum_{c_1+c_2+c_5+c_6 = b}\left(\frac{1}{c_1!c_2!c_5!c_6!}\right)^2 &= \sum_{b' + b'' = b} \left(\sum_{c_1 + c_2 = b'} \left(\frac{1}{c_1!c_2!}\right)^2\right)\left(\sum_{c_5 + c_6 = b''} \left(\frac{1}{c_5!c_6!}\right)^2\right)\\
&= \sum_{b' + b'' = b} \left(\sum_{c=0}^{b'}\left(\frac{1}{c!(b'-c)!}\right)^2 \right)\left(\sum_{c=0}^{b''}\left(\frac{1}{c!(b''-c)!}\right)^2 \right).
\end{align*}
Finally, if $a_k$ is even for any $k \geq 1$, we have:
\begin{align*}
N_a &= a! \sum_{\substack{(b_{k,j}) \in \prod_{k \geq 1} \N^{n_k} \\ 2 \sum_{j=1}^{n_k} b_{k,j}=a_k}} \quad \prod_{k \geq 1}\prod_{j=1}^{n_k} \sum_{\substack{b',b'' \in \N \\ b' + b'' = b_{k,j}}} \left(\sum_{c=0}^{b'}\left(\frac{1}{c!(b'-c)!}\right)^2 \right)\left(\sum_{c=0}^{b''}\left(\frac{1}{c!(b''-c)!}\right)^2 \right)\\
&= a! \sum_{\substack{(b_{k,j}) \in \prod_{k \geq 1} \N^{n_k} \\ 2 \sum_{j=1}^{n_k} b_{k,j}=a_k}} \sum_{\substack{b'_{k,j}, b''_{k,j} \in \N \\ \forall (k,j), \ b'_{k,j}+b''_{k,j}=b_{k,j}}} \prod_{k \geq 1}\prod_{j=1}^{n_k} \left(\sum_{c=0}^{b_{k,j}'}\left(\frac{1}{c!(b_{k,j}'-c)!}\right)^2 \right)\left(\sum_{c=0}^{b_{k,j}''}\left(\frac{1}{c!(b_{k,j}''-c)!}\right)^2 \right)\\
&= a! \sum_{\substack{(b'_{k,j},b''_{k,j}) \in \prod_{k \geq 1} \N^{n_k} \times \N^{n_k} \\ 2 \sum_{j=1}^{n_k} (b'_{k,j} + b''_{k,j}) =a_k}} \quad \prod_{k \geq 1}\prod_{j=1}^{n_k}  \left(\sum_{c=0}^{b_{k,j}'}\left(\frac{1}{c!(b_{k,j}'-c)!}\right)^2 \right)\left(\sum_{c=0}^{b_{k,j}''}\left(\frac{1}{c!(b_{k,j}''-c)!}\right)^2 \right).
\end{align*}
Recalling that $m_k=2n_k$ for all $k \geq 1$, $N_a$ can be rewritten as:
\begin{equation*}
a! \sum_{\substack{(b_{k,j}) \in \prod_{k \geq 1} \N^{m_k} \\ 2 \sum_{j=1}^{m_k} b_{k,j}=a_k}} \prod_{k \geq 1}\prod_{j=1}^{m_k} \sum_{c=0}^{b_{k,j}}\left(\frac{1}{c!(b_{k,j}-c)!}\right)^2 = a! \prod_{k \geq 1} \left(\sum_{\substack{(b_j) \in \N^{m_k}\\ 2\sum_{j=1}^{m_k} b_j = a_k}} \prod_{j=1}^{m_k} \sum_{c=0}^{b_j}\left(\frac{1}{c!\left(b_j-c\right)!}\right)^2\right),
\end{equation*}
which concludes the proof.
\end{proof}


\subsection{Randomization of the spectrum of the Laplacian}
\label{subsec randomization of the spectrum of the Laplacian}

In this section, we finally replace the sequence of eigenvalues of the Laplacian on $\T_\alpha$ by the values of a Poisson point process. This corresponds to the last step in the definition of our random model (Step~3 in Sect.~\ref{subsec random model and results}). Then, we check that the resulting random sequence $(\lambda_k)_{k \geq 1}$ behaves nicely, and we define the corresponding randomized even moments.

Let us consider a multiplicity function $m:[0,+\infty) \to [1,+\infty)$ (cf.~Def.~\ref{def multiplicity function}) and denote by $\nu_m$ the corresponding intensity measure (see Def.~\ref{def nu m}). From now on, we denote by $(\lambda_k)_{k \geq 1}$ the increasing sequence of the values of a Poisson point process of intensity $\nu_m$ on $[0,+\infty)$, as in Def.~\ref{def random spectrum}. The fact that this definition makes sense is the content of Lem.~\ref{lem random spectrum}. Let us denote by $m_k = m(\lambda_k)$ for all $k\geq 1$, as in Def.~\ref{def random spectrum}. As explained in Sect.~\ref{subsec random model and results}, for any $\tau \in \R$, we define the randomized spectral sums $(S^p_\tau)_{p \geq 1}$ by Def.~\ref{def S p lambda}. Recalling, Prop.~\ref{prop as expression of M p tau}, we are now only interested in even order moments, and we can redefine the even order randomized moments $(M^{2p}_\tau)_{p \geq 1}$ as follows.

\begin{dfn}
\label{def random Mp final}
Let $m$ be a multiplicity function and let $(\lambda_k)_{k \geq 1}$ be the associated randomized spectrum (see Def.~\ref{def random spectrum}). For any $k \geq 1$, let $m_k = m(\lambda_k)$. For all $p \in \N^*$, for all $\tau \in \R$, we define the randomized even moment $M^{2p}_\tau$ by:
\begin{equation*}
M^{2p}_\tau = (2p)! \sum_{\alpha \in \mathcal{P}(p)} (-1)^{p - \norm{\alpha}}\frac{2^{\norm{\alpha}}}{\alpha!} \prod_{q = 1}^p \left(A_q S^q_\tau\right)^{\alpha_q},
\end{equation*}
where $\mathcal{P}(p)$, $(A_p)_{p \geq 1}$ and $(S^p_\lambda)_{p \geq 1}$ are defined by Def.~\ref{def partition}, Def.~\ref{def Ap} and Def.~\ref{def S p lambda} respectively.
\end{dfn}

We will prove, in Lem.~\ref{lem Ap positive}, that the coefficients $(A_p)_{p \geq 1}$ are positive. Hence, Def.~\ref{def random Mp final} defines $M^{2p}_\tau$ as a polynomial in $(S^1_\tau,\dots,S^p_\tau)$ with some negative coefficients. In this setting, the spectral sums $(S^p_\tau)_{p \geq 1}$ are random variables taking values in $[0,+\infty]$. In particular, Def.~\ref{def random Mp final} might not make sense if several of these variables take the value $+\infty$ simultaneously. Lem.~\ref{lem as def Sq} states that this happens with probability $0$. Let us now prove this result.

\begin{proof}[Proof of Lem.~\ref{lem as def Sq}]
Let $\tau \in \R$. For all $p \in \N^*$, let us define the function $g_p: t \mapsto \frac{m(t)}{(t-\tau)^{2p}}$. We have $S^p_\tau = \sum_{k \geq 1} g_p(\lambda_k)$. Since
\begin{align*}
\int_0^{+\infty} \min(g_p(t),1) \nu_m(\dx t) &= \frac{1}{16\pi} \int_0^{+\infty} \min\left(\frac{1}{(t-\tau)^{2p}},\frac{1}{m(t)}\right) \dx t\\
&\leq \int_0^{+\infty} \min\left(\frac{1}{(t-\tau)^{2p}},1\right) \dx t <+\infty,
\end{align*}
by Campbell's Theorem (Thm.~\ref{thm Campbell}), $S^p_\tau$ is almost surely finite. Then, almost surely, $S^p_\tau$ is finite for all $p \geq 1$. By Def.~\ref{def random Mp final}, this implies that almost surely, $M^{2p}_\tau$ is well-defined and finite for all $p \geq 1$.
\end{proof}

\begin{rem}
\label{rem expectation Sq}
By Campbell's Theorem, for any $p \geq 1$ and $\tau \geq 0$, $\esp{S^p_\tau} = \frac{1}{16\pi}\int_0^{+\infty} \frac{\dx t}{(t-\tau)^{2p}} = +\infty$.
\end{rem}


\subsection{Limit distribution of the spectral sums}
\label{subsec limit distribution of the spectral sums}

In this section, we study the asymptotic distribution of the randomized spectral sums $(S^p_\tau)_{p \geq 1}$ (cf.~Def.~\ref{def S p lambda}) in the setting described in the previous section (see also Sect.~\ref{subsec random model and results}, Step~3). More precisely, we prove that, after a good rescaling, the finite-dimensional marginals of $(S^p_\tau)_{p \geq 1}$ converge in distribution as $\tau \to +\infty$. We also describe the limit distribution.

First, we need the following basic result about multiplicity functions (see Def.~\ref{def multiplicity function}). It will be useful in the proof of Lem.~\ref{lem limit characteristic function} and the proof of the Random Weyl Law in Sect.~\ref{sec random Weyl Law}.

\begin{lem}
\label{lem multiplicity function}
Let $m$ be a multiplicity function. Then $m'$ is bounded. Moreover, there exists $\alpha <1$ such that $m(t) = O(t^\alpha)$ as $t \to +\infty$.
\end{lem}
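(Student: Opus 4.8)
The plan is to unwind Definition~\ref{def multiplicity function} and reduce everything to elementary one-variable estimates. By hypothesis, $m$ is of class $\mathcal{C}^1$ on $[0,+\infty)$ and there exist $\beta >0$, a constant $C>0$ and a threshold $T \geq 1$ such that $\norm{m'(t)} \leq C t^{-\beta}$ for all $t \geq T$.

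First I would prove that $m'$ is bounded. On the compact interval $[0,T]$ the function $m'$ is continuous, hence bounded. On $[T,+\infty)$ the map $t \mapsto t^{-\beta}$ is nonincreasing, so $\norm{m'(t)} \leq C t^{-\beta} \leq C T^{-\beta}$. Combining the two estimates gives a uniform bound on $m'$ over $[0,+\infty)$. This is the place where the asymptotic bound $m'(t) = O(t^{-\beta})$ is upgraded to an everywhere bound, using monotonicity of $t^{-\beta}$ together with continuity on a compact set.

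Next I would integrate to control the growth of $m$. For $t \geq T$, write $m(t) = m(T) + \int_T^t m'(s)\,\dx s$, so that $\norm{m(t)} \leq m(T) + C \int_T^t s^{-\beta}\,\dx s$, and distinguish three cases according to the position of $\beta$ relative to $1$. If $\beta >1$, the integral converges, so $m$ is bounded and $m(t) = O(1)$. If $\beta = 1$, the integral is $O(\ln t)$, hence $m(t) = O(\ln t) = O(t^{1/2})$. If $0 < \beta < 1$, the integral equals $\frac{1}{1-\beta}\left(t^{1-\beta} - T^{1-\beta}\right) = O(t^{1-\beta})$, hence $m(t) = O(t^{1-\beta})$. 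Setting $\alpha = \max\!\left(\tfrac12,\, 1-\beta\right)$, which lies in $[\tfrac12,1)$ since $\beta >0$, all three cases yield $m(t) = O(t^\alpha)$ as $t \to +\infty$, which is the desired conclusion.

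There is no genuine obstacle in this lemma: the argument is entirely elementary, and the only point requiring a little care is the case distinction on $\beta$, since the exponent $\alpha$ produced by the integration depends on whether $\beta$ exceeds, equals, or is smaller than $1$.
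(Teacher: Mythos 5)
Your proof is correct and follows essentially the same approach as the paper: integrate the decay bound on $m'$ to obtain a sublinear bound on $m$. The only difference is cosmetic — the paper avoids your three-way case split on $\beta$ by noting that one may assume without loss of generality that $0<\beta<1$ (since $t^{-\beta}=O(t^{-\beta'})$ for any $0<\beta'\leq\beta$), which immediately gives $\alpha=1-\beta\in(0,1)$; your explicit cases lead to the same conclusion.
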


\begin{proof}
The function $m'$ is continuous on $[0,+\infty)$ and $m'(t) \xrightarrow[t \to +\infty]{}0$. Hence $\norm{m'}$ admits a maximum on $[0,+\infty)$.

Let $\beta >0$ be such that $m'(t) = O(t^{-\beta})$ as $t \to +\infty$. Without loss of generality, we can assume that $0< \beta < 1$. Since $m$ is a $\mathcal{C}^1$ function, we have $m(t) = m(1) + \int_1^t m'(s) \dx s$ for all $t \geq 0$. Then, as $t \to +\infty$, we have:
\begin{equation*}
\int_1^t \norm{m'(s)} \dx s = O\left(\int_1^t s^{-\beta} \dx s\right) = O(t^{1-\beta}).
\end{equation*}
Setting $\alpha = 1-\beta \in (0,1)$, we have $m(t) = O(t^\alpha)$ as $t \to +\infty$.
\end{proof}

\begin{dfn}
\label{def psi}
Let $p \in \N^*$, we denote by $\psi$ be the function from $\R^p$ to $\R$ defined by:
\begin{equation*}
\psi : (x_1,\dots,x_p) \mapsto \exp\left(-\frac{1}{16\pi} \int_{-\infty}^{+\infty} 1-\exp\left(i\sum_{q=1}^p \frac{x_q}{t^{2q}}\right) \dx t\right).
\end{equation*}
\end{dfn}

\begin{lem}
\label{lem limit characteristic function}
Let $m$ be a multiplicity function, let $(\lambda_k)_{k \geq 1}$ denote the associated randomized spectrum (see Def.~\ref{def random spectrum}) and let $m_k = m(\lambda_k)$ for all $k \geq 1$. Let the randomized spectral sums $(S^p_\tau)_{p \geq 1}$ be defined by Def.~\ref{def S p lambda}. Let $p \in \N^*$, we denote by $S_\tau = \left(m(\tau)S^1_\tau,m(\tau)^3S^2_\tau,\dots,m(\tau)^{2p-1}S^p_\tau\right)$, for any $\tau \geq 0$.

Then $S_\tau\xrightarrow[\tau \to +\infty]{}S$ in distribution, where $S$ is a random vector in $\R^p$ whose characteristic function is $\psi$. In particular, the limit distribution does not depend on $m$.
\end{lem}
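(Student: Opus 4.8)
The plan is to compute the characteristic function of $S_\tau$ explicitly via Campbell's Theorem (Thm.~\ref{thm Campbell}), rescale the integration variable by the factor $m(\tau)$, pass to the limit by dominated convergence, and conclude with Lévy's continuity theorem. First I would fix $x = (x_1,\dots,x_p) \in \R^p$ and write
\[
\prsc{x}{S_\tau} = \sum_{q=1}^p x_q\, m(\tau)^{2q-1} S^q_\tau = \sum_{k \geq 1} f_\tau(\lambda_k), \qquad f_\tau(t) = m(t) \sum_{q=1}^p \frac{x_q\, m(\tau)^{2q-1}}{(t-\tau)^{2q}}.
\]
Essentially the same estimate as in the proof of Lem.~\ref{lem as def Sq} gives $\int_0^{+\infty} \min(\norm{f_\tau(t)},1)\, \nu_m(\dx t) < +\infty$, so Campbell's Theorem yields
\[
\esp{e^{i\prsc{x}{S_\tau}}} = \exp\left( \int_0^{+\infty} \bigl(e^{if_\tau(t)}-1\bigr)\, \nu_m(\dx t)\right) = \exp\left(\frac{1}{16\pi} \int_0^{+\infty} \frac{e^{if_\tau(t)}-1}{m(t)}\, \dx t\right).
\]

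Next I would perform the linear change of variables $t = \tau + m(\tau) s$. Since $m \geq 1$ and $m(t) = O(t^\alpha)$ for some $\alpha < 1$ (Lem.~\ref{lem multiplicity function}), one has $\tau / m(\tau) \to +\infty$, so the range $t \in [0,+\infty)$ corresponds to $s \in (-\tau/m(\tau),+\infty)$. Writing $r_\tau(s) = m(\tau + m(\tau) s)/m(\tau)$ and $P(s) = \sum_{q=1}^p x_q\, s^{-2q}$, a direct computation gives $f_\tau(\tau + m(\tau)s) = r_\tau(s) P(s)$ and $\frac{e^{if_\tau(t)}-1}{m(t)}\,\dx t = \frac{e^{i r_\tau(s) P(s)}-1}{r_\tau(s)}\, \dx s$, whence
\[
\esp{e^{i\prsc{x}{S_\tau}}} = \exp\left(\frac{1}{16\pi}\int_\R \mathbf{1}_{\{s > -\tau/m(\tau)\}}\, \frac{e^{i r_\tau(s) P(s)}-1}{r_\tau(s)}\, \dx s\right).
\]

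Then I would let $\tau \to +\infty$ under the integral sign. For each fixed $s \neq 0$, using that $m$ is $\mathcal{C}^1$,
\[
\norm{r_\tau(s) - 1} = \frac{1}{m(\tau)}\norm{\int_\tau^{\tau + m(\tau)s} m'(u)\, \dx u} \leq \norm{s}\, \sup_{u \geq \tau - m(\tau)\norm{s}} \norm{m'(u)} \xrightarrow[\tau \to +\infty]{} 0,
\]
because $m'(u) \to 0$ and $\tau - m(\tau)\norm{s} \sim \tau$; hence the integrand converges pointwise to $e^{iP(s)} - 1$. For the domination I would fix $\delta > 0$ small enough that $\norm{r_\tau(s) - 1} \leq \tfrac12$ for all $\norm{s} \leq \delta$ and all large $\tau$ (possible since $m'$ is bounded, Lem.~\ref{lem multiplicity function}): on $\{\norm{s} \leq \delta\}$ the integrand is then bounded by $2/r_\tau(s) \leq 4$, while on $\{\norm{s} > \delta\}$ the inequality $\norm{e^{iz}-1}\leq\norm{z}$ (for real $z$) gives $\norm{\frac{e^{i r_\tau(s) P(s)}-1}{r_\tau(s)}} \leq \norm{P(s)}$, which is integrable there since $P$ is continuous off $0$ and $\norm{P(s)} = O(s^{-2})$ at infinity. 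Dominated convergence then gives
\[
\int_\R \mathbf{1}_{\{s > -\tau/m(\tau)\}}\, \frac{e^{i r_\tau(s) P(s)}-1}{r_\tau(s)}\, \dx s \xrightarrow[\tau \to +\infty]{} \int_{-\infty}^{+\infty} \bigl(e^{iP(s)}-1\bigr)\, \dx s = -\int_{-\infty}^{+\infty} \left(1 - e^{i\sum_{q=1}^p x_q s^{-2q}}\right) \dx s,
\]
so $\esp{e^{i\prsc{x}{S_\tau}}} \to \psi(x)$ for every $x \in \R^p$.

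To finish, an entirely analogous dominated convergence argument (using $\norm{1 - e^{iP(t)}} \leq \min(2, \norm{P(t)})$ and $\norm{P(t)} = O(t^{-2})$) shows that $\psi$ is finite and continuous on $\R^p$ with $\psi(0)=1$; by Lévy's continuity theorem $\psi$ is then the characteristic function of some random vector $S$ in $\R^p$ and $S_\tau \to S$ in distribution, and the limiting characteristic function plainly does not depend on $m$. I expect the main obstacle to be the domination near $s = 0$: there $P$ blows up, so the only available bound on the integrand is $2/r_\tau(s)$, which is usable precisely because $r_\tau$ stays within a fixed distance of $1$ on a fixed neighbourhood of $0$ — this is exactly where the regularity of $m$ (boundedness of $m'$) and the scaling by $m(\tau)$ come in. Away from $0$ everything is governed by the quadratic decay of $P$ and is routine.
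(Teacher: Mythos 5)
Your proposal is correct and follows essentially the same route as the paper's proof: Campbell's Theorem to get the characteristic function, the single affine change of variables $t = \tau + m(\tau)s$ (the paper does it in two steps, which amounts to the same), the pointwise limit $m(\tau + m(\tau)s)/m(\tau) \to 1$ from the boundedness and decay of $m'$, a dominated convergence argument with the same two-regime dominating function (bounded near $s=0$, $O(s^{-2})$ far away), and Lévy's Continuity Theorem after checking continuity of $\psi$ at $0$. The only cosmetic difference is that you package the rescaled multiplicity ratio and the limit phase into the auxiliary functions $r_\tau(s)$ and $P(s)$, whereas the paper keeps everything explicit.
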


\begin{proof}
Let $p \in \N^*$ and let $\tau \geq 0$. Let us denote by $\psi_\tau$ the characteristic function of $S_\tau$ and let  $x=(x_1,\dots,x_p) \in \R^p$. For any $q \in \{1,\dots,p\}$, let $g_q:t \mapsto \frac{m(t)}{(t-\tau)^{2q}}$. We have already seen (cf.~the proof of Lem.~\ref{lem as def Sq}) that $g_q$ satisfies the hypothesis of Campbell's Theorem (Thm.~\ref{thm Campbell}) and that $S^q_\tau = \sum_{k \geq 1} g_q(\lambda_k)$.  Hence, we have:
\begin{equation*}
\psi_\tau(x) = \esp{\exp\left(i \sum_{q=1}^p x_qm(\tau)^{2q-1}S^q_\tau\right)} = \exp\left(-I_\tau(x)\right),
\end{equation*}
where
\begin{align*}
I_\tau(x) &= \int_0^{+\infty} \left(1-\exp\left(i\sum_{q=1}^p x_qm(\tau)^{2q-1}\frac{m(t)}{(t-\tau)^{2q}}\right)\right) \nu_m(\dx t)\\
&= \frac{1}{16\pi} \int_{-\tau}^{+\infty} \left(1-\exp\left(i\sum_{q=1}^p x_qm(\tau)^{2q-1}\frac{m(\tau+t)}{t^{2q}}\right)\right)\frac{\dx t}{m(\tau+t)}\\
&= \frac{1}{16\pi} \int_{-\frac{\tau}{m(\tau)}}^{+\infty} \left(1-\exp\left(i\frac{m(\tau+tm(\tau))}{m(\tau)}\sum_{q=1}^p \frac{x_q}{t^{2q}}\right)\right)\frac{m(\tau)}{m(\tau+tm(\tau))}\dx t.
\end{align*}

Given $\tau \in \R$ and $t \in \left[-\frac{\tau}{m(\tau)},+\infty\right)$, we have:
\begin{equation}
\label{eq simple convergence I tau}
\norm{\frac{m(\tau+tm(\tau))}{m(\tau)}-1} = \frac{1}{m(\tau)}\norm{\int_\tau^{\tau+tm(\tau)}m'(s) \dx s} \leq \norm{t} \max \left\{\norm{m'(s)} \rule{0pt}{10pt}\mvert s \geq \tau - \norm{t}m(\tau)\right\}.
\end{equation}
Let $t \in \R$, by Lem.~\ref{lem multiplicity function} we have $m(\tau) = o(\tau)$ so that $\tau - \norm{t}m(\tau) \xrightarrow[\tau \to +\infty]{} +\infty$. By Def.~\ref{def multiplicity function}, $m'(t) \xrightarrow[\tau \to +\infty]{} 0$. Hence, the maximum on the right-hand side of Eq.~\eqref{eq simple convergence I tau} goes to $0$ as $\tau \to +\infty$ and $\frac{m(\tau+tm(\tau))}{m(\tau)} \xrightarrow[\tau \to +\infty]{}1$. Thus, for all $t \in \R^*$,
\begin{equation*}
\left(1-\exp\left(i \frac{m(\tau+tm(\tau))}{m(\tau)}\sum_{q=1}^p \frac{x_q}{t^{2q}}\right)\right)\frac{m(\tau)}{m(\tau+tm(\tau))}\mathbf{1}_{\left[-\frac{\tau}{m(\tau)},+\infty\right)}(t) \xrightarrow[\tau \to +\infty]{} 1-\exp\left(i\sum_{q=1}^p \frac{x_q}{t^{2q}}\right).
\end{equation*}
We need an integrable dominating function, in order to apply Lebesgue's Dominated Convergence Theorem. By Lem.~\ref{lem multiplicity function}, $m'$ is bounded. Let us denote by $\Norm{m'} = \sup_{t \geq 0} \norm{m'(t)}$. For any $t \in \R^*$, if $\norm{t} \leq \frac{1}{2\Norm{m'}}$, we have $\norm{\frac{m(\tau+tm(\tau))}{m(\tau)}-1} \leq \frac{1}{2}$ by Eq.~\eqref{eq simple convergence I tau}. Then $\frac{m(\tau+tm(\tau))}{m(\tau)} \geq \frac{1}{2}$ and
\begin{equation*}
\norm{\left(1-\exp\left(i\frac{m(\tau+tm(\tau))}{m(\tau)} \sum_{q=1}^p \frac{x_q}{t^{2q}}\right)\right)}\frac{m(\tau)}{m(\tau+tm(\tau))} \leq 4.
\end{equation*}
If $\norm{t} \geq \frac{1}{2\Norm{m'}}$, we have:
\begin{equation*}
\norm{\left(1-\exp\left(i \frac{m(\tau+tm(\tau))}{m(\tau)}\sum_{q=1}^p \frac{x_q}{t^{2q}}\right)\right)}\frac{m(\tau)}{m(\tau+tm(\tau))} \leq \norm{\sum_{q=1}^p\frac{x_q}{t^{2q}}} \leq \Norm{x}\sum_{q=1}^p\frac{1}{t^{2q}},
\end{equation*}
where $\Norm{\cdot}$ stands for the Euclidean norm of $\R^p$. Thus, we can use the following dominating function:
\begin{equation*}
t \mapsto \left\{ \begin{aligned}
&\Norm{x}\sum_{q=1}^p\frac{1}{t^{2q}}, & &\text{if }\norm{t} \geq \frac{1}{2\Norm{m'}},\\
&4, & &\text{otherwise.}
\end{aligned} \right.
\end{equation*}
Hence, by Lebesgue's Theorem, for any $x \in \R^p$, we have:
\begin{equation*}
I_\tau(x) \xrightarrow[\tau \to +\infty]{} \frac{1}{16\pi} \int_{-\infty}^{+\infty} 1-\exp\left(i\sum_{q=1}^p \frac{x_q}{t^{2q}}\right) \dx t,
\end{equation*}
and $\psi_\tau$ converges pointwise to $\psi$ (see Def.~\ref{def psi}).

Let $x \in \R^p$ be such that $\Norm{x} \leq 1$. For any $t \in \R^*$, we have:
\begin{equation*}
\norm{1-\exp\left(i\sum_{q=1}^p \frac{x_q}{t^{2q}}\right)} \leq \min\left(2,\norm{\sum_{q=1}^p \frac{x_q}{t^{2q}}}\right) \leq \min\left(2, \Norm{x}\sum_{q=1}^p \frac{1}{t^{2q}}\right)\leq \min\left(2,\sum_{q=1}^p \frac{1}{t^{2q}}\right).
\end{equation*}
The function $t \mapsto \min\left(2,\sum_{q=1}^p \frac{1}{t^{2q}}\right)$ is integrable. Hence, by Lebesgue's Theorem,
\begin{equation*}
\int_{-\infty}^{+\infty} 1-\exp\left(i\sum_{q=1}^p \frac{x_q}{t^{2q}}\right) \dx t \xrightarrow[x \to 0]{} 0
\end{equation*}
and $\psi(x) \xrightarrow[x \to 0]{} 1 = \psi(0)$. Since $\psi$ is continuous at $0$, by Lévy's Continuity Theorem, $\psi$ is the characteristic function of a probability distribution. Moreover, denoting by $S$ a random vector in~$\R^p$ whose characteristic function is $\psi$, we have $S_\tau \xrightarrow[\tau \to +\infty]{}S$ in distribution.
\end{proof}

\begin{lem}
\label{lem limit distribution}
Let $p \in \N^*$ and let $S=(S^1,\dots,S^p)$ be a random vector in $\R^p$ whose characteristic function is $\psi$ (cf.~Def.~\ref{def psi}). Then, $S$ admits a smooth density $D$ with respect to the Lebesgue measure of $\R^p$. The map $D$ is the inverse Fourier transform of $\psi$. Moreover, $S$ is supported in:
\begin{equation*}
\bar{\Omega} = \left\{(x_1,\dots,x_p) \in \R^p \rule{0pt}{9pt}\mvert x_1 \geq 0 \text{ and }\forall q \in \{2,\dots,p\},\ 0 \leq x_q \leq x_1 x_{q-1} \right\}.
\end{equation*}
\end{lem}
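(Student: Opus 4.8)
The statement has two parts, the smoothness of the density and the location of the support, and I would prove them by two independent arguments, both resting on the same change of variables $s = t^{-2}$. Writing $R_x(s) = \sum_{q=1}^p x_q s^q$ and splitting $\int_{-\infty}^{+\infty}$ into twice $\int_0^{+\infty}$, the substitution $s = t^{-2}$ (whose Jacobian $\tfrac12 s^{-3/2}$ exactly cancels the factor $2$) turns the exponent in Def.~\ref{def psi} into
\begin{equation*}
-\ln\psi(x) = J(x) := \frac{1}{16\pi}\int_0^{+\infty}\bigl(1 - e^{iR_x(s)}\bigr)\,s^{-3/2}\,\dx s,
\end{equation*}
so that $\norm{\psi(x)} = e^{-\Re J(x)}$ with $\Re J(x) = \frac{1}{16\pi}\int_0^{+\infty}(1-\cos R_x(s))\,s^{-3/2}\,\dx s \geq 0$.

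For the density I would establish the key estimate $\Re J(x) \geq c_p\bigl(\max_{1\leq q\leq p}\norm{x_q}\bigr)^{1/(2p)}$ for a constant $c_p>0$, valid once $\max_q\norm{x_q}\geq 1$. The point is to homogenize: set $\rho(x) = \max_{1\leq q\leq p}\norm{x_q}^{1/q}$ and substitute $s = w/\rho(x)$, which rescales $R_x$ into $\bar R(w) = \sum_q y_q w^q$ with $y_q = x_q/\rho(x)^q$, hence $\max_q\norm{y_q}=1$, and produces a factor $\rho(x)^{1/2}$, so that $\Re J(x) = \frac{\rho(x)^{1/2}}{16\pi}\Phi(y)$ where $\Phi(y) = \int_0^{+\infty}\bigl(1-\cos\sum_q y_q w^q\bigr)w^{-3/2}\,\dx w$. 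Since $\norm{y_q}\leq 1$, one has $\norm{\sum_q y_q w^q} = O(w)$ near $0$, so $1-\cos(\cdot) = O(w^2)$ absorbs the $w^{-3/2}$; together with the bound $2w^{-3/2}$ for large $w$, this gives a dominating function independent of $y$ on the compact set $\{y : \max_q\norm{y_q}=1\}$, whence $\Phi$ is continuous there. It is also positive there, since $\Phi(y)=0$ would force the polynomial $\sum_q y_q w^q$, which vanishes at $0$, to be constant, i.e.\ zero. Thus $c_p := \min_{\max_q\norm{y_q}=1}\Phi(y)>0$, and since $\rho(x)\geq(\max_q\norm{x_q})^{1/p}$ when $\max_q\norm{x_q}\geq1$, the announced bound follows. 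Consequently $\norm{\psi(x)}$ decays faster than any power of $\Norm{x}$, so $x\mapsto\Norm{x}^N\norm{\psi(x)}$ is integrable on $\R^p$ for every $N$. By Lem.~\ref{lem limit characteristic function}, $\psi$ is the characteristic function of $S$; Fourier inversion then shows that the law of $S$ has a continuous density $D$ equal to the inverse Fourier transform of $\psi$, differentiation under the integral sign (licit by the previous integrability) shows $D\in\mathcal{C}^\infty$, and $D\geq0$ everywhere as a continuous probability density.

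For the support I would exhibit $\psi$ as the characteristic function of a Poisson functional that manifestly lies in $\bar{\Omega}$. Let $\Pi$ be a Poisson point process on $(0,+\infty)$ of intensity $\frac{1}{16\pi}s^{-3/2}\,\dx s$ — equivalently the image of a Poisson process of intensity $\frac{1}{16\pi}$ on $\R$ under $t\mapsto t^{-2}$ — and set $\Sigma = (\Sigma_1,\dots,\Sigma_p)$ with $\Sigma_q = \sum_{s\in\Pi}s^q$. Campbell's Theorem (Thm.~\ref{thm Campbell}) applied to $f(s) = \min(s^q,1)$ gives $\esp{\sum_{s\in\Pi}\min(s^q,1)}<+\infty$ (near $0$ one integrates $s^{q-3/2}$ with $q\geq1$, and only finitely many atoms exceed $1$), so each $\Sigma_q$, hence $\Sigma$, is almost surely finite; and the exponential form of Campbell's Theorem yields $\esp{e^{i\prsc{x}{\Sigma}}} = \exp\bigl(-\frac{1}{16\pi}\int_0^{+\infty}(1-e^{iR_x(s)})s^{-3/2}\,\dx s\bigr) = \psi(x)$. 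Since a characteristic function determines a distribution, $S$ has the same law as $\Sigma$. Finally $\Sigma_q = \sum_{s\in\Pi}s^q\geq0$, and, all summands being non-negative, $\Sigma_q = \sum_{s\in\Pi}s\cdot s^{q-1}\leq\bigl(\sum_{s\in\Pi}s\bigr)\bigl(\sum_{s'\in\Pi}(s')^{q-1}\bigr) = \Sigma_1\Sigma_{q-1}$ for $2\leq q\leq p$; hence $\Sigma\in\bar{\Omega}$ almost surely, so $S$ is supported in the closed set $\bar{\Omega}$.

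The only non-routine step is the lower bound on $\Re J$. Everything depends on choosing the homogenization scale $\rho(x)=\max_q\norm{x_q}^{1/q}$, which makes the rescaled profile $\bar R$ land in a fixed compact set and lets a compactness/positivity argument take over; the naive scale $\max_q\norm{x_q}$ does not work because the coordinates $x_q$ enter $R_x$ with different homogeneities. Once this estimate is in hand, the Fourier-inversion part and the Poisson representation are both standard.
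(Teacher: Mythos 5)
Your proof is correct, and in both halves it takes a genuinely different route from the paper's. For the decay of $\psi$, the paper relies on Lem.~\ref{lem psi fast decreasing} (App.~\ref{sec technical lemmas}), whose proof restricts the $t$-integral to a region where the argument of the cosine lies in $[-\pi/2,\pi/2]$, bounds $1-\cos u\geq u^2/\pi$ there, and then checks that the bilinear form $B(y,z)=\sum_{i,j}\frac{y_i z_j}{2i+2j-1}$ is positive definite. Your homogenization via $\rho(x)=\max_q\norm{x_q}^{1/q}$ reduces the estimate to compactness and the elementary fact that a nonconstant polynomial cannot land in $2\pi\Z$ on a set of full measure, which yields the same $\exp\left(-c\Norm{x}^{1/(2p)}\right)$ decay by arguably cleaner means; the phrase ``$\Phi(y)=0$ would force the polynomial to be constant'' is a small shortcut and deserves a sentence: away from its finitely many critical points a nonconstant polynomial is a local diffeomorphism, so the preimage of $2\pi\Z$ is Lebesgue-null, whence the contradiction. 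For the support, the paper specializes Lem.~\ref{lem limit characteristic function} to $m\equiv 1$, notes that the finite-$\tau$ spectral sums satisfy $0<S^q_\tau<S^1_\tau S^{q-1}_\tau$, and then passes to the limit using the density of $S$ to discard $\partial\bar{\Omega}$. You instead realize the limit law directly as a Poisson functional $\Sigma=(\Sigma_1,\dots,\Sigma_p)$ with $\Sigma_q=\sum_{s\in\Pi}s^q$, where $\Pi$ has intensity $\frac{1}{16\pi}s^{-3/2}\dx s$ on $(0,+\infty)$, match characteristic functions via Campbell's Theorem, and read off $0\leq\Sigma_q\leq\Sigma_1\Sigma_{q-1}$ from a one-line term-by-term comparison on the representative; this bypasses both the limiting argument and the boundary-measure step, and incidentally makes explicit the stable-law structure of $S$ already noted in Rem.~\ref{rem limit distribution}.
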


\begin{proof}
By Lem.~\ref{lem psi fast decreasing}, the function $\psi$ is fast decreasing. Hence, its inverse Fourier transform is well-defined and is smooth. Denoting by $D$ this function, the distribution of $S$ admits the density~$D$ with respect to the Lebesgue measure.

By Lem.~\ref{lem limit characteristic function}, the random vector $S$ is the limit of $\left(m(\tau)S^1_\tau,m(\tau)^3S^2_\tau,\dots,m(\tau)^{2p-1}S^p_\tau\right)$ in distribution, where $m$ is any multiplicity function. Choosing $m$ to be the constant unit function, we have:
\begin{equation*}
(S^1_\tau,\dots,S^p_\tau)=\left(\sum_{k \geq 1} \frac{1}{(\lambda_k - \tau)^2},\dots,\sum_{k \geq 1} \frac{1}{(\lambda_k - \tau)^{2p}}\right) \xrightarrow[\tau \to +\infty]{} S,
\end{equation*}
in distribution, where $(\lambda_k)_{k \geq 1}$ is a Poisson point process on $[0,+\infty)$ with constant intensity $\frac{1}{16\pi}$. For any $\tau \in \R$ and any $q \in \{1,\dots,p\}$, we have $S^q_\tau >0$. Moreover, if $q \geq 2$, then
\begin{equation*}
S^{q-1}_\tau S^1_\tau = \sum_{k \neq l} \frac{1}{(\lambda_k-\tau)^{2q-2}}\frac{1}{(\lambda_l-\tau)^2} + \sum_{k \geq 1} \frac{1}{(\lambda_k-\tau)^{2q}} > S^q_\tau.
\end{equation*}
Since $S$ admits a density with respect to the Lebesgue measure, we have $\P\left(S \in \partial \bar{\Omega} = 0\right)$. Hence,
\begin{equation*}
\P\left(S \in \bar{\Omega}\right) = \lim_{\tau \to +\infty} \P\left((S^1_\tau,\dots,S^p_\tau) \in \bar{\Omega}\right) =1.
\end{equation*}
Finally the support of $S$ is included in the closed set $\bar{\Omega}$.
\end{proof}

\begin{rem}
\label{rem limit distribution}
Let $p \in \N^*$ and let $S=(S^1,\dots,S^p)$ be a random vector in $\R^p$ whose characteristic function is $\psi$. Then for any $q \in \{1,\dots,p\}$, the characteristic function of $S^q$ is
\begin{equation*}
\psi_q : x \mapsto  \exp\left(-\frac{1}{16\pi} \int_{-\infty}^{+\infty} 1-\exp\left( \frac{ix}{t^{2q}}\right) \dx t\right).
\end{equation*}
For any $x \in \R^*$, we have:
\begin{equation*}
\int_{-\infty}^{+\infty} 1-\exp\left( \frac{ix}{t^{2q}}\right) \dx t = 2 \int_0^{+\infty} \exp\left( \frac{ix}{t^{2q}}\right)-1 \dx t = \frac{\norm{x}^\frac{1}{2q}}{q}\int_0^{+\infty} \frac{e^{i\frac{x}{\norm{x}} u}-1}{u^{1+\frac{1}{2q}}} \dx u.
\end{equation*}
By~\cite[p.~90]{UZ1999}, the integral on the right-hand side equals $\exp\left(-i\frac{x}{\norm{x}}\frac{\pi}{4q}\right)\Gamma\left(-\frac{1}{2q}\right)$, where $\Gamma$ is Euler's Gamma Function. Hence, denoting by $c_q = \left(\frac{1}{8\pi}\cos\left(\frac{\pi}{4q}\right)\Gamma\left(1-\frac{1}{2q}\right)\right)^{2q}$, we have:
\begin{equation*}
\psi_q(x) = \exp\left(-\norm{c_q x}^\frac{1}{2q}\left(1-i\frac{x}{\norm{x}}\tan\left(\frac{\pi}{4q}\right)\right)\right).
\end{equation*}

This shows that $S^q$ follows a one-sided $\frac{1}{2q}$-stable law, more precisely the stable law of parameters $\left(\frac{1}{2q},1,c_q,0\right)$. This distribution is known to admit a smooth density with respect to the Lebesgue measure, and to be supported in $[0,+\infty)$. For $q \geq 2$, to the best of our knowledge, no analytical expression of the density is known. For $q =1$, this distribution coincides with the Lévy distribution of parameters $0$ and $\frac{1}{128\pi}$. In particular, its density is $t \mapsto \frac{1}{16\pi} t^{-\frac{3}{2}}\exp\left(-\frac{1}{256\pi t}\right)\mathbf{1}_{[0,+\infty)}(t)$. Note also that the expectation of $S^q$ is infinite, for all $q \geq 1$.
\end{rem}

\begin{lem}
\label{lem limit density quotient}
Let $p \in \N^*$ and let $S=(S^1,\dots,S^p)$ be a random vector in $\R^p$ whose characteristic function is $\psi$ (cf.~Def.~\ref{def psi}). Then the random vector $\mathcal{S}=\left(\frac{S^2}{(S^1)^2},\dots,\frac{S^p}{(S^1)^p}\right)$ admits a smooth density
\begin{equation}
\label{eq def D}
\mathcal{D}:(y_2,\dots,y_p) \longmapsto \int_0^{+\infty} (x_1)^{\frac{p(p+1)}{2}-1}D\left(x_1,y_2(x_1)^2,\dots,y_p(x_1)^p\right) \dx x_1,
\end{equation}
with respect to the Lebesgue of $\R^{p-1}$, where $D$ is the inverse Fourier transform of $\psi$. Moreover, the support of $\mathcal{D}$ is contained in the compact set:
\begin{equation*}
K = \left\{(y_2,\dots,y_p) \in \R^{p-1} \rule{0pt}{9pt}\mvert 0 \leq y_p \leq \dots \leq y_2 \leq 1 \right\}.
\end{equation*}
\end{lem}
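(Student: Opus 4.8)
The plan is to compute the law of $\mathcal{S}$ by performing an explicit change of variables in the density $D$ of $S$, keeping $S^1$ as an auxiliary coordinate. By Lem.~\ref{lem limit distribution}, the law of $S$ has the smooth density $D$, which is non-negative and, being continuous with $\P(S\notin\bar{\Omega})=0$, vanishes outside $\bar{\Omega}$. Since the only point of $\bar{\Omega}$ with $x_1=0$ is the origin — the constraints force $x_2=\dots=x_p=0$ as soon as $x_1=0$ — the set $\bar{\Omega}\cap\{x_1=0\}$ is Lebesgue-negligible in $\R^p$, so $S^1>0$ almost surely. Hence for every bounded measurable $h:\R^{p-1}\to\R$ one has $\esp{h(\mathcal{S})}=\int_{\{x_1>0\}}h\!\left(x_2/x_1^2,\dots,x_p/x_1^p\right)D(x)\,\dx x$, and I would substitute $x_q=y_qx_1^q$ for $2\leq q\leq p$ while keeping $x_1$. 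The Jacobian matrix of this substitution is triangular with diagonal entries $1,x_1^2,\dots,x_1^p$, so its absolute value is $x_1^{\frac{p(p+1)}{2}-1}$; applying Tonelli's Theorem (first for $h\geq 0$, then for general bounded $h$ by linearity) yields $\esp{h(\mathcal{S})}=\int_{\R^{p-1}}h(y)\,\mathcal{D}(y)\,\dx y$ with $\mathcal{D}$ as in Eq.~\eqref{eq def D}, once that integral is known to be finite.

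The next step is to show that the integral defining $\mathcal{D}$ converges locally uniformly in $y=(y_2,\dots,y_p)$ and that one may differentiate under it arbitrarily often. By Lem.~\ref{lem psi fast decreasing}, $\psi$ is fast decreasing, so its inverse Fourier transform $D$, and each of its partial derivatives, decays faster than any inverse power of $\Norm{\cdot}$. For $y$ ranging over a fixed compact set and $x_1\geq 1$, the point $\left(x_1,y_2x_1^2,\dots,y_px_1^p\right)$ has Euclidean norm at least $x_1$, so $D$, or any fixed derivative of it, evaluated there is $O(x_1^{-N})$ for every $N$; multiplying by $x_1^{\frac{p(p+1)}{2}-1}$ and by the extra monomials $x_1^q$ coming from differentiating in $y_q$ still leaves a tail integrable at $+\infty$ once $N$ is taken large enough. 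Near $x_1=0$, $D$ and its derivatives are bounded on compact sets while $x_1^{\frac{p(p+1)}{2}-1}$ is integrable there. So the integrand of Eq.~\eqref{eq def D} and all its $y$-derivatives are dominated, locally uniformly in $y$, by an integrable function of $x_1$; this simultaneously gives the convergence of the integral, the smoothness of $\mathcal{D}$, and — combined with the change of variables of the previous paragraph — the fact that $\mathcal{D}$ is the density of $\mathcal{S}$.

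For the support statement, I would observe that for $x_1>0$ the point $\left(x_1,y_2x_1^2,\dots,y_px_1^p\right)$ lies in $\bar{\Omega}$ if and only if $(y_2,\dots,y_p)\in K$: the defining inequalities of $\bar{\Omega}$ read $y_2x_1^2\geq 0$, $y_2x_1^2\leq x_1\cdot x_1$, and $y_qx_1^q\leq x_1\cdot y_{q-1}x_1^{q-1}$ for $3\leq q\leq p$, i.e., after dividing by the relevant positive powers of $x_1$, exactly $0\leq y_p\leq\cdots\leq y_2\leq 1$. Hence, if $(y_2,\dots,y_p)\notin K$, then $D$ vanishes at $\left(x_1,y_2x_1^2,\dots,y_px_1^p\right)$ for every $x_1>0$, the integrand in Eq.~\eqref{eq def D} is identically zero, and $\mathcal{D}(y_2,\dots,y_p)=0$. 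Therefore $\operatorname{supp}(\mathcal{D})\subseteq K$, which is closed and bounded, hence compact.

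The only genuinely delicate point is controlling the integrand as $x_1\to+\infty$: there the containment in $\bar{\Omega}$ provides no decay at all, and one must rely on the rapid decay of $D$ furnished by Lem.~\ref{lem psi fast decreasing}, while carefully keeping track of the polynomial factors introduced by the Jacobian $x_1^{\frac{p(p+1)}{2}-1}$ and by differentiation under the integral sign. Everything else reduces to a triangular change of variables together with a routine Tonelli/dominated-convergence argument.
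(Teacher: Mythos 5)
Your change of variables and your support argument are sound and essentially match the paper's. The gap is in how you handle the convergence and smoothness of the integral in Eq.~\eqref{eq def D}, which you yourself flag as the delicate point. You assert that, since $\psi$ is fast decreasing, its inverse Fourier transform $D$ ``and each of its partial derivatives, decays faster than any inverse power of $\Norm{\cdot}$''. This is the wrong direction of the Fourier correspondence: rapid decay of $\psi$ gives that $D$ is \emph{smooth} (because $\xi^\alpha\psi(\xi)$ is integrable for every multi-index $\alpha$), whereas rapid decay of $D$ would require $\psi$ to be smooth with integrable derivatives — and here $\psi$ is not even differentiable at the origin. Indeed, by Rem.~\ref{rem limit distribution} each $S^q$ has infinite mean and $\psi_q$ has a non-smooth $\norm{x}^{1/2q}$ singularity at $0$; concretely, the marginal density of $S^1$ (a Lévy density) decays only like $t^{-3/2}$, so $D$ cannot decay faster than every polynomial. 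Consequently your domination argument as $x_1 \to +\infty$ does not go through, and both the convergence of the integral for every fixed $y$ and the differentiation under the integral sign are left unjustified.

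The paper's route avoids any pointwise decay estimate for $D$. It first records that $\tilde{\mathcal{S}} = (S^1,\mathcal{S})$ admits the smooth density $\tilde D(x_1,y) = x_1^{\frac{p(p+1)}{2}-1}D\bigl(x_1,y_2 x_1^2,\dots,y_p x_1^p\bigr)$, via the same change of variables you perform. Marginalizing over $x_1$ then exhibits Eq.~\eqref{eq def D} as \emph{a} density of $\mathcal{S}$ (finite for a.e.~$y$ by Fubini, since $\tilde D$ integrates to $1$). Smoothness is obtained by a separate argument through the characteristic function of $\mathcal{S}$, namely $(y_2,\dots,y_p)\mapsto\tilde\psi(0,y_2,\dots,y_p)$, which is shown to be fast decreasing. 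To repair your proof you would need either this characteristic-function detour, or a genuine estimate showing that $D$ decays fast along the specific rays $\bigl(x_1,y_2 x_1^2,\dots,y_p x_1^p\bigr)$ — a fact that does not follow from Lem.~\ref{lem psi fast decreasing} alone.
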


\begin{proof}
Let us denote by $\tilde{\mathcal{S}} = \left(S^1,\frac{S^2}{(S^1)^2},\dots,\frac{S^p}{(S^1)^p}\right)$. By Lem.~\ref{lem limit distribution}, the random vector $S$ admits the density $D$. In particular, $\P\left(S^1=0\right)=0$, so that $\tilde{\mathcal{S}}$ and $\mathcal{S}$ are almost surely well-defined. Then, for any continuous bounded function $g:\R^p\to \R$, we have:
\begin{multline*}
\esp{g\left(\tilde{\mathcal{S}}\right)} = \int_{x \in \R^p} g\left(x_1,\frac{x_2}{(x_1)^2},\dots,\frac{x_p}{(x_1)^p}\right)D(x_1,\dots,x_p) \dx x\\
= \int_{(x_1,y_2,\dots,y_p) \in \R^p} g\left(x_1,y_2,\dots,y_p\right)D\left(x_1,y_2(x_1)^2,\dots,y_p(x_1)^p\right) \norm{x_1}^{\frac{p(p+1)}{2}-1} \dx x_1 \dx y_2 \dots \dx y_p.
\end{multline*}
Recalling that $D$ is smooth and supported in $[0,+\infty)^p$ (see Lem.~\ref{lem limit distribution}), this shows that $\tilde{\mathcal{S}}$ admits the smooth density $\tilde{D}:(x_1,y_2,\dots,y_p) \mapsto (x_1)^{\frac{p(p+1)}{2}-1}D\left(x_1,y_2(x_1)^2,\dots,y_p(x_1)^p\right)$, with respect to the Lebesgue measure of $\R^p$. Since $\tilde{\mathcal{S}}=(S^1,\mathcal{S})$, the random vector $\mathcal{S}$ admits the density $\mathcal{D}$ defined by Eq.~\eqref{eq def D}.

It is not clear that Eq.~\eqref{eq def D} defines a smooth function. In order to prove this, note that $\tilde{D}$ is smooth, so that the characteristic function $\tilde{\psi}$ of $\tilde{\mathcal{S}}$ is fast decreasing. Then, the characteristic function of $\mathcal{S}$ is $(y_2,\dots,y_p) \mapsto \tilde{\psi}(0,y_2,\dots,y_p)$, which is also fast decreasing. Thus, $\mathcal{D}$ is smooth.

Finally, by Lem.~\ref{lem limit distribution}, the support of $S$ is contained in $\bar{\Omega}$. Hence, we have almost surely $S^1 \geq 0$ (in fact $S^1 >0$) and $0 \leq S^q \leq S^{q-1}S^1$, for all $q \in \{2,\dots,p\}$. Thus, almost surely, we have:
\begin{equation*}
0 \leq \frac{S^p}{(S^1)^p} \leq \frac{S^{p-1}}{(S^1)^{p-1}} \leq \dots \leq \frac{S^2}{(S^1)^2} \leq 1,
\end{equation*}
which proves that the support of $\mathcal{D}$ is contained in $K$.
\end{proof}


\subsection{Proof of Thm.~\ref{thm main}}
\label{subsec proof of the main theorem}

This section is concerned with the proof of our main result (Thm.~\ref{thm main}). Let $m:[0,+\infty) \to [1,+\infty)$ be a multiplicity function and let $\nu_m$ be the associated intensity measure (see Def.~\ref{def multiplicity function} and~\ref{def nu m}). Let $(\lambda_k)_{k \geq 1}$ and $(m_k)_{k \geq 1}$ be as in Def.~\ref{def random spectrum}, that is $(\lambda_k)_{k \geq 1}$ is a Poisson point process of intensity~$\nu_m$ and, for all $k \geq 1$, $m_k = m(\lambda_k)$. Let the randomized spectral sums $(S^p_\tau)_{p \geq 1}$ be defined by Def.~\ref{def S p lambda}. Finally, let the even randomized moments $(M^{2p}_\tau)_{p \geq 1}$ be defined by Def.~\ref{def random Mp final}. We have seen in Lem.~\ref{lem as def Sq} that these random variables are almost surely well-defined for all $\tau \in \R$.

Since Thm.~\ref{thm main} is concerned with the normalized moments, let us start by deriving an expression of these normalized moments. Note that, by Def.~\ref{def P and Q} and Def.~\ref{def Ap}, $A_1 = Q_1(1) = 1$. Besides, by Def.~\ref{def random Mp final}, for all $\tau \in \R$, $M^2_\tau = 4S^1_\tau$. 

Recall that $(\mu_p)_{p \geq 1}$ denotes the sequence of moments of the standard Gaussian distribution (see Def.~\ref{def mu p}) and that the polynomials $(P_p)_{p \geq 1}$ are defined by Def.~\ref{def P and Q}. Let $p \in \N^*$ and let $\tau \in \R$, by Def.~\ref{def random Mp final}, we have:
\begin{align*}
\frac{M^{2p}_\tau}{(M^2_\tau)^p} &= \frac{(2p)!}{(4S^1_\tau)^p} \sum_{\alpha \in \mathcal{P}(p)} (-1)^{p - \norm{\alpha}}\frac{2^{\norm{\alpha}}}{\alpha!} \prod_{q = 1}^p \left(A_q S^q_\tau\right)^{\alpha_q}\\
&= \frac{(2p)!}{2^p p!} \sum_{\alpha \in \mathcal{P}(p)} (-1)^{p-\norm{\alpha}}\frac{p!}{\alpha!} \prod_{q \geq 1} \left(\frac{2A_qS^q_\tau}{(2S^1_\tau)^q}\right)^{\alpha_q}\\
&= \mu_{2p} P_p\left(1,\frac{2A_2S^2_\tau}{(2S^1_\tau)^2},\dots,\frac{2A_pS^p_\tau}{(2S^1_\tau)^p}\right).
\end{align*}
In view of Lem.~\ref{lem limit characteristic function}, we rewrite this expression in the following form:
\begin{equation}
\label{eq normalized moments}
\frac{M^{2p}_\tau}{(M^2_\tau)^p} = \mu_{2p} P_p\left(1,\frac{A_2}{2m(\tau)}\frac{m(\tau)^3
S^2_\tau}{(m(\tau) S^1_\tau)^2},\dots,\frac{A_p}{\left(2m(\tau)\right)^{p-1}}\frac{m(\tau)^{2p-1}S^p_\tau}{(m(\tau)S^1_\tau)^p}\right).
\end{equation}

\begin{dfn}
\label{def Rpl}
Let $p \in \N \setminus \{0,1\}$ and let $S = (S^1,\dots,S^p)$ be a random vector in $\R^p$ whose characteristic function is $\psi$ (see Def.~\ref{def psi}). Let $l \in (0,+\infty)$, we define the random vector $\left(R_2(l),\dots,R_p(l)\right)$ in $\R^{p-1}$ by:
\begin{equation*}
\left(R_q(l)\right)_{2\leq q \leq p} = \left(P_q\left(1,\frac{A_2}{2l}\frac{S^2}{(S^1)^2}, \dots,\frac{A_q}{(2l)^{q-1}}\frac{S^q}{(S^1)^q}\right)\right)_{2\leq q \leq p},
\end{equation*}
where $(P_p)_{p \geq 1}$ is defined by Def.~\ref{def P and Q} and $(A_p)_{p \geq 1}$ is defined by Def.~\ref{def Ap}. We also define the random vector $\left(R_2(+\infty),\dots,R_p(+\infty)\right)$ as being constant equal to $(1,\dots,1) \in \R^{p-1}$.
\end{dfn}

\begin{rem}
\label{rem Rpl}
The distributions of the random vectors $\left((S^1,\dots,S^p)\right)_{p \geq 1}$ are compatible, in the sense that, if $1 \leq q \leq p$, then $(S^1,\dots,S^q)$ is distributed as the first $q$ components of $(S^1,\dots,S^p)$. Thus, if $l \in (0,+\infty]$ and $2 \leq q \leq p$, then $\left(R_2(l),\dots,R_q(l)\right)$ is distributed as the first $(q-1)$ components of $\left(R_2(l),\dots,R_p(l)\right)$. In particular, the distribution of $\left(R_2(l),\dots,R_q(l)\right)$ is uniquely defined.
\end{rem}

The following lemma proves the convergence in distribution stated in Thm.~\ref{thm main}. Note that the same proof also works to prove the result stated in Rem.~\ref{rem main thm CV tau}, as $\tau \to +\infty$.

\begin{lem}
\label{lem proof of CV in distribution}
Let $p \in \N \setminus \{0,1\}$ and let $l \in (0,+\infty]$. Let $(\tau_n)_{n \in \N}$ be a sequence such that $\tau_n \xrightarrow[n \to +\infty]{} +\infty$ and $m(\tau_n) \xrightarrow[n \to +\infty]{}l$, then the following holds in distribution:
\begin{equation*}
\left(\frac{M^4_{\tau_n}}{(M^2_{\tau_n})^2},\dots,\frac{M^{2p}_{\tau_n}}{(M^2_{\tau_n})^p}\right) \xrightarrow[n \to +\infty]{} \left(\mu_4 R_2(l),\dots,\mu_{2p}R_p(l)\right).
\end{equation*}
\end{lem}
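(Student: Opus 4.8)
The plan is to combine the explicit expression \eqref{eq normalized moments} for the normalized moments with the convergence in distribution of the rescaled spectral sums from Lemma \ref{lem limit characteristic function}, and then apply the continuous mapping theorem. First I would recall from \eqref{eq normalized moments} that, writing $U_\tau = \left(m(\tau)S^1_\tau, m(\tau)^3 S^2_\tau, \dots, m(\tau)^{2p-1}S^p_\tau\right)$, we have
\begin{equation*}
\frac{M^{2q}_\tau}{(M^2_\tau)^q} = \mu_{2q}\, P_q\!\left(1, \frac{A_2}{2m(\tau)}\frac{(U_\tau)_2}{(U_\tau)_1^2}, \dots, \frac{A_q}{(2m(\tau))^{q-1}}\frac{(U_\tau)_q}{(U_\tau)_1^q}\right)
\end{equation*}
for each $q \in \{2,\dots,p\}$, so that the whole vector $\left(\frac{M^4_\tau}{(M^2_\tau)^2},\dots,\frac{M^{2p}_\tau}{(M^2_\tau)^p}\right)$ is a fixed (measurable, in fact rational) function $\Phi_{m(\tau)}$ of $U_\tau$, where the subscript records the dependence on the scalar $m(\tau)$.

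\medskip

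The heart of the argument is to pass to the limit along $\tau_n$. By Lemma \ref{lem limit characteristic function}, $U_{\tau_n} \xrightarrow[n\to+\infty]{} S$ in distribution, where $S=(S^1,\dots,S^p)$ has characteristic function $\psi$; by Lemma \ref{lem limit distribution}, $S$ has a density, hence $S^1>0$ almost surely, so the denominators $(S^1)^q$ causing trouble in $\Phi_l$ are almost surely nonzero. I would treat the two cases of the limit separately. If $l\in(0,+\infty)$, then $\Phi_{m(\tau_n)} \to \Phi_l$ locally uniformly on the open set $\{x_1>0\}$ (the coefficients $\frac{A_q}{(2m(\tau_n))^{q-1}}$ converge to $\frac{A_q}{(2l)^{q-1}}$ and the $P_q$ are polynomials), so a standard extension of the continuous mapping theorem to converging sequences of continuous functions — or, more simply, an argument via the Skorokhod representation theorem, realizing $U_{\tau_n}\to S$ almost surely on a common probability space — yields $\Phi_{m(\tau_n)}(U_{\tau_n}) \to \Phi_l(S)$ almost surely, hence in distribution. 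By Definition \ref{def Rpl}, $\Phi_l(S) = \left(\mu_4 R_2(l),\dots,\mu_{2p}R_p(l)\right)$, which is the claim. If $l=+\infty$, then for every $q$ the factors $\frac{A_q}{(2m(\tau_n))^{q-1}}\frac{(U_{\tau_n})_q}{(U_{\tau_n})_1^q}$ tend to $0$ in probability, since the quotients converge in distribution (to $\frac{S^q}{(S^1)^q}$, which is tight and almost surely finite) while the prefactor $(2m(\tau_n))^{-(q-1)}\to 0$; therefore $P_q$ evaluated there converges in probability to $P_q(1,0,\dots,0)=1$, whence $\frac{M^{2q}_{\tau_n}}{(M^2_{\tau_n})^q}\to \mu_{2q}$ in probability, matching $\mu_{2q}R_q(+\infty)=\mu_{2q}$.

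\medskip

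The main obstacle is the measure-theoretic care needed at the point $x_1=0$: the map $\Phi_l$ is only defined and continuous on $\{x_1>0\}$, so one cannot directly invoke continuous mapping for a discontinuous function. The clean way around this is Skorokhod's representation: on a suitable probability space one has random vectors $\widetilde U_n \overset{d}{=} U_{\tau_n}$ and $\widetilde S \overset{d}{=} S$ with $\widetilde U_n \to \widetilde S$ almost surely; since $\P(\widetilde S^1>0)=1$ (Lemma \ref{lem limit distribution}) and the first coordinate of $\widetilde U_n$ converges to that of $\widetilde S$, almost surely $\widetilde U_n$ eventually lies in $\{x_1>0\}$, and then the local uniform convergence $\Phi_{m(\tau_n)}\to\Phi_l$ gives $\Phi_{m(\tau_n)}(\widetilde U_n)\to \Phi_l(\widetilde S)$ almost surely, hence in distribution. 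One also needs to note that the nonnegative integer-sequence hypothesis of Step~2 has been dropped at this stage: here $(m_k)_{k\ge1}$ is the non-integer sequence $(m(\lambda_k))_{k\ge1}$, but this is harmless because, as emphasized in Section \ref{subsec discussion of the random model}, $M^{2p}_\tau$ is defined directly by Definition \ref{def random Mp final} through the spectral sums $S^q_\tau$, which only require $m_k\ge 0$. This completes the plan; the remaining verifications (local uniform convergence of $\Phi_{m(\tau_n)}$, finiteness of the quotients a.s.) are routine.
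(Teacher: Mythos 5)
Your proof is correct and follows essentially the same route as the paper's: write the normalized moments via Eq.~\eqref{eq normalized moments} as a function of the rescaled spectral sums, invoke Lem.~\ref{lem limit characteristic function} for convergence in distribution, use Lem.~\ref{lem limit distribution} to guarantee $S^1\neq 0$ a.s.\ so that the quotients are well-defined, and apply the continuous mapping theorem, splitting the cases $l<+\infty$ and $l=+\infty$. The one place you are more careful than the paper is the dependence on the scalar $m(\tau_n)$, which varies with $n$: the paper simply cites the Continuous Mapping Theorem, while you explicitly handle the non-constant map by Skorokhod representation (or locally uniform convergence of $\Phi_{m(\tau_n)}\to\Phi_l$). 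That extra care is welcome but not strictly necessary; the standard shortcut is to note that $(U_{\tau_n}, m(\tau_n))\to (S,l)$ jointly in distribution (since $m(\tau_n)\to l$ deterministically) and then apply the continuous mapping theorem to the map $(x,c)\mapsto \Phi_c(x)$, which is continuous on the full-measure set $\{x_1>0\}\times(0,\infty)$. Either way the argument is sound.
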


\begin{proof}
By Lem.~\ref{lem limit characteristic function}, we have $\left(m(\tau)S^1_\tau,m(\tau)^3S^2_\tau,\dots,m(\tau)^{2p-1}S^p_\tau\right) \xrightarrow[\tau \to +\infty]{} S$ in distribution, where $S=(S^1,\dots,S^p)$ is a random vector of characteristic function $\psi$ (see Def.~\ref{def psi}). Since $S$ admits a density (see Lem.~\ref{lem limit distribution}), we have $S^1 \neq 0$ almost surely. Hence, by the Continuous Mapping Theorem, the following holds in distribution:
\begin{equation*}
\left(\frac{m(\tau)^3
S^2_\tau}{(m(\tau) S^1_\tau)^2},\dots,\frac{m(\tau)^{2p-1}S^p_\tau}{(m(\tau)S^1_\tau)^p}\right) \xrightarrow[\tau \to +\infty]{} \left(\frac{S^2}{(S^1)^2},\dots,\frac{S^p}{(S^1)^p}\right).
\end{equation*}

If $l \in (0,+\infty)$, by Eq.~\eqref{eq normalized moments} and the Continuous  Mapping Theorem, we have:
\begin{equation*}
\left(\frac{M^{2q}_{\tau_n}}{(M^2_{\tau_n})^q}\right)_{2 \leq q \leq p} \xrightarrow[n \to +\infty]{} \left(\mu_{2q}P_q\left(1,\frac{A_2}{2l}\frac{S^2}{(S^1)^2}, \dots,\frac{A_q}{(2l)^{q-1}}\frac{S^q}{(S^1)^q}\right)\right)_{2\leq q \leq p}
\end{equation*}
in distribution. This yields the result, by definition of $(R_q(l))_{2 \leq q \leq p}$ (see Def.~\ref{def Rpl}).

If $l=+\infty$, the same arguments shows that:
\begin{equation*}
\left(\frac{M^{2q}_{\tau_n}}{(M^2_{\tau_n})^q}\right)_{2 \leq q \leq p} \xrightarrow[n \to +\infty]{} \left(\mu_{2q}P_q\left(1,0,\dots,0\right)\right)_{2\leq q \leq p}
\end{equation*}
in distribution. By Def.~\ref{def P and Q}, for all $q \geq 2$, $P_q(1,0,\dots,0)=1$. This concludes the proof, since $(R_2(+\infty),\dots,R_p(+\infty)) = (1,\dots,1)$ by definition.
\end{proof}

We have just proved the first part of Thm.~\ref{thm main}. In order to conclude the proof, we need to study the distribution of the random vectors $(R_q(l))_{2 \leq q \leq p}$ defined by Def.~\ref{def Rpl}. For $l=+\infty$, we have $(R_2(l),\dots,R_p(l)) = (1,\dots,1)$ by definition, for all $p \geq 2$. In this case, the limit distribution in Thm.~\ref{thm main} is a Dirac mass. Hence the convergence also holds in probability. When $l \in (0,+\infty)$, the following lemma proves that the limit distribution admit a smooth density. Note that this density is explicit.

\begin{lem}
\label{lem limit density moments}
Let $p \in \N \setminus \{0,1\}$, and let $\mathcal{D}$ denote the function defined by Eq.\eqref{eq def D}. Then, for any $l \in (0,+\infty)$, the distribution of the random vector $\left(R_2(l),\dots,R_p(l)\right)$ (see Def.~\ref{def Rpl}) admits the smooth density
\begin{equation}
\label{eq def Dl}
\mathcal{D}_l:(x_2,\dots,x_p) \longmapsto \left(2l\right)^\frac{p(p-1)}{2} \left(\prod_{q=2}^p \frac{1}{A_qq!}\right) \mathcal{D}\!\left(\frac{2l}{A_2}
Q_2(1,x_2),\dots,\frac{(2l)^{p-1}}{A_p}Q_p(1,x_2,\dots,x_p)\right)
\end{equation}
with respect to the Lebesgue measure of $\R^{p-1}$, where $(Q_q)_{q \geq 1}$ and $(A_p)_{p \geq 1}$ are defined by Def.~\ref{def P and Q} and Def.~\ref{def Ap} respectively.
\end{lem}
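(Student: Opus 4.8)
The plan is to recognize $\left(R_2(l),\dots,R_p(l)\right)$ as the image of $\mathcal{S} = \left(\frac{S^2}{(S^1)^2},\dots,\frac{S^p}{(S^1)^p}\right)$ under an explicit polynomial diffeomorphism of $\R^{p-1}$, and then read off $\mathcal{D}_l$ from the density $\mathcal{D}$ of $\mathcal{S}$ (provided by Lem.~\ref{lem limit density quotient}) via the change of variables formula. First I would note that $A_1 = Q_1(1) = 1$, so that by Def.~\ref{def Rpl} one has $\left(R_2(l),\dots,R_p(l)\right) = \Phi_l(\mathcal{S})$, where $\Phi_l:\R^{p-1}\to\R^{p-1}$ is the polynomial map sending $(y_2,\dots,y_p)$ to $\left(P_q\!\left(1,\tfrac{A_2}{2l}y_2,\dots,\tfrac{A_q}{(2l)^{q-1}}y_q\right)\right)_{2\le q\le p}$; equivalently $\Phi_l = \tilde{\mathbf{P}}\circ L$, where $L(y_2,\dots,y_p) = \bigl(\tfrac{A_q}{(2l)^{q-1}}y_q\bigr)_{2\le q\le p}$ and $\tilde{\mathbf{P}}(z_2,\dots,z_p) = \bigl(P_q(1,z_2,\dots,z_q)\bigr)_{2\le q\le p}$.

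The key algebraic input is that the families $(P_q)_{q\ge1}$ and $(Q_q)_{q\ge1}$ are inverse to one another: the formal power series identities \eqref{eq formal series P} and \eqref{eq formal series Q}, which encode respectively exponentiation and logarithm of a formal power series, imply that the triangular polynomial maps $X\mapsto\bigl(P_q(X_1,\dots,X_q)\bigr)_q$ and $X\mapsto\bigl(Q_q(X_1,\dots,X_q)\bigr)_q$ are mutually inverse. Since $P_1(X_1)=Q_1(X_1)=X_1$, both maps fix the first coordinate, so restricting them to the affine hyperplane $\{X_1=1\}$ shows that $\tilde{\mathbf{P}}$ is a polynomial automorphism of $\R^{p-1}$ with inverse $\tilde{\mathbf{Q}}:(x_2,\dots,x_p)\mapsto\bigl(Q_q(1,x_2,\dots,x_q)\bigr)_{2\le q\le p}$. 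The linear map $L$ is invertible because $A_q>0$ for all $q$, by Lem.~\ref{lem Ap positive}, hence $\Phi_l$ is a diffeomorphism of $\R^{p-1}$ with
\[
\Phi_l^{-1}:(x_2,\dots,x_p)\longmapsto\left(\frac{(2l)^{q-1}}{A_q}Q_q(1,x_2,\dots,x_q)\right)_{2\le q\le p}.
\]
Since $Q_q$ depends only on $X_1,\dots,X_q$ and, by the partition constraint $\sum_r r\alpha_r=q$ in Def.~\ref{def P and Q}, the only monomial of $Q_q$ involving $X_q$ is $\frac{(-1)^{q-1}}{q!}X_q$, the Jacobian matrix $D\Phi_l^{-1}$ is lower triangular with constant diagonal entries $\frac{(2l)^{q-1}}{A_q}\cdot\frac{(-1)^{q-1}}{q!}$. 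Hence $\bigl|\det D\Phi_l^{-1}\bigr|$ is the positive constant $(2l)^{p(p-1)/2}\prod_{q=2}^p\frac{1}{A_qq!}$, using $\sum_{q=2}^p(q-1)=\frac{p(p-1)}{2}$. By the change of variables formula for densities, the distribution of $\Phi_l(\mathcal{S})$ admits the density $(x_2,\dots,x_p)\mapsto\mathcal{D}\bigl(\Phi_l^{-1}(x_2,\dots,x_p)\bigr)\bigl|\det D\Phi_l^{-1}\bigr|$, which is exactly $\mathcal{D}_l$ as written in Eq.~\eqref{eq def Dl}; it is smooth because $\mathcal{D}$ is smooth (Lem.~\ref{lem limit density quotient}) and $\Phi_l^{-1}$ is polynomial.

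No step involves any genuine analytic difficulty: once the exp/log duality between $(P_q)$ and $(Q_q)$ coming from \eqref{eq formal series P}--\eqref{eq formal series Q} is in hand, the whole argument reduces to a triangular change of variables. The only point requiring care is the bookkeeping of the scaling factors $\frac{A_q}{(2l)^{q-1}}$ together with the leading coefficient $\frac{(-1)^{q-1}}{q!}$ of $Q_q$, so that the product of the diagonal Jacobian entries collapses to precisely the constant $(2l)^{p(p-1)/2}\prod_{q=2}^p\frac{1}{A_qq!}$ appearing in Eq.~\eqref{eq def Dl}, and checking that $\Phi_l$ is globally invertible (which the triangular structure with nonzero constant diagonal guarantees, so that no restriction to the support of $\mathcal{S}$ is needed).
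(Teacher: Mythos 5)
Your proof is correct and follows essentially the same route as the paper: recognize $\left(R_2(l),\dots,R_p(l)\right) = \Phi_l(\mathcal{S})$ for an explicit triangular polynomial diffeomorphism, then apply the change of variables formula, relying on the $P$/$Q$ exp--log duality and on $A_q>0$. The paper packages the diffeomorphism, its inverse, and the Jacobian determinant $\prod_{q=2}^p q!$ in Cor.~\ref{cor Phi' and Psi' diffeo} and performs the change of variable inside an expectation over a test function, whereas you re-derive the triangularity and compute $\det D\Phi_l^{-1}$ inline and then invoke the density transformation formula directly --- same substance, only a cosmetic difference in presentation.
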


\begin{proof}
Let $(S^1,\dots,S^p) \in \R^p$ be a random vector whose characteristic function is $\psi$ (see Def.~\ref{def psi}). Let $g:\R^{p-1} \to \R$ be a continuous bounded function. By Def.~\ref{def Rpl}, we have:
\begin{align*}
\esp{g\left(R_2(l),\dots,R_p(l)\right)} &= \esp{g\left(P_2\left(1,\frac{A_2}{2l}\frac{S^2}{(S^1)^2}\right),\dots,P_p\left(1,\frac{A_2}{2l}\frac{S^2}{(S^1)^2},\dots,\frac{A_p}{(2l)^{p-1}}\frac{S^p}{(S^1)^p}\right)\right)}\\
&= \int_{y \in \R^{p-1}} g \circ \Phi'_p \left(\frac{A_2}{2l}y_2,\dots,\frac{A_p}{(2l)^{p-1}}y_p\right) \mathcal{D}(y) \dx y,
\end{align*}
where $\mathcal{D}$ is the density of $\left(\frac{S^2}{(S^1)^2},\dots,\frac{S^p}{(S^1)^p}\right)$ (cf.~Lem.~\ref{lem limit density quotient}) and $\Phi'_p$ is defined by Def.~\ref{def Phi p Psi p}. By Lem.~\ref{lem Ap positive}, we have $A_p >0$ for all $p \geq 1$. Then,
\begin{multline*}
\int_{y \in \R^{p-1}} g \circ \Phi'_p \left(\frac{A_2}{2l}y_2,\dots,\frac{A_p}{(2l)^{p-1}}y_p\right) \mathcal{D}(y) \dx y\\
\begin{aligned}
&= \int_{y \in \R^{p-1}} g\left(\Phi'_p(y)\right) \mathcal{D}\left(\frac{2l}{A_2}y_2,\dots,\frac{(2l)^{p-1}}{A_p} y_p\right) \left(2l\right)^\frac{p(p-1)}{2}\left(\prod_{q=2}^p\frac{1}{A_q}\right)\dx y\\
&= \int_{x \in \R^{p-1}} g(x) \mathcal{D}_l(x) \dx x,
\end{aligned}
\end{multline*}
where we obtained the last line by a change of variable $x=\Phi'_p(y)$. Indeed, by Cor.~\ref{cor Phi' and Psi' diffeo}, the map $\Phi'_p$ is a diffeomorphism of inverse $\Psi'_p$ (see Def.~\ref{def Phi p Psi p}), and its Jacobian determinant is $\prod_{q=2}^p q!$.
\end{proof}

The last part of Thm.~\ref{thm main} is the content of the following lemma.

\begin{lem}
\label{lem different limit density}
Let $p \in \N \setminus \{0,1\}$, then $\left\{(R_2(l),\dots,R_p(l))\mvert l\in (0,+\infty] \right\}$ is a one-parameter family of random vectors, whose distributions are pairwise distinct. That is, $(R_2(l),\dots,R_p(l))$ and $(R_2(l'),\dots,R_p(l'))$ have the same distribution if and only if $l=l'$.
\end{lem}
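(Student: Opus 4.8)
The plan is to show that the first coordinate $R_2(l)$ already determines $l$, so that the remaining coordinates play no role. First I would make $P_2$ explicit: by Def.~\ref{def P and Q}, $\mathcal{P}(2)$ consists of the two partitions with $\alpha_1=2$ and with $\alpha_2=1$, which gives $P_2(X_1,X_2)=X_1^2-2X_2$. Hence, by Def.~\ref{def Rpl}, for every $l\in(0,+\infty)$,
\begin{equation*}
R_2(l)=P_2\!\left(1,\frac{A_2}{2l}\frac{S^2}{(S^1)^2}\right)=1-\frac{A_2}{l}\,\frac{S^2}{(S^1)^2},
\end{equation*}
where $S=(S^1,\dots,S^p)$ is a random vector with characteristic function $\psi$ (Def.~\ref{def psi}).

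Next I would control $Y:=S^2/(S^1)^2$. By Lem.~\ref{lem limit distribution} the support of $S$ is contained in $\bar{\Omega}$, so $S^1>0$ almost surely (using also that $S$ has a density) and $0\leq S^2\leq (S^1)^2$; thus $Y\in[0,1]$ almost surely and $c:=\esp{Y}$ is finite with $c\leq 1$. By Lem.~\ref{lem limit density quotient} the vector $\left(S^2/(S^1)^2,\dots,S^p/(S^1)^p\right)$ has a smooth density on $\R^{p-1}$, so marginalising its first coordinate shows $Y$ has a density on $\R$, whence $\P(Y=0)=0$ and $c\in(0,1]$. By Lem.~\ref{lem Ap positive}, $A_2>0$, so $\esp{R_2(l)}=1-A_2c/l$ for $l\in(0,+\infty)$, while $\esp{R_2(+\infty)}=1$ since $R_2(+\infty)=1$ by definition. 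As $A_2c>0$, the map $l\mapsto\esp{R_2(l)}$ is strictly increasing on $(0,+\infty)$, stays strictly below $1$ there, and equals $1$ at $l=+\infty$; hence it is injective on $(0,+\infty]$. Since the law of $\left(R_2(l),\dots,R_p(l)\right)$ determines the law of its first coordinate $R_2(l)$, hence $\esp{R_2(l)}$, two distinct parameters $l\neq l'$ yield distinct laws; the converse implication is immediate, as the law of $\left(R_2(l),\dots,R_p(l)\right)$ depends only on $l$.

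I do not expect any real obstacle in this argument: the only two points that genuinely need checking are that $Y$ is integrable — which is immediate from $0\leq Y\leq1$, despite each individual $S^q$ having infinite expectation by Rem.~\ref{rem limit distribution} — and that $\esp{Y}>0$, which is immediate once $Y$ is known to admit a density. A slightly more robust variant, avoiding any reference to the precise value of $A_2c$, would instead compare the right endpoints of the supports of $R_2(l)$, but the expectation computation is the cleanest route.
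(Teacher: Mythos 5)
Your proof is correct and takes a genuinely different, more elementary route than the paper. The paper separates $l=+\infty$ (constant distribution) from finite $l$ (absolutely continuous), then for two finite parameters compares the suprema of the explicit densities $\mathcal{D}_l$ from Lem.~\ref{lem limit density moments}: $\sup \mathcal{D}_l = (2l)^{p(p-1)/2}\bigl(\prod_{q=2}^p (A_q q!)^{-1}\bigr)\sup\mathcal{D}$, which is strictly monotone in $l$. You instead extract a single scalar invariant from the first coordinate: from $P_2(X_1,X_2)=X_1^2-2X_2$ you get $R_2(l)=1-(A_2/l)\,Y$ with $Y=S^2/(S^1)^2$, hence $\esp{R_2(l)}=1-A_2c/l$ with $c=\esp{Y}\in(0,1]$, a strictly increasing function of $l$ on $(0,+\infty)$ that equals $1$ only at $l=+\infty$. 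This needs only Lem.~\ref{lem limit density quotient} (so that $Y\in[0,1]$ a.s.\ and has a density, giving $0<c\leq 1$) and Lem.~\ref{lem Ap positive} ($A_2>0$); it avoids the density change-of-variables in Lem.~\ref{lem limit density moments} and the diffeomorphism machinery of Cor.~\ref{cor Phi' and Psi' diffeo}, and treats $l=+\infty$ with the same computation rather than a separate dichotomy. Your observation that integrability of $Y$ must come from its almost sure boundedness --- and cannot come from the individual $S^q$, which have infinite expectation by Rem.~\ref{rem limit distribution} --- is exactly the right subtlety to flag. One minor caveat about the variant you sketch: since $R_2(l)$ is decreasing in $Y$, the right endpoint of its support is $1-(A_2/l)\inf\operatorname{supp}(Y)$, and $\inf\operatorname{supp}(Y)$ is not controlled by the lemmas invoked and may well be $0$ for all $l$; the left endpoint, which involves $\sup\operatorname{supp}(Y)>0$, would be the robust choice.
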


\begin{proof}
Since $(R_2(+\infty),\dots,R_p(+\infty))$ is constant and $(R_2(l),\dots,R_p(l))$ is absolutely continuous with respect to Lebesgue for $l \in (0,+\infty)$, they can not be equal in distribution. To conclude the proof, by Lem.~\ref{lem limit density moments}, it is enough to prove that if $l$ and $l'\in (0,+\infty)$ are different, then $\mathcal{D}_l \neq \mathcal{D}_{l'}$, where these smooth functions are defined by Eq.~\eqref{eq def Dl}.

By Lem.~\ref{lem limit density quotient}, the density $\mathcal{D}$ defined by Eq.~\eqref{eq def D} is smooth with compact support. Hence it admits a global maximum, that we denote by $M$. Since $\mathcal{D}$ is the density of a probability distribution, we have $M>0$.

For any $l >0$, the map:
\begin{equation*}
(x_2,\dots,x_p) \longmapsto \left(\frac{2l}{A_2}Q_2(1,x_2),\dots,\frac{(2l)^{p-1}}{A_p} Q_p(1,x_2,\dots,x_p)\right)
\end{equation*}
is a global diffeomorphism from $\R^{p-1}$ to itself. Indeed, this map is the compose of $\Psi'_p$ (cf.~Def.~\ref{def Phi p Psi p}) with $(x_2,\dots,x_p) \mapsto \left(\frac{2l}{A_2}x_2,\dots,\frac{(2l)^{p-1}}{A_p}x_p\right)$. The former is a diffeomorphism by Cor.~\ref{cor Phi' and Psi' diffeo}. Since $l >0$, the latter is also a well-defined diffeomorphism from $\R^{p-1}$ to itself, by Lem.~\ref{lem Ap positive}. Hence, the density $\mathcal{D}_l$ defined by Eq.~\eqref{eq def Dl} admits a positive maximum, equal to $\left(2l\right)^\frac{p(p-1)}{2} \left(\prod_{q=2}^p \frac{1}{A_qq!}\right)M$.

Finally, if $l$ and $l' \in (0,+\infty)$ are different, then $\mathcal{D}_l$ and $\mathcal{D}_{l'}$ do not have the same global maximum. In particular, $\mathcal{D}_l \neq \mathcal{D}_{l'}$.
\end{proof}


\section{Random Weyl Law}
\label{sec random Weyl Law}

In this section, we prove the Random Weyl Law (Prop.~\ref{prop random Weyl Law}). Let $m:[0,+\infty) \to [1,+\infty)$ be a multiplicity function, in the sense of Def.~\ref{def multiplicity function}. As in Sect.~\ref{subsec random model and results}, we denote by $\nu_m$ the measure with density $(16\pi m)^{-1}$ with respect to Lebesgue  on $[0,+\infty)$ (cf.~Def.~\ref{def nu m}). Let $(\lambda_k)_{k \geq 1}$ denote the increasing sequence of the values of a Poisson point process of intensity $\nu_m$. As in Def.~\ref{def random spectrum}, we denote $m_k=m(\lambda_k)$ for all $k \geq 1$.

The deterministic Weyl Law (Thm.~\ref{thm Weyl Law}) is concerned with the function $\mathcal{N}$ that counts the number of eigenvalues of $\Delta$ smaller than its argument, the eigenvalues being counted with multiplicity. With the notations of Sect.~\ref{subsec point scatterers on flat tori}, for all $\lambda \geq 0$, we have:
\begin{equation*}
\mathcal{N}(\lambda) = 1 + \sum_{\{ k \geq 1 \mid \lambda_k \leq \lambda\}} \card(\Lambda_k).
\end{equation*}
In our random model, recall that almost surely $\card(\Lambda_k) = 4m_k$ for all $k \geq 1$. Hence, the random variable $\mathcal{N}_m(\lambda)$ defined by Eq.~\eqref{eq def Nm lambda} is indeed the analogue  of $\mathcal{N}(\lambda)$ is this model. Prop.~\ref{prop random Weyl Law} is the main reason behind our definition of the intensity measure $\nu_m$.

\begin{proof}[Proof of Prop.~\ref{prop random Weyl Law}]
Let $\lambda \geq 0$, we first compute the first two moments of $\mathcal{N}_m(\lambda)$. Note that
\begin{equation*}
\mathcal{N}_m(\lambda) = 1 + \sum_{\{k \geq 1 \mid \lambda_k \leq \lambda\}} 4m_k = 1 + \sum_{k \geq 1}  4m(\lambda_k)\mathbf{1}_{[0,\lambda]}(\lambda_k),
\end{equation*}
where $\mathbf{1}_{[0,\lambda]}$ denotes the indicator function of $[0,\lambda]$. Applying Campbell's Theorem (Thm.~\ref{thm Campbell}) to the function $4m\mathbf{1}_{[0,\lambda]}$, we have:
\begin{equation*}
\esp{\mathcal{N}_m(\lambda)} = 1 + \esp{\sum_{k \geq 1} 4m(\lambda_k) \mathbf{1}_{[0,\lambda]}(\lambda_k)} = 1+\int_0^\lambda 4m(t) \nu_m(\dx t) = 1+\frac{\lambda}{4\pi},
\end{equation*}
and
\begin{equation*}
\var{\mathcal{N}_m(\lambda)} = \var{\sum_{k \geq 1} 4m(\lambda_k) \mathbf{1}_{[0,\lambda]}(\lambda_k)} = \int_0^\lambda 16m(t)^2 \nu_m(\dx t) = \frac{1}{\pi} \int_0^\lambda m(t) \dx t.
\end{equation*}

Let us now prove the Law of Large Numbers. By Lem.~\ref{lem multiplicity function}, there exists $\alpha <1$ such that $m(t) = O(t^\alpha)$. Since  $m(t) \geq 1$ for all $t \geq 0$, we must have $\alpha \geq 0$. Then
\begin{equation*}
\var{\mathcal{N}_m(\lambda)}= \frac{1}{\pi}\int_0^\lambda m(t) \dx t = O\left(\int_0^\lambda t^\alpha \dx t \right) = O\left(\lambda^{\alpha+1}\right).
\end{equation*}
For all $k \in \N^*$, we set $t_k = k^\frac{2}{1-\alpha}$. We have $t_k \to +\infty$ as $k \to +\infty$, and
\begin{equation*}
\var{\frac{1}{t_k}\mathcal{N}_m(t_k)} = O\left((t_k)^{\alpha-1}\right) = O(k^{-2}).
\end{equation*}
Then,
\begin{equation*}
\esp{\sum_{k \geq 1} \frac{1}{t_k^2}\left(\mathcal{N}_m(t_k)-\left(1+\frac{t_k}{4\pi}\right)\right)^2} = \sum_{k \geq 1} \var{\frac{1}{t_k}\mathcal{N}_m(t_k)} < +\infty.
\end{equation*}
Hence, $\sum_{k \geq 1} \left(\frac{1}{t_k}\mathcal{N}_m(t_k)-\frac{1}{t_k}-\frac{1}{4\pi}\right)^2< +\infty$ almost surely, which implies that almost surely $\frac{1}{t_k}\mathcal{N}_m(t_k) \xrightarrow[k \to +\infty]{}\frac{1}{4\pi}$.

Let $\lambda \geq 1$, there exists a unique $k \in \N^*$ such that $t_k \leq \lambda < t_{k+1}$. Since $\mathcal{N}_m$ is a non-decreasing function, we have:
\begin{equation*}
\frac{t_k}{t_{k+1}}\frac{\mathcal{N}_m(t_k)}{t_k} = \frac{\mathcal{N}_m(t_k)}{t_{k+1}} \leq \frac{\mathcal{N}_m(t_k)}{\lambda} \leq \frac{\mathcal{N}_m(\lambda)}{\lambda} \leq \frac{\mathcal{N}_m(t_{k+1})}{\lambda} \leq \frac{\mathcal{N}_m(t_{k+1})}{t_k} = \frac{t_{k+1}}{t_k} \frac{\mathcal{N}_m(t_{k+1})}{t_{k+1}}.
\end{equation*}
Then, since $\frac{t_{k+1}}{t_k} \xrightarrow[k \to +\infty]{}1$ and $k \sim \lambda^\frac{1-\alpha}{2}$ as $\lambda \to +\infty$, we have $\frac{1}{\lambda}\mathcal{N}_m(\lambda) \xrightarrow[\lambda \to +\infty]{}\frac{1}{4\pi}$ almost surely.

Let us denote by $\sigma(\lambda) = \left(\frac{1}{\pi} \int_0^\lambda m(t) \dx t\right)^\frac{1}{2}$. In order to prove the second claim, that is the Central Limit Theorem, it is enough to prove that the characteristic function $\chi_\lambda$ of
\begin{equation*}
\frac{\mathcal{N}_m(\lambda) - \esp{\mathcal{N}_m(\lambda)}}{\var{\mathcal{N}_m(\lambda)}^\frac{1}{2}} = \frac{\mathcal{N}_m(\lambda) - 1 - \frac{\lambda}{4\pi}}{\sigma(\lambda)}
\end{equation*}
converges pointwise toward $x \mapsto e^{-\frac{x^2}{2}}$, the characteristic function of a standard Gaussian variable. Since
\begin{equation*}
\int_0^{+\infty} \min(4m(t)\mathbf{1}_{[0,\lambda]}(t),1) \nu_m(\dx t) = \frac{1}{16\pi} \int_0^\lambda \frac{\dx t}{m(t)} \leq \frac{\lambda}{16\pi} <+\infty, 
\end{equation*}
the characteristic function of $\mathcal{N}_m(\lambda)-1$ is given by Campbell's Theorem. For any $x \in \R$, we have:
\begin{align*}
\chi_\lambda(x) &= \exp\left(-ix\frac{\lambda}{4\pi\sigma(\lambda)}\right) \esp{\exp\left(i\frac{x}{\sigma(\lambda)}\sum_{k \geq 1} 4m(\lambda_k)\mathbf{1}_{[0,\lambda]}(\lambda_k)\right)}\\
&= \exp\left(-\int_0^\lambda \left(1-e^{i\frac{4xm(t)}{\sigma(\lambda)}}\right) \nu_m(\dx t) -ix\frac{\lambda}{4\pi\sigma(\lambda)}\right)\\
&= \exp\left(- \int_0^\lambda \left(1+i\frac{4xm(t)}{\sigma(\lambda)}-e^{i\frac{4xm(t)}{\sigma(\lambda)}}\right) \nu_m(\dx t)\right).
\end{align*}
Hence, it enough to prove that, for all $x \in \R$,
\begin{equation*}
\int_0^\lambda \left(1+i\frac{4xm(t)}{\sigma(\lambda)}-e^{i\frac{4xm(t)}{\sigma(\lambda)}}\right) \nu_m(\dx t) \xrightarrow[\lambda \to +\infty]{} \frac{x^2}{2}.
\end{equation*}

Using the Taylor--Lagrange Formula for $\theta \mapsto e^{i\theta}$ at order $2$ around $0$, we obtain that, for all $t \in [0,\lambda]$,
\begin{equation*}
\norm{1+i\frac{4xm(t)}{\sigma(\lambda)}-\frac{1}{2}\left(\frac{4xm(t)}{\sigma(\lambda)}\right)^2-e^{i\frac{4xm(t)}{\sigma(\lambda)}}} \leq \norm{\frac{4xm(t)}{\sigma(\lambda)}}^3.
\end{equation*}
We have:
\begin{equation*}
\int_0^\lambda \norm{\frac{4xm(t)}{\sigma(\lambda)}}^3 \nu_m(\dx t) = \frac{4\norm{x}^3}{\pi} \frac{1}{\sigma(\lambda)^3} \int_0^\lambda m(t)^2 \dx t.
\end{equation*}
Since $m$ is a multiplicity function, there exists $\beta \in (0,1)$ and $C >0$ such that $\norm{m'(t)} \leq C t^{-\beta}$, for all $t \geq 0$. In particular, $m(t) =O(t^{1-\beta})$, as in the proof of Lem.~\ref{lem multiplicity function}.
Let $s$ and $t \in \left[\lambda^\frac{1}{2},\lambda\right]$, we have: $\norm{m(t)-m(s)} \leq C \norm{t-s} \lambda^{-\frac{\beta}{2}}$. Thus, if $\norm{t-s} \leq \frac{1}{2C}m(s)\lambda^\frac{\beta}{2}$, we have $m(t) \geq \frac{1}{2}m(s)$. Note that $m(s)\lambda^\frac{\beta}{2} = \lambda^\frac{\beta}{2}O(s^{1-\beta})=O(\lambda^{1-\frac{\beta}{2}})$. Hence, for $\lambda$ large enough, for any $s \in \left[\lambda^\frac{1}{2},\lambda\right]$, at least one half of the interval $\left[s-\frac{1}{2C}m(s)\lambda^\frac{\beta}{2}, s+\frac{1}{2C}m(s)\lambda^\frac{\beta}{2}\right]$ is contained in $\left[\lambda^\frac{1}{2},\lambda\right]$ and
\begin{equation*}
\int_0^\lambda m(t) \dx t \geq \int_{\left[s-\frac{1}{2C}m(s)\lambda^\frac{\beta}{2}, s+\frac{1}{2C}m(s)\lambda^\frac{\beta}{2}\right] \cap \left[\lambda^\frac{1}{2},\lambda\right]} m(t) \dx t \geq \frac{m(s)^2}{4C}\lambda^\frac{\beta}{2}.
\end{equation*}
Denoting by $M(\lambda)$ the maximum of $m$ on $\left[\lambda^\frac{1}{2},\lambda\right]$, this yields $M(\lambda)^2 \leq 4C \pi  \sigma(T)^2 \lambda^{-\frac{\beta}{2}}$, hence $M(\lambda) = o(\sigma(\lambda))$, as $\lambda \to +\infty$. Then,
\begin{equation*}
\int_{\lambda^\frac{1}{2}}^\lambda m(t)^2 \dx t \leq M(\lambda) \int_{\lambda^\frac{1}{2}}^\lambda m(t) \dx t \leq \pi M(\lambda)\sigma(\lambda)^2 = o(\sigma(\lambda)^3).
\end{equation*}
On the other hand, 
\begin{equation*}
\int_0^{\lambda^\frac{1}{2}} m(t)^2 \dx t = O\left(\int_0^{\lambda^\frac{1}{2}} t^{2-2\beta} \dx t\right) = O(\lambda^{\frac{3}{2}-\beta})= o(\sigma(\lambda)^3).
\end{equation*}
Indeed, since $m(t) \geq 1$ for all $t \geq 0$, we have $\sigma(\lambda) = \left(\frac{1}{\pi} \int_0^\lambda m(t) \dx t\right)^\frac{1}{2} \geq \sqrt{\frac{\lambda}{\pi}}$. Finally, we have:
\begin{equation*}
\frac{1}{\sigma(\lambda)^3} \int_0^\lambda m(t)^2 \dx t \xrightarrow[\lambda \to +\infty]{} 0.
\end{equation*}
Thus, as $\lambda \to +\infty$,
\begin{align*}
\int_0^\lambda \left(1+i\frac{4xm(t)}{\sigma(\lambda)}-e^{i\frac{4xm(t)}{\sigma(\lambda)}}\right) \nu_m(\dx t) &= \frac{1}{2}\int_0^\lambda  \left(\frac{4xm(t)}{\sigma(\lambda)}\right)^2 \nu_m(\dx t) + o(1)\\
&= \frac{x^2}{2} + o(1).\qedhere
\end{align*}
\end{proof}


\appendix

\section{Reminder on Poisson point processes}
\label{sec reminder on Poisson point processes}

In this section, we gather the few facts we need about Poisson point processes on $[0,+\infty)$. Apart from recalling the definition of these processes, we prove Lem.~\ref{lem random spectrum}, and we state Campbell's Theorem (Thm.~\ref{thm Campbell}), which is one of the key tools in the proof of our main results (Thm.~\ref{thm main} and Prop.~\ref{prop random Weyl Law}).

We start with the definition of a Poisson point process in the setting we are interested in.

\begin{dfn}
\label{def Poisson point process}
Let $\nu$ denote a Borel measure on $[0,+\infty)$, a \emph{Poisson point process} on $[0,+\infty)$ with \emph{intensity measure} $\nu$ is a random subset $P$ of~$[0,+\infty)$, whose distribution is characterized by the following properties.
\begin{itemize}
\item For any bounded Borel subset $B \subset [0,+\infty)$, the cardinality of $B \cap P$ is distributed as a Poisson variable of parameter $\nu(B)$.
\item For any $n \in \N^*$, for any family $B_1,\dots,B_n$ of disjoint bounded Borel subsets of $[0,+\infty)$, the random variables $\card(B_1 \cap P), \dots, \card(B_n \cap P)$ are globally independent.
\end{itemize}
\end{dfn}

In the case, where the intensity measure is of the form $\nu_m$ (see Def.~\ref{def nu m}) for some multiplicity function $m:[0,+\infty) \to [1,+\infty)$ (see Def.~\ref{def multiplicity function}), the elements of the Poisson point process $P$ form a nice sequence.

\begin{lem}
\label{lem random spectrum}
Let $m$ be a multiplicity function and let $\nu_m$ denote the associated intensity measure. Let $P$ be a Poisson point process on $[0,+\infty)$ of intensity $\nu_m$. Then, almost surely, the elements of $P$ can be ordered into an increasing sequence of positive numbers $(\lambda_k)_{k \geq 1}$ such that $\lambda_k \xrightarrow[k \to +\infty]{}+\infty$.
\end{lem}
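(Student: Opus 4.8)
The plan is to exploit two elementary properties of the intensity measure $\nu_m$: it is finite on bounded sets, while it has infinite total mass. First I would note that, since $m(t) \geq 1$ for all $t \geq 0$, the density $(16\pi m)^{-1}$ of $\nu_m$ is bounded above by $(16\pi)^{-1}$, so that $\nu_m([0,\lambda]) \leq \frac{\lambda}{16\pi} < +\infty$ for every $\lambda \geq 0$; in particular $\nu_m$ is a locally finite Borel measure with $\nu_m(\{0\}) = 0$. On the other hand, by Lem.~\ref{lem multiplicity function} there is some $\alpha < 1$ with $m(t) = O(t^\alpha)$ as $t \to +\infty$, hence $(16\pi m(t))^{-1}$ is bounded below by a positive multiple of $t^{-\alpha}$ for $t$ large, and $\int_1^{+\infty} t^{-\alpha}\, \dx t = +\infty$ since $\alpha < 1$. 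Therefore $\nu_m([0,+\infty)) = +\infty$.

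From these two facts the statement follows from the defining properties of a Poisson point process (Def.~\ref{def Poisson point process}). For each $n \in \N^*$, the cardinality $\card(P \cap [0,n])$ is a Poisson variable of finite parameter $\nu_m([0,n])$, hence is almost surely finite; intersecting the corresponding full-measure events over $n \in \N^*$, almost surely $P \cap [0,n]$ is finite for every $n$, so $P$ is locally finite and hence at most countable. Since $P$ is a random subset, its elements are distinct, and since $\nu_m(\{0\}) = 0$ we have $0 \notin P$ almost surely; thus, almost surely, the elements of $P$ can be arranged into a (finite or infinite) strictly increasing sequence of positive numbers.

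It then remains to rule out $P$ being finite and to obtain divergence of the sequence. The random variables $\card(P \cap [n,n+1))$, for $n \in \N$, are globally independent and Poisson-distributed with parameters $\nu_m([n,n+1))$, whose sum equals $\nu_m([0,+\infty)) = +\infty$. Since $\sum_{n \geq 0}\bigl(1 - e^{-a_n}\bigr)$ diverges whenever $a_n \geq 0$ and $\sum_{n\geq 0} a_n = +\infty$, the series $\sum_{n \geq 0}\P\bigl(\card(P \cap [n,n+1)) \geq 1\bigr)$ diverges, so by the second Borel--Cantelli lemma, almost surely $P$ meets $[n,n+1)$ for infinitely many $n$. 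Hence $P$ is almost surely infinite, and being locally finite its increasing enumeration $(\lambda_k)_{k \geq 1}$ necessarily satisfies $\lambda_k \xrightarrow[k \to +\infty]{} +\infty$. I do not anticipate any genuine obstacle: the only point that needs a little attention is the verification that $\nu_m$ has infinite total mass, which is precisely where the growth estimate of Lem.~\ref{lem multiplicity function} enters.
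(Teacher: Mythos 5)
Your proof is correct and covers all the required points: local finiteness of $\nu_m$, infinite total mass of $\nu_m$ via Lem.~\ref{lem multiplicity function}, almost sure local finiteness of $P$ (hence countability and divergence of the increasing enumeration), $0 \notin P$ almost surely, and almost sure infiniteness of $P$. The one place where you diverge from the paper's argument is in establishing that $P$ is almost surely infinite. You split $[0,+\infty)$ into unit intervals $[n,n+1)$, observe that the counts $\card(P\cap [n,n+1))$ are independent Poisson variables whose parameters sum to $\nu_m([0,+\infty))=+\infty$, and invoke the second Borel--Cantelli lemma to conclude that $P$ meets infinitely many of these intervals. The paper instead computes directly: for each fixed $n$, $\P(\card(P)\leq n) \leq \P(\card(P\cap[0,\lambda])\leq n) = e^{-\nu_m([0,\lambda])}\sum_{k=0}^n \nu_m([0,\lambda])^k/k!$, which tends to $0$ as $\lambda\to+\infty$ because $\nu_m([0,\lambda])\to+\infty$. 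The two routes cost about the same; the paper's Poisson-tail estimate needs only the first bullet of Def.~\ref{def Poisson point process}, whereas Borel--Cantelli also uses the independence (second bullet), but in exchange you avoid writing down the explicit Poisson cumulative distribution. Either version is a legitimate proof of the lemma.
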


\begin{proof}
Let us denote by $P \subset [0,+\infty)$ the Poisson point process of intensity $\nu_m$. For any $\lambda \geq 0$, we have:
\begin{equation*}
\nu_m([0,\lambda]) = \frac{1}{16\pi} \int_0^\lambda \frac{\dx t}{m(t)} < +\infty.
\end{equation*}
By Lem.~\ref{lem multiplicity function}, $m(t)\leq t$ for all $t$ large enough. Hence $t \mapsto \frac{1}{m(t)}$ is not integrable at $+\infty$ and $\nu_m([0,\lambda]) \xrightarrow[\lambda \to +\infty]{} +\infty$. Let $n \in \N$, for all $\lambda \geq 0$ we have:
\begin{equation*}
\P\left(\card(P) \leq n\right) \leq \P\left(\card(P\cap [0,\lambda]) \leq n\right) = e^{-\nu_m([0,\lambda])} \sum_{k=0}^n \frac{\nu_m([0,\lambda])^k}{k!} \xrightarrow[\lambda \to +\infty]{} 0.
\end{equation*}
Hence, for all $n \in \N$, $\P\left(\card(P) \leq n\right)=0$, and $P$ is almost surely infinite. Besides, almost surely, for any $\lambda \geq 0$, $\card(P\cap [0,\lambda]) < +\infty$. Indeed,
\begin{equation*}
\left\{\exists \lambda >0 \mvert \card(P\cap [0,\lambda]) = +\infty\right\} = \bigcup_{\lambda >0}\hspace{-1mm}\uparrow \left\{\card(P\cap [0,\lambda]) = +\infty\right\},
\end{equation*}
so that the probability of this event equals $\lim_{\lambda \to +\infty} \P\left(\card(P\cap [0,\lambda]) = +\infty\right) = 0$. Thus, almost surely, $P$ is an infinite set whose elements can be ordered into an increasing sequence $(\lambda_k)_{k \geq 1}$ that goes to infinity as $k \to +\infty$. Finally, $\P\left(\lambda_1=0\right) \leq \P\left(P \cap \{0\} >0\right) = 0$. Hence, almost surely $\lambda_1 >0$, which ensures that $(\lambda_k)_{k \geq 1}$ is a sequence of positive numbers.
\end{proof}

We conclude this section by recalling Campbell's Theorem. The form of this theorem is the reason why we defined the sequence $(m_k)_{k \geq 1}$ as $(m(\lambda_k))_{k \geq 1}$ in our model. We state it in the special case we are interested in (compare~\cite[Sect.~3.2, pp.~28--31]{Kin1993}).

\begin{thm}[Campbell]
\label{thm Campbell}
Let $(\lambda_k)_{k \geq 1}$ be the values of a Poisson point process on $[0,+\infty)$ with intensity measure $\nu$. Let $g: [0,+\infty) \to [0,+\infty)$ be a measurable function and let $S = \sum_{k \geq 1} g(\lambda_k)$. Then we have:
\begin{equation*}
\esp{S} = \int_0^{+\infty} g(t)\nu(\dx t)
\end{equation*}
in $[0,+\infty]$. If this quantity is finite, then the following holds in $[0,+\infty]$:
\begin{equation*}
\var{S} = \int_0^{+\infty} g(t)^2\nu(\dx t).
\end{equation*}
Besides, $S$ is almost surely finite if and only if $\displaystyle\int_0^{+\infty} \min\left(g(t),1\right) \nu(\dx t) < +\infty$.

Moreover, let $p \in \N^*$. For any $q \in \{1,\dots,p\}$, let $g_q:[0,+\infty) \to [0,+\infty)$ be a measurable map such that $\displaystyle\int_0^{+\infty} \min\left(g_q(t),1\right) \nu(\dx t) < +\infty$ and let $S^q = \sum_{k \geq 1} g_q(\lambda_k)$. Then, the characteristic function of $S=(S^1,\dots,S^p)$ is given by:
\begin{equation*}
\forall (x_1,\dots,x_p) \in \R^p, \ \esp{\exp\left(i \sum_{q=1}^p x_q S^q\right)} = \exp\left(-\int_0^{+\infty} \left(1-\exp\left(i\sum_{q=1}^p x_q g_q(t)\right)\right) \nu(\dx t) \right).
\end{equation*}
In particular, the integral on the right-hand side is well-defined for all $(x_1,\dots,x_p) \in \R^p$.
\end{thm}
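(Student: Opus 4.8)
The plan is to follow the classical derivation of Campbell's theorem (cf.~\cite{Kin1993}): establish everything first for nonnegative \emph{simple} functions built from a finite Borel partition of a bounded interval, and then pass to arbitrary measurable functions by monotone and dominated convergence. The basic brick is that, for a bounded Borel set $B$, the variable $N_B := \card(B\cap P)$ is Poisson of parameter $\nu(B)$ by Def.~\ref{def Poisson point process} (in particular $\nu(B)<+\infty$, since $\nu$ is finite on bounded sets), whence $\esp{N_B}=\var{N_B}=\nu(B)$ and $\esp{e^{i\theta N_B}}=\exp(\nu(B)(e^{i\theta}-1))$ for all $\theta\in\R$; moreover pairwise disjoint $B_j$'s yield globally independent $N_{B_j}$'s. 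Taking a finite Borel partition $B_1,\dots,B_r$ of a large interval and writing each $g_q=\sum_{j}c_{q,j}\mathbf{1}_{B_j}$ with $c_{q,j}\ge 0$, one has $S^q=\sum_j c_{q,j}N_{B_j}$, and additivity of expectation, additivity of variance across independent summands (using that the $B_j$ are disjoint, so $g_q^2=\sum_j c_{q,j}^2\mathbf{1}_{B_j}$), and independence of the $N_{B_j}$ give, respectively, $\esp{S^q}=\int g_q\,\nu(\dx t)$, $\var{S^q}=\int g_q^2\,\nu(\dx t)$, and, via the product over $j$,
\[
\esp{\exp\Big(i\sum_{q=1}^p x_qS^q\Big)}=\exp\Big(-\int\big(1-e^{\,i\sum_q x_q g_q(t)}\big)\,\nu(\dx t)\Big).
\]

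For a general measurable $g\ge 0$ I would take simple functions $g_n\uparrow g$ of the above type (dyadic truncations on $[0,n]$; the relevant level sets meet a bounded interval, so they have finite $\nu$-mass). Then $S_n:=\sum_k g_n(\lambda_k)\uparrow S$, and monotone convergence gives $\esp{S}=\lim_n\esp{S_n}=\lim_n\int g_n\,\nu(\dx t)=\int g\,\nu(\dx t)$ in $[0,+\infty]$. When this is finite, monotone convergence applied to $S_n^2\uparrow S^2$ yields $\var{S_n}=\esp{S_n^2}-(\esp{S_n})^2\to\esp{S^2}-(\esp{S})^2=\var{S}$ in $[0,+\infty]$, while $\var{S_n}=\int g_n^2\,\nu(\dx t)\to\int g^2\,\nu(\dx t)$, and the two limits agree. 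For the Laplace transform I fix $s>0$, apply dominated convergence to $e^{-sS_n}\downarrow e^{-sS}$ (all bounded by $1$) and monotone convergence to $1-e^{-sg_n}\uparrow 1-e^{-sg}$ on the right, obtaining $\esp{e^{-sS}}=\exp\big(-\int_0^{+\infty}(1-e^{-sg(t)})\,\nu(\dx t)\big)$, with the convention that the right side is $0$ when the integral diverges.

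Next, since $1-e^{-sg}$ is comparable to $\min(g,1)$ with constants depending only on $s$ (from $1-e^{-x}\le\min(x,1)$ and $1-e^{-x}\ge(1-e^{-1})\min(x,1)$), the integral above is finite for some---hence every---$s>0$ if and only if $\int\min(g,1)\,\nu(\dx t)<+\infty$; if it is finite, letting $s\to 0^+$ with dominating function $\min(g,1)$ gives $\esp{e^{-sS}}\to 1$, so $S<+\infty$ a.s., and if it is infinite then $\esp{e^{-sS}}=0$, so $S=+\infty$ a.s. For the final assertion, assume $\int\min(g_q,1)\,\nu(\dx t)<+\infty$ for each $q$, so every $S^q$ is a.s.\ finite by what precedes. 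Approximating each $g_q$ by simple $g_q^{(n)}\uparrow g_q$ on a common partition, one has $(S^{1,(n)},\dots,S^{p,(n)})\to(S^1,\dots,S^p)$ a.s., so bounded convergence passes the expectation to the limit; on the right side $1-e^{\,i\sum_q x_q g_q^{(n)}}\to 1-e^{\,i\sum_q x_q g_q}$ pointwise under the $\nu$-integrable majorant $\sum_q(|x_q|+2)\min(g_q,1)$ (using $|1-e^{i\theta}|\le\min(|\theta|,2)$ and $\min(|x_q|g_q,2)\le(|x_q|+2)\min(g_q,1)$), so dominated convergence yields the displayed formula and, in particular, the finiteness of its right-hand integral.

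I do not expect any deep obstacle here: the theorem is classical, and the only real difficulty is organizational. One must be scrupulous about applying monotone convergence to the $\nu$-side integrals but dominated (or bounded) convergence to the expectations, in that order; and in the joint characteristic-function step the $x_q$ have arbitrary signs, so $\sum_q x_q g_q^{(n)}$ cannot be handled by monotonicity and one is forced instead to majorize $|1-e^{\,i\sum_q x_q g_q^{(n)}}|$ termwise by a fixed $\nu$-integrable function built from the $\min(g_q,1)$'s.
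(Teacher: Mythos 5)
The paper does not give a proof of Campbell's Theorem; it states it and defers to Kingman's monograph \cite[Sect.~3.2]{Kin1993}, which is precisely the classical simple-function-then-limit development you reconstruct. Your argument is correct and matches that reference, including the subtle points one has to get right: applying monotone convergence on the $\nu$-integral side but bounded/dominated convergence on the expectation side, and, in the joint characteristic-function step, replacing monotonicity (unavailable since the $x_q$ may change sign) by the $\nu$-integrable majorant $\sum_q(|x_q|+2)\min(g_q,1)$ built from $|1-e^{i\theta}|\leq\min(|\theta|,2)$.
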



\section{Technical lemmas}
\label{sec technical lemmas}

In this appendix, we gather the statements and proofs of several technical results that we used in the paper. In Sect.~\ref{subsec fast decreasing characteristic function}, we prove that the function $\psi$ defined by Def.~\ref{def psi} is fast decreasing. Then, in Sect.~\ref{subsec polynomial maps}, we prove several results concerning the polynomials $(P_p)_{p \geq 1}$ and $(Q_q)_{q \geq 1}$ (see Def.~\ref{def P and Q}). Finally, in Sect.~\ref{subsec coefficients}, we prove that the coefficients $(A_p)_{p \geq 1}$ are positive (see Def.~\ref{def Ap}).


\subsection{Fast decreasing characteristic function}
\label{subsec fast decreasing characteristic function}

\begin{lem}
\label{lem psi fast decreasing}
Let $\psi :\R^p \to \R$ be the function defined by Def.~\ref{def psi}. Their exists $C >0$ such that $\norm{\psi(x)} = O\left(\exp\left(-C\Norm{x}^\frac{1}{2p}\right)\right)$, where $\Norm{\cdot}$ stands for the canonical Euclidean norm of $\R^p$.
\end{lem}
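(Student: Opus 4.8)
The plan is to bound $\norm{\psi(x)}$ from above by turning the claim into a lower bound for the integral appearing in the exponent. Since $\operatorname{Re}\bigl(1-e^{i\theta}\bigr)=1-\cos\theta\ge 0$ and the defining integral converges absolutely, we have
$\norm{\psi(x)}=\exp\bigl(-\tfrac{1}{16\pi}\int_{-\infty}^{+\infty}\bigl(1-\cos\bigl(\sum_{q=1}^{p}x_q t^{-2q}\bigr)\bigr)\,dt\bigr)$,
so it suffices to show that this nonnegative integral is $\gtrsim\Norm{x}^{1/(2p)}$ for $\Norm{x}$ large; for bounded $\Norm{x}$ the statement is trivial because $\norm{\psi}\le 1$, which already subsumes it into the $O$-notation.

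First I would use the evenness of the integrand in $t$ together with the substitution $s=t^{-2}$ to rewrite the integral as $\int_0^{+\infty}\frac{1-\cos P(s)}{s^{3/2}}\,ds$, where $P(s)=\sum_{q=1}^{p}x_q s^q$ has degree at most $p$ and satisfies $P(0)=0$. Next I would introduce the homogeneous scale $m=\max_{1\le q\le p}\norm{x_q}^{1/q}$ and rescale by $s=u/m$; this transforms the integral into $m^{1/2}\,F(\tilde x)$, where $\tilde x_q=x_q m^{-q}$ and $F(y)=\int_0^{+\infty}\frac{1-\cos\bigl(\sum_{q=1}^{p}y_q u^q\bigr)}{u^{3/2}}\,du$, and by construction $\Norm{\tilde x}_\infty=1$.

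The crux is then a compactness argument on the set $K=\{y\in\R^p:\Norm{y}_\infty=1\}$. The functional $F$ is continuous on $K$: this follows from dominated convergence using the $y$-independent integrable majorant $u\mapsto\tfrac{p^2}{2}u^{1/2}$ on $(0,1]$ and $u\mapsto 2u^{-3/2}$ on $[1,+\infty)$. Moreover $F$ is strictly positive on $K$, for if $F(y)=0$ then $\sum_{q=1}^{p}y_q u^q\in 2\pi\Z$ for almost every $u>0$, hence identically, by continuity and discreteness of $2\pi\Z$; being a polynomial vanishing at $0$, it is then $\equiv 0$, contradicting $\Norm{y}_\infty=1$. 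Therefore $c_0:=\inf_K F>0$, which gives $\int_0^{+\infty}\frac{1-\cos P(s)}{s^{3/2}}\,ds\ge c_0\,m^{1/2}$.

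It remains to compare $m$ with $\Norm{x}$: choosing $q_0$ with $\norm{x_{q_0}}=\Norm{x}_\infty\ge\Norm{x}/\sqrt p$ yields $m\ge\norm{x_{q_0}}^{1/q_0}\ge\Norm{x}_\infty^{1/p}\ge(\Norm{x}/\sqrt p)^{1/p}$ as soon as $\Norm{x}\ge\sqrt p$, hence $m^{1/2}\ge p^{-1/(4p)}\Norm{x}^{1/(2p)}$. Combining the estimates gives $\norm{\psi(x)}\le\exp\bigl(-C\Norm{x}^{1/(2p)}\bigr)$ for $\Norm{x}$ large with $C=\tfrac{c_0}{16\pi}\,p^{-1/(4p)}$, which is the desired bound. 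I expect the continuity and strict positivity of $F$ on the compact set $K$ to be the only delicate point; the rest of the argument consists of an elementary change of variables and routine estimates.
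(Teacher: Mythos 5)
Your proof is correct, and it takes a genuinely different route from the paper's. Both arguments share the same first step (reducing to a lower bound on the real part of the exponent) and the same homogeneous scaling heuristic (the integral should scale like $\|x\|^{1/(2p)}$), but the mechanisms for the lower bound diverge. The paper stays with the integral $\int_0^\infty (1-\cos(\sum_q x_q t^{-2q}))\,dt$, restricts attention to $t\ge T(x)$ where the argument of the cosine is at most $\pi/2$, uses the elementary inequality $1-\cos u \ge u^2/\pi$ there, and then establishes that the resulting quadratic form $B(y,z)=\sum_{i,j}\frac{y_i z_j}{2i+2j-1}$ is positive definite (via an explicit integral representation), which yields the lower bound with an in-principle explicit constant. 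You instead perform the substitution $s=t^{-2}$ to bring the integral into the homogeneous form $\int_0^\infty\frac{1-\cos P(s)}{s^{3/2}}\,ds$ with $P$ a polynomial, extract the scale factor $m^{1/2}$ by a dilation, and then invoke a compactness argument: the functional $F$ is continuous and strictly positive on the compact sphere $\{\|y\|_\infty=1\}$, so it is bounded below by a positive constant $c_0$. Your approach is shorter and conceptually cleaner (it sidesteps both the truncation of the domain of integration and the quadratic-form analysis), at the price of losing the explicit constant that the paper's positive-definiteness argument produces; for the purposes of this lemma, where only existence of some $C>0$ is asserted, this is immaterial. Your dominated-convergence bounds, the vanishing argument showing $F>0$ on the sphere, and the comparison $m^{1/2}\gtrsim\|x\|^{1/(2p)}$ (for $\|x\|\ge\sqrt p$) are all sound.
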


\begin{proof}
Let $x\in \R^p$, we have:
\begin{equation*}
\norm{\psi(x)} = \exp\left(-\frac{1}{16\pi} \int_{-\infty}^{+\infty} \Re \left( 1-\exp\left(i\sum_{q=1}^p \frac{x_q}{t^{2q}}\right)\right) \dx t\right) = \exp\left(-\frac{1}{8\pi} I(x)\right),
\end{equation*}
where
\begin{equation*}
I(x) = \int_0^{+\infty} 1-\cos \left(\sum_{q=1}^p \frac{x_q}{t^{2q}} \right) \dx t.
\end{equation*}

Let $T(x) = \max \left\{ \left(\frac{2p}{\pi}\norm{x_q}\right)^\frac{1}{2q} \mvert 1\leq q \leq p \right\}$ and let $q_0 \in \{1,\dots,p\}$ be an index such that $\left(\frac{2p}{\pi}\norm{x_{q_0}}\right)^\frac{1}{2q_0} = T(x)$. For all $t \geq T(x)$, we have, for all $q \in \{1,\dots,p\}$, $t^{2q} \geq T(x)^{2q} \geq \frac{2p}{\pi}\norm{x_q}$. Hence,
\begin{equation*}
\norm{\sum_{q=1}^p \frac{x_q}{t^{2q}}} \leq \sum_{q=1}^p \frac{\norm{x_q}}{t^{2q}} \leq \frac{\pi}{2}.
\end{equation*}
Recall that, for any $u \in \R$, we have $1-\cos(u) \geq 0$. Moreover, for any $u \in [0,\frac{\pi}{2}]$, we have:
\begin{equation*}
1-\cos(u) = \int_0^u \sin(\theta) \dx \theta \geq \int_0^u \frac{2\theta}{\pi} \dx \theta = \frac{1}{\pi} u^2.
\end{equation*}
Thus,
\begin{equation*}
I(x) \geq \int_{T(x)}^{+\infty} 1-\cos\left(\sum_{q=1}^p \frac{x_q}{t^{2q}}\right) \dx t \geq \frac{1}{\pi} \int_{T(x)}^{+\infty} \left(\sum_{q=1}^p \frac{x_q}{t^{2q}}\right)^{\!2}\!\dx t \geq \frac{1}{\pi} \sum_{1\leq i,j \leq p} \frac{x_i x_j}{(2i+2j-1)T(x)^{2i+2j-1}}.
\end{equation*}

Let $B:\R^p \times \R^p \to \R$ be defined by $B:(y,z) \mapsto \sum_{1\leq i,j,\leq p} \frac{y_iz_j}{2i+2j-1}$. Then $B$ is a bilinear symmetric form. Moreover, for any $y \in \R^p$,
\begin{equation*}
B(y,y) = \sum_{1\leq i,j,\leq p} \frac{y_iy_j}{2i+2j-1} = \sum_{1\leq i,j,\leq p} y_iy_j \int_1^{+\infty} \frac{\dx t}{t^{2i+2j}} = \int_1^{+\infty} \left(\sum_{i=1}^p \frac{y_i}{t^{2i}}\right)^2 \dx t \geq 0.
\end{equation*}
If $B(y,y)=0$, then $\frac{1}{t^2}$ is a root of the polynomial $\sum_{i=1}^p y_i X^i$ for all $t \geq 1$, hence $y=0$. Thus $B$ is positive-definite and is a scalar product on $\R^p$. The associated norm is then equivalent to the standard Euclidean norm on $\R^p$. Therefore, there exists $C'>0$ such that, for all $y \in \R^p$, $B(y,y) \geq C' \Norm{y}^2$. Denoting by $\tilde{x} = \left(\frac{x_1}{T(x)^2},\dots,\frac{x_p}{T(x)^{2p}}\right)$, we have:
\begin{align*}
I(x) \geq \frac{T(x)}{\pi} B\left(\tilde{x},\tilde{x}\right) \geq \frac{C'}{\pi} T(x) \sum_{i=1}^p \left(\frac{x_i}{T(x)^{2i}}\right)^2 \geq \frac{C'}{\pi} T(x) \left(\frac{\norm{x_{q_0}}}{T(x)^{2q_0}}\right)^2 \geq \frac{C'\pi}{4p^2} T(x).
\end{align*}

Besides, since $\sum_{q=1}^p x_q^2 = \Norm{x}^2$, there exists $q \in \{1,\dots,p\}$ such that $\norm{x_q} \geq \frac{1}{p}\Norm{x}$. Then, we have $T(x) \geq \left(\frac{2p}{\pi}\norm{x_q}\right)^\frac{1}{2q} \geq \left(\frac{2}{\pi}\Norm{x}\right)^\frac{1}{2q}$. Finally, there exists $C>0$ such that, for any $x \in \R^p$, we have $\frac{1}{8\pi} I(x) \geq C \Norm{x}^\frac{1}{2p}$. Hence the result.
\end{proof}


\subsection{Polynomial maps}
\label{subsec polynomial maps}

In this section, we prove several results related with the polynomials $(P_p)_{p \geq 1}$ and $(Q_p)_{p \geq 1}$ defined by Def.~\ref{def P and Q}.

\begin{lem}
\label{lem relation P and Q}
For any $p \in \N^*$, we have the following relations in $\R[X_1,\dots,X_p]$:
\begin{align*}
X_p &= Q_p\left(P_1(X_1),\dots,P_p(X_1,\dots,X_p)\right),\\
X_p &= P_p\left(Q_1(X_1),\dots,Q_p(X_1,\dots,X_p)\right).
\end{align*}
\end{lem}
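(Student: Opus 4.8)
The plan is to deduce both identities from the two formal power series relations for $P_p$ and $Q_q$ that were established earlier, namely Eq.~\eqref{eq formal series P} and Eq.~\eqref{eq formal series Q}. Written out for indeterminates $Y_1, Y_2, \dots$ and a bookkeeping variable $T$, the first states that $1 + \sum_{p \geq 1} \frac{1}{p!} P_p(Y_1,\dots,Y_p) T^p = \exp\bigl( \sum_{q \geq 1} (-1)^{q-1} Y_q T^q \bigr)$, and the second that $\sum_{q \geq 1} (-1)^{q-1} Q_q(Y_1,\dots,Y_q) T^q = \ln\bigl( 1 + \sum_{p \geq 1} \frac{Y_p}{p!} T^p \bigr)$, as identities in $\R[Y_1,Y_2,\dots][[T]]$. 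Since $u \mapsto \exp(u)$ and $v \mapsto \ln(1+v)$ are mutually inverse bijections on the formal power series with vanishing constant term, composing these two relations will immediately give the claim; the only preliminary point is that the substitutions involved are legitimate, which holds because $P_p(0,\dots,0) = Q_p(0,\dots,0) = 0$ for every $p \geq 1$ (each monomial occurring in Def.~\ref{def P and Q} is non-constant).

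For the relation $X_p = P_p(Q_1(X_1), \dots, Q_p(X_1,\dots,X_p))$ I would set $Z_q := Q_q(X_1, \dots, X_q)$, a polynomial in $X_1, \dots, X_q$ with no constant term. Applying Eq.~\eqref{eq formal series Q} with $Y_q = X_q$ gives $\sum_{q \geq 1} (-1)^{q-1} Z_q T^q = \ln\bigl( 1 + \sum_{p \geq 1} \frac{X_p}{p!} T^p \bigr)$; exponentiating both sides and then invoking Eq.~\eqref{eq formal series P} with $Y_q = Z_q$ yields $1 + \sum_{p \geq 1} \frac{X_p}{p!} T^p = 1 + \sum_{p \geq 1} \frac{1}{p!} P_p(Z_1, \dots, Z_p) T^p$, and comparing the coefficient of $T^p$ gives $X_p = P_p(Z_1, \dots, Z_p)$. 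The other relation $X_p = Q_p(P_1(X_1), \dots, P_p(X_1,\dots,X_p))$ is proved symmetrically: set $W_p := P_p(X_1,\dots,X_p)$, use Eq.~\eqref{eq formal series P} to write $1 + \sum_p \frac{W_p}{p!} T^p = \exp\bigl(\sum_q (-1)^{q-1} X_q T^q\bigr)$, take logarithms, apply Eq.~\eqref{eq formal series Q} with $Y_q = W_q$, and compare coefficients of $T^q$ to conclude $X_q = Q_q(W_1,\dots,W_q)$.

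I do not expect a serious obstacle: once the generating-function identities are available the argument is a two-line manipulation. The points deserving a moment's care are (i) that the formal-power-series substitutions $Y_q \mapsto Z_q$ and $Y_q \mapsto W_q$ are well-defined and that the resulting identities may be compared coefficient by coefficient — this is fine because $P_p$ and $Q_p$ are weighted-homogeneous of weighted degree $p$ when $X_q$ is given weight $q$, so the coefficient of $T^p$ on each side is a genuine polynomial relation among $X_1, \dots, X_p$, with no convergence issue — and (ii) if one preferred not to rely on Eq.~\eqref{eq formal series P} and Eq.~\eqref{eq formal series Q}, one would reprove them directly from the series expansions of $\exp$ and $\ln(1+\cdot)$ via the multinomial theorem, using $\sum_q (q-1)\alpha_q = p - \norm{\alpha}$ for $\alpha \in \mathcal{P}(p)$ to recover the signs $(-1)^{p-\norm{\alpha}}$ in Def.~\ref{def P and Q}.
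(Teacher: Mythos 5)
Your proof is correct and follows essentially the same route as the paper: derive the generating-function identities $1 + \sum_p P_p \, T^p/p! = \exp\bigl(\sum_q (-1)^{q-1} X_q T^q\bigr)$ and $\sum_q (-1)^{q-1} Q_q \, T^q = \ln\bigl(1 + \sum_p X_p T^p/p!\bigr)$ from the multinomial expansion, then compose and compare coefficients of $T^p$. The one small caveat is that in the paper these two identities are themselves established \emph{inside} the proof of the lemma rather than before it, so rather than citing them as already available you would need to include their derivation (which you correctly sketch in your point (ii)).
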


\begin{proof}
Let us consider the following power series, that are related to the generating functions of the sequences $(P_p)_{p \geq 1}$ and $(Q_p)_{p \geq 1}$. These are formal power series in infinitely many variables: $T$ and $X=(X_k)_{k \geq 1}$ (resp.~$T$ and $Y =(Y_k)_{k \geq 1}$). We have:
\begin{equation}
\label{eq formal series P}
\begin{aligned}
P(X,T) &= 1 + \sum_{p \geq 1} P_p(X_1,\dots,X_p) \frac{T^p}{p!}\\
&= \sum_{p \geq 0} \sum_{\alpha \in \mathcal{P}(p)} (-1)^{p-\norm{\alpha}} T^p \frac{1}{\alpha!}\prod_{q \geq 1}\left(X_q\right)^{\alpha_q}\\
&= \sum_{\alpha \in \ell_0} \frac{1}{\alpha!} \prod_{q \geq 1}\left((-1)^{q-1}X_qT^q\right)^{\alpha_q}\\
&= \sum_{d \geq 0} \frac{1}{d!} \sum_{\alpha \in \ell_0; \norm{\alpha}=d} \frac{d!}{\alpha!} \prod_{q\geq 1}\left((-1)^{q-1}X_qT^q\right)^{\alpha_q}\\
&= \exp\left(\sum_{q \geq 1}(-1)^{q-1}X_qT^q\right).
\end{aligned}
\end{equation}
Similarly, we have:
\begin{equation}
\label{eq formal series Q}
\begin{aligned}
Q(Y,T) &= \sum_{q \geq 1} (-1)^{q-1}Q_q(Y_1,\dots,Y_q) T^q\\
&= \sum_{q \geq 1} \sum_{\alpha \in \mathcal{P}(q)} \frac{(-1)^{\norm{\alpha}-1}}{\norm{\alpha}} \frac{\norm{\alpha}!}{\alpha!} \prod_{p\geq 1} \left(\frac{Y_pT^p}{p!}\right)^{\alpha_p}\\
&= \sum_{\alpha \in \ell_0 \setminus \{0\}} \frac{(-1)^{\norm{\alpha}-1}}{\norm{\alpha}} \frac{\norm{\alpha}!}{\alpha!} \prod_{p\geq 1} \left(\frac{Y_pT^p}{p!}\right)^{\alpha_p}\\
&= \sum_{d\geq 1} \frac{(-1)^{d-1}}{d} \sum_{\alpha \in \ell_0; \norm{\alpha}=d} \frac{d!}{\alpha!} \prod_{p \geq 1}\left(\frac{Y_pT^p}{p!}\right)^{\alpha_p}\\
&= \ln\left(1+ \sum_{p \geq 1}Y_p\frac{T^p}{p!}\right).
\end{aligned}
\end{equation}

Starting from Eq.~\eqref{eq formal series P} above, we apply the logarithm and obtain:
\begin{equation*}
\ln \left(1+\sum_{p \geq 1} P_p(X_1,\dots,X_p)\frac{T^p}{p!}\right) = \sum_{q \geq 1}(-1)^{q-1}X_qT^q.
\end{equation*}
The left-hand side has the same form as the last term in Eq.~\eqref{eq formal series Q}, with $Y_p = P_p(X_1,\dots,X_p)$ for all $p \geq 1$. Hence, we get:
\begin{equation*}
\sum_{q \geq 1}(-1)^{q-1}X_qT^q = \sum_{q \geq 1} (-1)^{q-1}Q_q\left(P_1(X_1),\dots,P_q(X_1,\dots,X_q)\right) T^q.
\end{equation*}
Identifying the coefficients according to their degree in the variable $T$ proves that for any $p \geq 1$, $X_p = Q_p\left(P_1(X_1),\dots,P_p(X_1,\dots,X_p)\right)$. The proof of the second claim is similar. Starting from Eq.~\eqref{eq formal series Q}, we apply the exponential and obtain:
\begin{equation*}
\exp\left(\sum_{q \geq 1} (-1)^{q-1}Q_q(Y_1,\dots,Y_q) T^q\right) = 1+ \sum_{p \geq 1}Y_p\frac{T^p}{p!}.
\end{equation*}
Using Eq.~\eqref{eq formal series P} with $X_p = Q_p(X_1,\dots,X_p)$ for all $p \geq 1$, and identifying the coefficients according to their degree in $T$ yields the result.
\end{proof}

\begin{dfn}
\label{def Phi p Psi p}
Let $p \in \N \setminus \{0,1\}$, we denote by $\Phi_p$ and $\Psi_p$ the polynomial maps from $\R^p$ to~$\R^p$ defined by:
\begin{align*}
&\Phi_p : (x_1,x_2,\dots,x_p) \longmapsto \left(P_1(x_1),P_2\left(x_1,x_2\right), \dots, P_p\left(x_1,x_2,\dots,x_p\right)\right),\\
&\Psi_p:(x_1,x_2,\dots,x_p) \longmapsto \left(Q_1(x_1),Q_2\left(x_1,x_2\right), \dots,Q_p\left(x_1,x_2,\dots,x_p\right)\right),
\end{align*}
where $(P_p)_{p \geq 1}$ and $(Q_p)_{p \geq 1}$ were defined by Def.~\ref{def P and Q}. We also denote by $\Phi'_p$ and $\Psi'_p$ the polynomial maps from $\R^{p-1}$ to itself defined by:
\begin{align*}
&\Phi'_p : (y_2,\dots,y_p) \longmapsto \left(P_2\left(1,y_2\right), \dots, P_p\left(1,y_2,\dots,y_p\right)\right),\\
&\Psi'_p:(y_2,\dots,y_p) \left(Q_2\left(1,y_2\right), \dots, Q_p\left(1,y_2,\dots,y_p\right)\right).
\end{align*}
\end{dfn}

\begin{lem}
\label{lem Phi and Psi diffeo}
Let $p \in \N \setminus \{0,1\}$, then $\Phi_p$ and $\Psi_p$ are smooth diffeomorphisms from $\R^p$ onto itself, such that $\Psi_p = (\Phi_p)^{-1}$. Moreover, for any $x \in \R^p$, the absolute value of the determinant of the differential of $\Phi_p$ at the point $x$ is $\norm{\det\left(d_x\Phi_p\right)} = \prod_{q=1}^p q!$.
\end{lem}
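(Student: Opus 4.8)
The plan is to deduce the whole statement from the two algebraic identities of Lemma~\ref{lem relation P and Q} together with the ``triangular'' structure of the maps $\Phi_p$ and $\Psi_p$. Since $P_q$ and $Q_q$ are polynomials (Rem.~\ref{rem P and Q}), the maps $\Phi_p$ and $\Psi_p$ are smooth. The first step is to observe that, because $P_q$ (resp.~$Q_q$) only involves the variables $X_1,\dots,X_q$, the $q$-th component of $\Psi_p \circ \Phi_p$ is exactly $Q_q\left(P_1(x_1),\dots,P_q(x_1,\dots,x_q)\right)$, which equals $x_q$ by Lemma~\ref{lem relation P and Q} applied with $p$ replaced by $q$; symmetrically, the $q$-th component of $\Phi_p \circ \Psi_p$ equals $x_q$. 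Hence $\Psi_p \circ \Phi_p = \Phi_p \circ \Psi_p = \id_{\R^p}$, so $\Phi_p$ and $\Psi_p$ are mutually inverse bijections of $\R^p$ onto itself, each smooth, and therefore both are smooth diffeomorphisms with $\Psi_p = (\Phi_p)^{-1}$.

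For the Jacobian, I would use the same triangularity: the dependence of $P_q$ on only $X_1,\dots,X_q$ makes the differential $d_x\Phi_p$ a lower-triangular matrix for every $x \in \R^p$, so that $\det\left(d_x\Phi_p\right) = \prod_{q=1}^p \frac{\partial P_q}{\partial X_q}(x_1,\dots,x_q)$. It then remains to compute the diagonal entries. Inspecting Def.~\ref{def P and Q}, a monomial $\prod_{r\geq1}(X_r)^{\alpha_r}$ with $\alpha \in \mathcal{P}(q)$ involves $X_q$ only if $\alpha_q \geq 1$; but the constraint $\sum_r r\alpha_r = q$ then forces $\alpha_q = 1$ and $\alpha_r = 0$ for all $r \neq q$. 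Thus $P_q$ is affine in $X_q$, with the coefficient of $X_q$ equal to $(-1)^{q-\|\alpha\|}\frac{q!}{\alpha!} = (-1)^{q-1}q!$ for this unique partition, so $\frac{\partial P_q}{\partial X_q} \equiv (-1)^{q-1}q!$. Consequently $\det\left(d_x\Phi_p\right) = \prod_{q=1}^p (-1)^{q-1}q! = (-1)^{p(p-1)/2}\prod_{q=1}^p q!$, and taking absolute values yields $\norm{\det\left(d_x\Phi_p\right)} = \prod_{q=1}^p q!$ for all $x \in \R^p$.

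I do not expect a serious obstacle here: once Lemma~\ref{lem relation P and Q} is available, the statement is essentially bookkeeping. The one mild subtlety is to assemble the componentwise identities of Lemma~\ref{lem relation P and Q}, stated in a fixed number of variables, into the map-level identities $\Psi_p \circ \Phi_p = \Phi_p \circ \Psi_p = \id_{\R^p}$; this is precisely where the fact that $P_q$ and $Q_q$ depend only on the first $q$ variables enters. As an alternative to the monomial-counting argument for the diagonal entries, one can differentiate the generating-function identity $P(X,T) = \exp\left(\sum_{q\geq1}(-1)^{q-1}X_qT^q\right)$ of Eq.~\eqref{eq formal series P}: the part of $P(X,T)$ linear in $X_q$ is $(-1)^{q-1}X_qT^q$, so the coefficient of $X_q$ in $P_q = p!\,[T^q]P(X,T)$ is $(-1)^{q-1}q!$, giving the same conclusion.
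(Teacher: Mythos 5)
Your proof is correct and follows essentially the same route as the paper: deduce $\Psi_p \circ \Phi_p = \Phi_p \circ \Psi_p = \id$ from Lemma~\ref{lem relation P and Q} using the triangular dependence of $P_q,Q_q$ on the first $q$ variables, then read off the Jacobian from the lower-triangular differential whose diagonal entries are $(-1)^{q-1}q!$. You merely spell out the diagonal-coefficient computation in more detail than the paper does (which just cites Def.~\ref{def P and Q}); note the small typo in your alternative remark, where $P_q = q!\,[T^q]P(X,T)$, not $p!\,[T^q]P(X,T)$.
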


\begin{proof}
Since $\Phi_p$ and $\Psi_p$ are polynomial maps, it is enough to check that $\Psi_p \circ \Phi_p = \Id = \Phi_p \circ \Psi_p$ to prove the first claim. By Lem.~\ref{lem relation P and Q}, for any $(x_1,x_2,\dots,x_p) \in \R^p$ and any $q \in \{1,\dots,p\}$, we have:
\begin{equation*}
Q_q\left(P_1(x_1),P_2\left(x_1,x_2\right), \dots, P_q\left(x_1,x_2,\dots,x_q\right)\right) = x_q.
\end{equation*}
Hence, $\Psi_p \circ \Phi_p = \Id$. Similarly, by Lem.~\ref{lem relation P and Q}, for any $(x_1,x_2,\dots,x_p) \in \R^p$ and any $q \in \{1,\dots,p\}$, we have $P_q\left(P_1(x_1),Q_2\left(x_1,x_2\right), \dots, Q_q\left(x_1,x_2,\dots,x_q\right)\right)=x_q$, so that $\Phi_p \circ \Psi_p = \Id$.

Let $x \in \R^p$, the matrix of $d_x\Phi_p$ in the canonical basis of $\R^p$ equals $\begin{pmatrix}
\deron{P_i}{x_j}(x)
\end{pmatrix}_{1 \leq i,j \leq p}$. By Def.~\ref{def P and Q}, this matrix is lower triangular, and its diagonal coefficients are $\left((-1)^{q-1}q!\right)_{1 \leq q \leq p}$. Hence $\norm{\det\left(d_x\Phi_p\right)} = \prod_{q=1}^p q!$.
\end{proof}

\begin{cor}
\label{cor Phi' and Psi' diffeo}
Let $p \in \N \setminus \{0,1\}$, then $\Phi'_p$ and $\Psi'_p$ are smooth diffeomorphisms from $\R^{p-1}$ onto itself, such that $\Psi'_p = (\Phi'_p)^{-1}$. Moreover, for any $y \in \R^p$, we have $\norm{\det\left(d_y\Phi'_p\right)} = \prod_{q=2}^p q!$.
\end{cor}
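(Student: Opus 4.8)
The plan is to deduce the corollary from Lem.~\ref{lem Phi and Psi diffeo} by restricting $\Phi_p$ and $\Psi_p$ to the affine hyperplane $\{1\} \times \R^{p-1} \subset \R^p$. The starting observation is that $P_1(X_1) = X_1$ and $Q_1(X_1) = X_1$, since $\mathcal{P}(1)$ consists of the single partition $(1,0,0,\dots)$, so both $P_1$ and $Q_1$ reduce to their unique term. Consequently, for every $(y_2,\dots,y_p) \in \R^{p-1}$,
\begin{equation*}
\Phi_p(1,y_2,\dots,y_p) = \left(1, P_2(1,y_2), \dots, P_p(1,y_2,\dots,y_p)\right) = \left(1, \Phi'_p(y_2,\dots,y_p)\right),
\end{equation*}
and likewise $\Psi_p(1,y_2,\dots,y_p) = \left(1, \Psi'_p(y_2,\dots,y_p)\right)$, directly from Def.~\ref{def Phi p Psi p}. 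In particular $\Phi_p$ and $\Psi_p$ both map the hyperplane $\{1\} \times \R^{p-1}$ into itself, and their restrictions to it are, up to forgetting the (constant) first coordinate, exactly $\Phi'_p$ and $\Psi'_p$.

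Next I would restrict the identities $\Psi_p \circ \Phi_p = \Id = \Phi_p \circ \Psi_p$ on $\R^p$ — valid by Lem.~\ref{lem Phi and Psi diffeo} — to $\{1\} \times \R^{p-1}$. Reading off the last $p-1$ coordinates of these identities and using the two displayed formulas above yields $\Psi'_p \circ \Phi'_p = \Id_{\R^{p-1}} = \Phi'_p \circ \Psi'_p$. Since $\Phi'_p$ and $\Psi'_p$ are polynomial maps, hence smooth, this shows that $\Phi'_p$ is a smooth diffeomorphism of $\R^{p-1}$ onto itself with smooth inverse $\Psi'_p$.

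For the Jacobian, I would compute the matrix of $d_y\Phi'_p$ in the canonical basis of $\R^{p-1}$, whose entries are $\frac{\partial P_i}{\partial X_j}(1,y_2,\dots,y_i)$ for $2 \leq i,j \leq p$. By Def.~\ref{def P and Q}, $P_i$ depends only on $X_1,\dots,X_i$, so the matrix is lower triangular; its diagonal entry of index $i$ is the coefficient of $X_i$ in $P_i$, which is $(-1)^{i-1} i!$ (this is the only monomial of $P_i$ involving $X_i$, coming from the partition with $\alpha_i = 1$). Hence $\norm{\det\left(d_y\Phi'_p\right)} = \prod_{q=2}^p q!$ for every $y \in \R^{p-1}$, as claimed.

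The argument is short, and there is no genuine obstacle; the only point requiring a little care is the compatibility $\Phi_p(1,\cdot) = (1,\Phi'_p(\cdot))$, which I have reduced to the trivial identities $P_1 = X_1$ and $Q_1 = X_1$. As an alternative that bypasses Lem.~\ref{lem Phi and Psi diffeo}, one could specialize the polynomial identities of Lem.~\ref{lem relation P and Q} at $X_1 = 1$ to obtain $\Psi'_p \circ \Phi'_p = \Id$ and $\Phi'_p \circ \Psi'_p = \Id$ directly, but invoking the full-space statement is cleaner.
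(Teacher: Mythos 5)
Your proof is correct and follows essentially the same route as the paper: both arguments hinge on the observation that $P_1 = Q_1 = X_1$ (so that $\Phi_p$ and $\Psi_p$ preserve the hyperplane $\{1\}\times\R^{p-1}$, and their restrictions there are exactly $(1,\Phi'_p)$ and $(1,\Psi'_p)$), then restrict the inverse relations of Lem.~\ref{lem Phi and Psi diffeo} to that hyperplane and read off the Jacobian from the lower-triangular structure of $d_y\Phi'_p$. Your write-up is, if anything, slightly more explicit than the paper's about the intermediate step $\Phi_p(1,\cdot) = (1,\Phi'_p(\cdot))$, but the underlying idea is identical.
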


\begin{proof}
By Lem.~\ref{lem Phi and Psi diffeo}, for any $y \in \R^{p-1}$, we have:
\begin{equation*}
(1,y) = \Phi_p \circ \Psi_p(1,y) = \Phi_p\left(1,\Psi'_p(y)\right)) = (1,\Phi'_p\circ \Psi'_p(y)).
\end{equation*}
Hence, $\Phi'_p\circ \Psi'_p=\Id$. Similarly, since $\Psi_p \circ \Phi_p = \Id$, we have $\Phi'_p \circ \Psi'_p=\Id$. Thus $\Psi'_p = (\Phi'_p)^{-1}$.
 
As in the proof of Lem.~\ref{lem Phi and Psi diffeo} above, by Def.~\ref{def P and Q} the matrix of $d_y\Phi'_p$ is lower triangular and its diagonal coefficients are $(-2!,\dots,(-1)^{p-1}p!)$. Hence $\norm{\det\left(d_y\Phi'_p\right)} = \prod_{q=2}^p q!$.
\end{proof}


\subsection{Positivity of the coefficients \texorpdfstring{$A_p$}{}}
\label{subsec coefficients}

In this final section, we prove the positivity of the coefficients $(A_p)_{p \geq 1}$, defined by Def.~\ref{def Ap}. The proof relies on the study of some modified Bessel functions. Let us denote by $I_0$ and $I_1$ the modified Bessel functions of the first kind of index $0$ and $1$ respectively. Recall that $I_0$ and $I_1$ are the entire functions defined by:
\begin{align*}
I_0(X) &= \sum_{p \geq 0} \frac{1}{(p!)^2} \left(\frac{X}{2}\right)^{2p} & &\text{and} & I_1(X) &= \sum_{p \geq 1} \frac{1}{(p-1)!p!} \left(\frac{X}{2}\right)^{2p-1} = I_0'(X).
\end{align*}
By Eq.~\eqref{eq generating function Ap}, we have:
\begin{equation}
\label{eq ln Bessel and Ap}
\ln(I_0(2X)) = \sum_{p \geq 1} (-1)^{p-1}A_pX^{2p}.
\end{equation}
Our strategy is to study the power series $\ln(I_0(X))$ and deduce the positivity of the $(A_p)_{p \geq 1}$ from Eq.~\eqref{eq ln Bessel and Ap}.

By~\cite[Eq.~(2.2)]{GS1978}, we have the following continued fraction expansion:
\begin{equation}
\label{eq continued fraction 1}
\frac{2}{X}\frac{I_1(X)}{I_0(X)}= \cfrac{1}{1+\cfrac{a_1 X^2}{1+\cfrac{a_2 X^2}{1+\cfrac{a_3 X^2}{1+\cdots}}}},
\end{equation}
where $a_k=\frac{1}{4k(k+1)}$ for all $k\geq 1$. This type of continued fraction can be written as a so-called S-fraction, which was first studied by Stieltjes. By~\cite[Eq.~(1.1)]{vAss1996}, we have:
\begin{equation}
\label{eq continued fraction 2}
\cfrac{1}{1+\cfrac{a_1 X^2}{1+\cfrac{a_2 X^2}{1+\cfrac{a_3 X^2}{1+\cdots}}}} =Z \cfrac{1}{Z+\cfrac{a_1}{1+\cfrac{a_2}{Z+\cfrac{a_3}{1+\cdots}}}},
\end{equation}
where $Z = X^{-2}$ and we added a missing factor $Z$ on the right-hand side. Following Stieltjes, van Assche explains in~\cite[Thm.~1.1 and Eq.~(1.5)]{vAss1996} that the S-fraction converges and that we have:
\begin{equation}
\label{eq continued fraction 3}
\cfrac{1}{Z+\cfrac{a_1}{1+\cfrac{a_2}{Z+\cfrac{a_3}{1+\cdots}}}} = \frac{s_0}{Z}+\sum_{k \geq 1}\frac{s_k}{Z+r_k}.
\end{equation}
Here $s_0 \geq 0$ and $(s_k)_{k \geq 1}$ is a sequence of positive coefficients such that $\sum_{k \geq 0} s_k =1$. Besides, $(r_k)_{k \geq 1}$ is a sequence of non-zero complex numbers such that $r_k \xrightarrow[k \to +\infty]{} 0$.

\begin{lem}
\label{lem expression Ap}
Let $(A_p)_{p \geq 1}$ be defined by Def.~\ref{def Ap} and the sequences $(s_k)_{k \geq 1}$ and $(r_k)_{k \geq 1}$ be those appearing in Eq.~\eqref{eq continued fraction 3}. We have $A_1=1$ and, for all $p \geq 2$, $A_p= \frac{4^{p-1}}{p}\sum_{k \geq 1} s_k r_k^{p-1}$.
\end{lem}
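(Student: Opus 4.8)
The plan is to differentiate the identity \eqref{eq ln Bessel and Ap} and to evaluate the logarithmic derivative $\frac{d}{dX}\ln\left(I_0(2X)\right) = \frac{2I_0'(2X)}{I_0(2X)} = \frac{2I_1(2X)}{I_0(2X)}$ (using $I_1 = I_0'$) by means of the continued fraction expansions \eqref{eq continued fraction 1}, \eqref{eq continued fraction 2} and \eqref{eq continued fraction 3}. First I would chain these three expansions: substituting $Z = Y^{-2}$ in Eq.~\eqref{eq continued fraction 3} and multiplying by $Z$ turns the right-hand side of Eq.~\eqref{eq continued fraction 2} into $s_0 + \sum_{k \geq 1}\frac{s_k}{1+r_kY^2}$, where the series converges locally uniformly near $Y = 0$ because $(r_k)_{k \geq 1}$ is bounded (it tends to $0$) and $\sum_{k \geq 0}s_k = 1 < +\infty$. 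Combining with Eq.~\eqref{eq continued fraction 1}, this gives, for $Y$ in a neighborhood of $0$,
\begin{equation*}
\frac{2}{Y}\frac{I_1(Y)}{I_0(Y)} = s_0 + \sum_{k \geq 1}\frac{s_k}{1+r_kY^2},
\end{equation*}
an identity between functions that are analytic near the origin (note that $I_0$ is entire with $I_0(0) = 1 \neq 0$, so $\ln\left(I_0(2X)\right)$ is analytic near $0$).

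Next I would specialize this identity to $Y = 2X$ and multiply by $2X$, obtaining
\begin{equation*}
\frac{d}{dX}\ln\left(I_0(2X)\right) = \frac{2I_1(2X)}{I_0(2X)} = 2X\left(s_0 + \sum_{k \geq 1}\frac{s_k}{1+4r_kX^2}\right).
\end{equation*}
For $\norm{X}$ small enough, $\norm{4r_kX^2}<1$ holds uniformly in $k$ (since $\sup_k\norm{r_k}<+\infty$), so each summand can be expanded as a geometric series; the double sum over $k$ and the geometric index may then be swapped by Fubini, because $\sum_{k \geq 1}s_k\sum_{n \geq 0}\norm{4r_kX^2}^n$ converges. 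Using $s_0 + \sum_{k \geq 1}s_k = 1$ to collect the constant term of the bracket, this produces the power series
\begin{equation*}
\frac{d}{dX}\ln\left(I_0(2X)\right) = 2X + \sum_{n \geq 1}(-1)^n 2 \cdot 4^n\left(\sum_{k \geq 1}s_kr_k^n\right)X^{2n+1}.
\end{equation*}

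Finally, differentiating Eq.~\eqref{eq ln Bessel and Ap} term by term gives $\frac{d}{dX}\ln\left(I_0(2X)\right) = \sum_{p \geq 1}(-1)^{p-1}2pA_pX^{2p-1}$, and matching coefficients with the previous display finishes the proof: the coefficient of $X^1$ gives $2A_1 = 2$, hence $A_1 = 1$, and for $p \geq 2$ the coefficient of $X^{2p-1}$ (which is the term $n = p-1$ above) gives $2pA_p = 2 \cdot 4^{p-1}\sum_{k \geq 1}s_kr_k^{p-1}$, i.e. $A_p = \frac{4^{p-1}}{p}\sum_{k \geq 1}s_kr_k^{p-1}$. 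The substance of the argument is the bookkeeping just described; the only points requiring a little care are analytic, namely the validity of the substitution $Z = Y^{-2}$ in the S-fraction expansion and of the interchange of summations, and both reduce to the facts that $r_k \to 0$ and $\sum_{k \geq 0}s_k = 1$, which force the relevant series to converge absolutely and locally uniformly near $0$.
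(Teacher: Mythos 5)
Your proof is correct and follows the paper's argument essentially verbatim: you chain the three continued-fraction identities to obtain the expansion of $\frac{2}{Y}\frac{I_1(Y)}{I_0(Y)}$ as $s_0+\sum_{k \geq 1}\frac{s_k}{1+r_kY^2}$, expand geometrically using $\sum_{k\geq 0}s_k = 1$ and $r_k \to 0$, then differentiate Eq.~\eqref{eq ln Bessel and Ap} and identify coefficients. The only difference is that you spell out the analytic justifications (local uniform convergence, Fubini) that the paper leaves implicit.
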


\begin{proof}
Recalling that $Z=X^{-2}$, by Eq.~\eqref{eq continued fraction 1}, \eqref{eq continued fraction 2} and \eqref{eq continued fraction 3}, we have:
\begin{equation*}
\frac{2}{X}\frac{I_1(X)}{I_0(X)} = s_0+\sum_{k \geq 1}\frac{s_k}{1+r_k X^2}.
\end{equation*}
Expanding the right-hand side, and using $\sum_{k \geq 0}s_k =1$, we get:
\begin{equation*}
\frac{2}{X}\frac{I_1(X)}{I_0(X)} = s_0+\sum_{k \geq 1}s_k \sum_{p \geq 0}(-1)^p r_k^p X^{2p}= 1 + \sum_{p\geq 1}(-1)^p \left(\sum_{k \geq 1} s_k r_k^p\right) X^{2p}.
\end{equation*}
Then, taking the derivative of Eq.~\eqref{eq ln Bessel and Ap}, we obtain:
\begin{equation*}
\sum_{p \geq 1} (-1)^{p-1}(2p)A_pX^{2p-1} =2\frac{I_1(2X)}{I_0(2X)}=2X + 2\sum_{p \geq 2}(-1)^{p-1} \left(\sum_{k \geq 1} s_k r_k^{p-1}\right) 4^{p-1} X^{2p-1}.
\end{equation*}
The result follows by identifying the coefficients of these power series.
\end{proof}

Since the $(s_k)_{k \geq 1}$ are positive, using Lem.~\ref{lem expression Ap} it is enough to prove that the $(r_k)_{k \geq 1}$ are positive to obtain the positivity of the $(A_p)_{p \geq 1}$. This is what we do in the following.

As van Assche points out in~\cite{vAss1996}, the convergents of the continued fraction on the left-hand side of Eq.~\eqref{eq continued fraction 3} are of the form:
\begin{equation*}
\frac{u_n\left(\frac{1}{Z}\right)}{v_n\left(\frac{1}{Z}\right)},
\end{equation*}
where $(u_n)_{n \geq 1}$ and $(v_n)_{n \geq 1}$ are sequences of polynomials such that, for all $n \geq 1$, the zeros of $v_n$ lie on the negative real half-line. Moreover, the continued fraction converges in the sense that $(u_n)$ and $(v_n)$ converge to entire functions $u$ and $v$ respectively, uniformly on compact sets. We have:
\begin{equation*}
\frac{u\left(\frac{1}{Z}\right)}{v\left(\frac{1}{Z}\right)} = \frac{s_0}{Z} + \sum_{k \geq 1} \frac{s_k}{Z+r_k},
\end{equation*}
so that the zeros of $v$ in $\C^*$ are the complex numbers of the form $-\frac{1}{r_k}$ with $k \geq 1$. In particular, $v \neq 0$, and its zeros are isolated points.

\begin{lem}
\label{lem zeros v}
Let $z \in \C$ be such that $v(z) = 0$, then $z$ is real and non-positive.
\end{lem}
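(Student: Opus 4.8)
The plan is to deduce this from Hurwitz's theorem, i.e.\ from the fact that the zeros of a sequence of holomorphic functions converging locally uniformly to a non-zero limit accumulate only at zeros of the limit. Recall from the discussion preceding the statement that the polynomials $(v_n)_{n \geq 1}$ converge to $v$ uniformly on every compact subset of $\C$, that $v$ is not identically zero, and that for every $n \geq 1$ all the zeros of $v_n$ lie on the non-positive real axis $(-\infty,0]$.

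First I would fix $z_0 \in \C$ with $v(z_0) = 0$. Since $v$ is entire and $v \neq 0$, its zeros are isolated, so one can pick $\rho > 0$ such that $v$ does not vanish on $\overline{D(z_0,\rho)} \setminus \{z_0\}$, where $D(z_0,\rho)$ denotes the open disk of center $z_0$ and radius $\rho$. Then $\delta = \min\left\{\norm{v(z)} \mvert \norm{z - z_0} = \rho\right\} > 0$. By uniform convergence on the compact circle $\left\{z \mvert \norm{z-z_0}=\rho\right\}$, there exists $N \geq 1$ such that $\norm{v_n(z) - v(z)} < \delta \leq \norm{v(z)}$ for all $n \geq N$ and all $z$ with $\norm{z-z_0} = \rho$. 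By Rouché's theorem, $v_n$ and $v$ then have the same number of zeros, counted with multiplicity, inside $D(z_0,\rho)$. Since $v(z_0) = 0$, this number is at least $1$; hence for every $n \geq N$ the polynomial $v_n$ has a zero $z_n$ with $\norm{z_n - z_0} < \rho$. As every zero of $v_n$ lies in $(-\infty,0]$, this gives $\mathrm{dist}\left(z_0,(-\infty,0]\right) < \rho$.

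Finally, since $\rho > 0$ was arbitrary, $\mathrm{dist}\left(z_0,(-\infty,0]\right) = 0$, and because $(-\infty,0]$ is closed, $z_0 \in (-\infty,0]$; that is, $z_0$ is real and non-positive. I do not expect any genuine technical obstacle here — the argument is a textbook application of Rouché/Hurwitz — the only point requiring care is to rely solely on the already-established properties of $(v_n)$ and $v$ listed above (in particular van Assche's result that the zeros of the convergents lie on the negative real axis) and not on the identity $v(z)=0 \iff z \in \left\{-\tfrac{1}{r_k} \mvert k \geq 1\right\}$, whose consequences for the sign of the $r_k$ are precisely what this lemma is meant to establish.
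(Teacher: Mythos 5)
Your proof is correct and is the same idea as the paper's: both are instances of Hurwitz's theorem, transferring the location of zeros of $v_n$ to $v$ by locally uniform convergence. The paper's proof argues by contradiction and counts zeros inside a circle $C(R)\subset\C\setminus(-\infty,0]$ via the argument-principle integral $\frac{1}{2i\pi}\int_{C(R)}\frac{v_n'}{v_n}$, which requires first establishing $v_n'\to v'$ locally uniformly (done there by a Cauchy-integral estimate) before passing to the limit. Your variant via Rouché's theorem sidesteps that preliminary step entirely, since Rouché needs only $\lvert v_n-v\rvert<\lvert v\rvert$ on the circle, and it also yields a direct argument rather than one by contradiction. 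One small stylistic point: $\rho$ is not truly arbitrary — it must be small enough that $v$ has no other zero in $\overline{D(z_0,\rho)}$ — but since zeros of $v$ are isolated, you may take $\rho$ as small as you like, which is all the conclusion $\mathrm{dist}(z_0,(-\infty,0])=0$ requires; it would be cleaner to say "for all sufficiently small $\rho>0$" rather than "arbitrary". Your closing caution about not circularly using $v(z)=0\iff z\in\{-1/r_k\}$ is exactly the right thing to watch for and you correctly avoided it.
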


\begin{proof}
First, let us prove that $v_n' \to v'$ uniformly on compact subsets of $\C$. Let $K$ be a compact subset of $\C$. Let $C$ be a closed contour such that $K$ is contained in the interior of $C$, and the distance from $C$ to $K$ is larger than $1$. For all $z \in K$, we have:
\begin{equation*}
\norm{v_n'(z)-v'(z)} = \norm{\frac{1}{2i\pi} \int_C \frac{v_n(w)-v(w)}{(w-z)^2} \dx w} \leq \frac{1}{2\pi} \int_C \norm{v_n(w)-v(w)} \dx w.
\end{equation*}
Since $C$ is compact, the right-hand side converges to $0$ as $n \to +\infty$.

Now, let us assume that there exists $z \in \C \setminus (-\infty,0]$ such that $v(z) = 0$. For any $R >0$, we denote by $C(R)$ the circle of center $z$ and radius $R$. Since $v$ is analytic and non-zero, its zeros are isolated. Then, there exists $R>0$ such that $C(R) \subset \C \setminus (-\infty,0]$ and, $\forall w \in C(R)$, $v(w) \neq 0$.

The number of zeros of $v_n$ lying inside $C(R)$, counted with multiplicity, is given by the following contour integral:
\begin{equation*}
\frac{1}{2i\pi}\int_{C(R)} \frac{v_n'(z)}{v_n(z)}\dx z.
\end{equation*}
This integral is well-defined and vanishes, because the zeros of $v_n$ are real and negative. On the compact set $C(R)$ we have $v_n\to v$ and $v_n' \to v'$ uniformly. Moreover, $\norm{v}$ is bounded from below. Hence, $\frac{v_n'}{v_n}$ converges uniformly to $\frac{v'}{v}$ on $C(R)$ and:
\begin{equation*}
0 = \frac{1}{2 i \pi}\int_C \frac{v_n'(z)}{v_n(z)}\dx z \xrightarrow[n \to +\infty]{} \frac{1}{2i \pi}\int_C \frac{v'(z)}{v(z)}\dx z >0.
\end{equation*}
This is absurd, which proves that the zeros of $v$ must be real and non-positive.
\end{proof}

\begin{lem}
\label{lem Ap positive}
For all $p \geq 1$, $A_p >0$.
\end{lem}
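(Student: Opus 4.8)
The plan is simply to combine the three preceding lemmas. By Lemma~\ref{lem expression Ap} the statement reduces to showing that the numbers $(r_k)_{k \geq 1}$ appearing in the partial fraction expansion~\eqref{eq continued fraction 3} are positive real numbers. Indeed, $A_1 = 1 > 0$ directly, and for $p \geq 2$ the formula
\begin{equation*}
A_p = \frac{4^{p-1}}{p}\sum_{k \geq 1} s_k r_k^{p-1}
\end{equation*}
exhibits $A_p$ as a sum of positive terms as soon as each $r_k > 0$, since the $(s_k)_{k \geq 1}$ are positive; and this sum is strictly positive because the index set $\{k \geq 1\}$ is non-empty (in fact infinite).

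To get the positivity of the $r_k$, I would recall, from the discussion following Eq.~\eqref{eq continued fraction 3}, that the zeros of the entire function $v$ lying in $\C^*$ are exactly the points $-\frac{1}{r_k}$ for $k \geq 1$. By Lemma~\ref{lem zeros v}, every zero of $v$ is real and non-positive; since $-\frac{1}{r_k} \neq 0$, it is in fact strictly negative, so $r_k > 0$ for all $k \geq 1$. Plugging this back into Lemma~\ref{lem expression Ap} yields $A_p > 0$ for every $p \geq 1$.

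I do not expect any real obstacle here: the substantive work has already been done in Lemmas~\ref{lem expression Ap} and~\ref{lem zeros v} (the partial fraction / S-fraction analysis and the Hurwitz-type contour argument on the approximants $v_n \to v$), and the present lemma is the assembly step. The only point worth a line of justification is that the index set $\{k \geq 1\}$ is non-empty — equivalently that $I_0$ has at least one zero — which is classical, the zeros of $I_0$ being the purely imaginary numbers $i x_j$ with $x_j$ ranging over the positive zeros of the Bessel function $J_0$; hence the sum defining $A_p$ for $p \geq 2$ is not vacuously zero.
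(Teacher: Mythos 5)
Your proof is correct and follows essentially the same route as the paper: reduce via Lemma~\ref{lem expression Ap} to positivity of the $r_k$, and obtain that positivity from Lemma~\ref{lem zeros v} applied to the zeros $-1/r_k$ of $v$. The extra observation about non-emptiness of the index set (the zeros of $I_0$ being $ix_j$ with $J_0(x_j)=0$) is a nice sanity check that the paper leaves implicit in the setup of Eq.~\eqref{eq continued fraction 3}, where $(r_k)_{k \geq 1}$ is already stated to be an infinite sequence tending to $0$.
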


\begin{proof}
Let $(s_k)_{k \geq 1}$ and $(r_k)_{k \geq 1}$ be the sequences appearing in Eq.~\eqref{eq continued fraction 3}. For all $k \geq 1$, we have $v(-\frac{1}{r_k})=0$, hence $-\frac{1}{r_k}$ is negative by Lem.~\ref{lem zeros v}. Thus, both $(s_k)_{k \geq 1}$ and $(r_k)_{k \geq 1}$ are sequences of positive numbers. The result follows from the expression of the $(A_p)_{p \geq 1}$ derived in Lem.~\ref{lem expression Ap}.
\end{proof}


\bibliographystyle{amsplain}
\bibliography{RandomMomentsForTheNewEigenfunctionsOfPointScatterersOnRectangularFlatTori}

\end{document}